\definecolor{cite_blue}{rgb}{0.0, 0.1, 0.3}
\definecolor{BrickRed}{rgb}{0.8, 0.25, 0.33}
    \newtcbox{\feedback}{nobeforeafter,colframe=black,colback=white,boxrule=0.5pt,arc=2pt,
      boxsep=0pt,left=2pt,right=2pt,top=2pt,bottom=2pt,tcbox raise base}
    \newtheorem{theorem}{Theorem}[section]
    \newtheorem{lemma}{Lemma}[section]
    \newtheorem{corollary}{Corollary}[section]
    \theoremstyle{definition}
    \newtheorem{remark}{Remark}[section]
\newlength\ubwidth
\newenvironment{namedassumption}[1] {
    
    \innercustomass 
} {\endinnercustomass}
\numberwithin{equation}{section}
\renewcommand{\P}{\mathop{}\!\mathbb{P}}
\newcommand{\E}{\mathop{}\!\mathbb{E}}
\newcommand{\Var}{\mathop{}\!\textnormal{Var}}
\newcommand{\bracks}[1]{\left[#1\right]}
\newcommand{\R}{\mathbb{R}}
\newcommand{\gc}{g_{\text{c}}}
\newcommand*{\indep}{%
  \mathbin{%
    \mathpalette{\@indep}{}%
  }%
}
\newcommand*{\nindep}{%
  \mathbin{%                   % The final symbol is a binary math operator
    %\mathpalette{\@indep}{\not}% \mathpalette helps for the adaptation
    \mathpalette{\@indep}{/}%
                               % of the symbol to the different math styles.
  }%
}
\newcommand*{\@indep}[2]{%
  % #1: math style
  % #2: empty or \not
  \sbox0{$#1\perp\m@th$}%        box 0 contains \perp symbol
  \sbox2{$#1=$}%                 box 2 for the height of =
  \sbox4{$#1\vcenter{}$}%        box 4 for the height of the math axis
  \rlap{\copy0}%                 first \perp
  \dimen@=\dimexpr\ht2-\ht4-.2pt\relax
      % The equals symbol is centered around the math axis.
      % The following equations are used to calculate the
      % right shift of the second \perp:
      % [1] ht(equals) - ht(math_axis) = line_width + 0.5 gap
      % [2] right_shift(second_perp) = line_width + gap
      % The line width is approximated by the default line width of 0.4pt
  \kern\dimen@
  \ifx\\#2\\%
  \else
    \hbox to \wd2{\hss$#1#2\m@th$\hss}%
    \kern-\wd2 %
  \fi
  \kern\dimen@
  \copy0 %                       second \perp
}
\newcommand*{\addFileDependency}[1]{% argument=file name and extension
  \typeout{(#1)}
  \@addtofilelist{#1}
  \IfFileExists{#1}{}{\typeout{No file #1.}}
}
\begin{document}
\begin{bibunit}

\title{Better Understanding Triple Differences Estimators\thanks{We thank participants from the 2023 SEA Conference, 2024 Midwest Econometrics Group Conference, 2025 Georgia Econometrics Workshop, and participants of the ``Difference-in-Differences'' Workshops from Causal Solutions in 2022 and 2023. This paper builds on and supersedes the material used for the 2022 Causal Solutions YouTube Lecture available \href{https://www.youtube.com/watch?v=LTuBEwASEJQ&t=3s}{here}.
A companion \texttt{R} package, \href{https://marcelortiz.com/triplediff/}{\texttt{triplediff}}, is freely available to automate all the DDD estimators proposed in this article.}}

\author{Marcelo Ortiz-Villavicencio \thanks{Department of Economics, Emory University. Email: marcelo.ortiz@emory.edu} \and Pedro H. C. Sant'Anna\thanks{Department of Economics, Emory University. Email: {pedro.santanna@emory.edu}}}
\date{\today}

\maketitle

\begin{abstract}
Triple Differences (DDD) designs are widely used in empirical work to relax parallel trends assumptions in Difference-in-Differences (DiD) settings. This paper highlights that common DDD implementations---such as taking the difference between two DiDs or applying three-way fixed effects regressions---are generally invalid when identification requires conditioning on covariates. In staggered adoption settings, the common DiD practice of pooling all not-yet-treated units as a comparison group can introduce additional bias, even when covariates are not required for identification. These insights challenge conventional empirical strategies and underscore the need for estimators tailored specifically to DDD structures. We develop regression adjustment, inverse probability weighting, and doubly robust estimators that remain valid under covariate-adjusted DDD parallel trends. For staggered designs, we demonstrate how to effectively utilize multiple comparison groups to obtain more informative inferences. Simulations and three empirical applications highlight bias reductions and precision gains relative to standard approaches. A companion \texttt{R} package is available.

\vspace{4ex}
\noindent \textbf{JEL: }C10; C14; C21; C23. \\
\noindent \textbf{Keywords:} Triple Differences; Difference-in-Differences; Difference-in-Difference-in-Differences; Parallel Trends; Doubly Robustness; Staggered Adoption.
\end{abstract}

\sloppy
\newpage

\section{Introduction}

Over the last few years, we have seen a big Difference-in-Differences (DiD) ``methodological revolution'' with multiple DiD estimators being proposed to address the interpretability shortcomings associated with using simple two-way fixed-effects specifications in the presence of treatment effect heterogeneity.\footnote{See, e.g., \citet{Roth2023a} and \citet{Baker_etal_2025_JEL} for overviews.} Although these modern DiD estimators can capture richer notions of heterogeneity, in practice, they rely on parallel trends (PT) assumptions, and a concern relates to how plausible these PT assumptions are. When such PT assumptions are not reasonable enough approximations of reality, one may doubt the conclusions of DiD studies \citep{rambachan_roth, Chiu_etal_2025}.

In some setups, however, it is possible to naturally relax such DiD-type PT assumptions and retain the simplicity and empirical appeal of DiD-type analysis. This is particularly the case when a unit needs to fulfill two criteria to be treated, e.g.,  it belongs to (i) a group (e.g., a state) in which the treatment is enabled, and (ii) a partition of the population that qualifies (or is \emph{eligible}) for treatment (e.g., women). Such setups are often referred to as Triple Differences (DDD), and allow for group-specific and partition-specific violations of parallel trends. Since its introduction by \citet{Gruber1994}, DDD has gained considerable popularity among empirical researchers. Some prominent recent DDD applications include  \citet{antwi_effects_2013}, \citet{Walker_2013_QJE}, \citet{garthwaite_public_2013},  \citet{Alsan_Wanamaker_2018_QJE}, \citet{Patnaik_2019}, \citet{hansen_national_2023}, and \citet{Bailey_Helgerman_Stuart_2024_QJE}---see \citet{Olden2022} for additional documentation of DDD applications, and Section \ref{application} for three DDD applications. Despite its empirical popularity, little attention has been devoted to better understanding the econometric foundations of DDD setups with covariates, multiple periods, and/or staggered treatment adoption. 
    
This article aims to improve our understanding of Triple Difference (DDD) designs.  Our main goal is to provide a set of clear, easy-to-use, and theoretically grounded tools that empirical researchers can utilize whenever they wish to explore DDD designs. To that end, we study identification, estimation, and inference procedures for DDD when covariates may be important for the reliability of the identification assumptions, multiple periods are available, and treatment adoption is potentially staggered over time. We tackle the DDD problem using causal inference first principles and uncover interesting results that challenge some conventional wisdom and common practices.

For instance, although DDD estimators can be understood as the difference between two DiD estimators in setups with two periods and no covariates \citep{Olden2022}, our results highlight that this is no longer the case when covariates are required to justify the plausibility of a DDD-type parallel trends assumption. As we illustrate through simulations, proceeding as if DDD were just the difference between two DiD estimators can lead to severely biased results. Such bias arises because this naive DDD strategy fails to integrate the covariate distribution over the correct reference group---the treated group. We show that it is straightforward to avoid these problems and propose regression adjustment, inverse probability weighting, and doubly robust DDD estimators. We also show that if one wants to cast DDD in terms of DiD, one would need three---and not two---DiD terms. Each of these DiD terms compares effectively treated units with a different type of untreated units, e.g., units in a treated state but ineligible for treatment, units that are eligible but are in an untreated state, or ineligible units in an untreated state. 

In setups with staggered treatment adoption, our results once again challenge the interpretation of DDD as the difference between two DiDs. In DiD with staggered treatment adoptions, it is now common to pool all not-yet-treated units at a time period and use that aggregate set of units as a valid comparison group.\footnote{See, e.g., \citet{Callaway2021, deChaisemartin2020, Wooldridge2021, Borusyaetal2024}.} Thus, one may think that a similar strategy should work with DDD. However, our results suggest that this is generally not the case and that pooling all not-yet-treated units and proceeding as in staggered DiD procedures can lead to biased estimators for average treatment effect parameters, even when covariates do not play a significant role. This arises because the proportion of units eligible for treatment may change across groups that enable treatment over time. As DDD allows for group-specific and partition-specific violations of DiD-type PT, these differential trends do not average out when pooling all not-yet-treated units, leading to potentially misleading estimates. We propose DDD estimators that bypass this drawback by using any specific not-yet-treated unit as a comparison group (e.g., the set of units in groups that never enabled treatment). As one can potentially use different not-yet-treated cohorts as comparison groups, we also discuss combining these to form more precise estimators. Our proposed DDD estimator that aggregates across different comparison groups can be understood as a two-step Generalized Method of Moments (GMM) procedure based on recentered influence functions. Importantly, our staggered DDD procedures can flexibly accommodate covariates using regression adjustment, inverse probability weighting, or doubly robust methods and can also be used to form event-study estimators that highlight how average treatment effects evolve with elapsed treatment time.

We illustrate our results through Monte Carlo simulations and three different empirical applications that address various DDD setups. Our Monte Carlo results highlight that ignoring the key takeaways in this paper and relying on overly rigid estimators can lead to biases and imprecise conclusions; our proposed DDD estimators bypass these limitations. In terms of empirical applications, we (a) revisit \cite{cai_insurance_2016} and analyze the effects of agricultural insurance programs on financial decisions in China; (b) build on \cite{carbon_pricing} and examine the impact of the emission trading scheme on carbon emissions in China; and (c) \cite{hansen_national_2023} and assess the impact of genetically modified crop adoption on countrywide yields. Compared to the three-way fixed effects estimators used by \cite{cai_insurance_2016}, our doubly robust DDD estimator yields substantial gains in precision, with their confidence intervals being up to 115\% wider than ours. In our application examining the effect of the emission trading scheme on carbon emissions with staggered adoption, our doubly robust DDD estimators yield a more modest and statistically insignificant effect. In contrast, the three-way fixed estimators indicate a statistically significant effect on the share of low-carbon patents. When we apply our staggered DDD tools to \cite{hansen_national_2023}'s data, we find that their conclusions are robust to dropping observations that are part of the never-enabling crop-country group, which is not the case when using more standard three-way fixed effects estimators. Taken together, these results highlight that our proposed tools can indeed lead to interesting new insights that are of practical interest.

\textbf{Related literature:} This article contributes to the rapidly expanding literature on DiD-related methods. In particular, we contribute to the scarce literature on DDD procedures. Our paper is related to \citet{Olden2022}, though we cover substantially more general DDD setups with (a) multiple periods, (b) staggered treatment adoption, and (c) covariates potentially playing an important role for the plausibility of the identification assumptions. In this sense, our paper can be understood as the DDD analog of \citet{Callaway2021}. However, and in sharp contrast with \citet{Callaway2021}, our DDD procedures cannot pool all not-yet-treated units as an aggregated comparison group, highlighting some interesting differences between staggered DiD and DDD designs. Our paper is also related to \cite{strezhnev2023}, who introduced a decomposition of the DDD estimators based on three-way fixed effects specifications, illustrating when and why it fails to recover an easily interpretable causal parameter of interest when treatment effects are heterogeneous. To some extent, \citet{strezhnev2023} can be understood as the analog of \citet{Goodman2021} to DDD setups. As such, our results complement \citet{strezhnev2023}, and since our estimators do not rely on a rigid three-way fixed effect specification, they circumvent the issues highlighted in his paper. Our paper is also related to \citet{Sloczyski2022,Sloczyski2024_IVLATE} in the sense that our proposed tools avoid issues related to potentially misleading weights related to model misspecifications.

In this paper, we use the term ``triple differences'' to describe designs under which units must satisfy two criteria to be (effectively) treated. However, we note that sometimes different researchers use the term triple differences more broadly, for instance, when they are interested in analyzing treatment effect heterogeneity across subgroups. In such cases, it is important to note that the underlying identification assumptions and the parameters of interest would differ from those studied in this paper; see \citet{Caron2025} for a recent example. Those procedures should be understood as a complement to the ones we discuss in this paper, as they can be used to answer different questions of interest.

\textbf{Organization of the paper:} The rest of the paper is organized as follows. In the next section, we present our framework. In Section \ref{sec:implication}, we challenge some standard empirical practices for DDD analyses in terms of interpreting them as a simple extension of DiD analysis, and we also highlight some important practical takeaway messages from our main results. Section \ref{sec:ddd-theory} introduces our formal identification, estimation, and inference results. Section \ref{simulations} presents a Monte Carlo study to demonstrate the finite sample properties of our estimator, while Section \ref{application} presents three empirical illustrations. Section \ref{conclusion} concludes. Detailed mathematical proofs and additional results can be found in the Supplemental Appendix. To ease the adoption of our proposed DDD tools, we provide an open-source \texttt{R} package, \texttt{triplediff}, which automates all the procedures described in this article.

\section{Framework}
\label{sec:framework}

We start our analysis by discussing the specifics of our DDD research design, including potential outcomes, parameters of interest, and identification assumptions. We consider a setup with $T$ time periods, $t=1,2,\dots, T$. Units are indexed by $i$, with $i=1,2,\dots, n$. We focus on setups where $n$ is much larger than $T$, as our inference procedures are asymptotically justified using the ``fixed-$T$, large-$n$'' panel data framework. 

Each unit may be exposed to a binary treatment in any time period $t>1$. Treatment is an absorbing state such that once a unit is treated, it remains treated for the remainder of the panel. Each unit $i$ belongs to a group (e.g., a state or a country) that enables treatment for the first time in period $g>1$. Let $S_i \in \mathcal{S} \subseteq \{2,...,T\} \cup \{\infty\}$ be a variable that indicates the first time the policy/treatment was enabled, with the notion that $S = \infty$ if the policy is not enabled by $t=T$. In addition, each unit belongs to a population partition that qualifies (or is eligible) for the treatment or not (e.g., being a woman, or an indicator for specific crops). We denote this variable by $Q_i$ with $Q_i=1$ if unit $i$ (eventually) qualifies for the treatment and $Q_i=0$ otherwise. For simplicity, we assume that this population partition that eventually qualifies for treatment is time-invariant. 

In our DDD setup, a unit $i$ is treated in period $t$ if $t \ge S_i$ and $Q_i=1$, i.e., if it belongs to a group that has already enabled treatment by period $t$ (i.e., $t \ge S_i$) and it qualifies for treatment (i.e., $Q_i=1$). With this notation that makes it clear that a unit $i$ is treated if it satisfies two criteria, let $D_{i,t} = 1\{t\ge S_i, Q_i = 1\}$ be an indicator for whether unit $i$ receives treatment in period $t$, and let $G_i = \min\{t : D_{i,t} = 1\}$ be the earliest period at which unit $i$ has received treatment. If $i$ is never treated during the sample, then $G_i = \infty$. Here, we have that $G_i = S_i$ if $Q_i = 1$ and $G_i = \infty$ if $Q_i = 0$.\footnote{Note that when all units are eligible for treatment, we have $G_i = S_i$, getting us back to a (staggered) DiD setup; see, e.g., \citet{Callaway2021} and \citet{Sun2021}.} Let $\mathcal{G}$ denote the support of $G_i$ and $\mathcal{G}_{\text{trt}} = \mathcal{G}\setminus \{\infty\}$. We assume that a group of ``never-enabled'' units always exists, i.e., $S_i = \infty$ for some units. In an application where all units belong to a group that eventually enables treatment, we remove all that data from all units from the time the last cohort enabled treatment onwards, i.e., we drop all observations from periods $t>\max S_i$, and retain the remaining data as the ``effective'' data to be used in our analysis, where the last-to-be-eligible group becomes the ``never-eligible'' group.\footnote{If needed, we also update the notion of support of all variables to reflect this change. In particular, $T$ here denotes the number of available periods in the subset of the data that we will use in our analysis.} Finally, we also assume that a vector of pre-treatment covariates $X_{i}$, whose support is denoted by $\mathcal{X}\subseteq \mathbb{R}^d$ is available.

Regarding potential outcomes, we adopt the potential outcome framework of \citet{Robins1986} with potential outcomes indexed by treatment sequences. Let $\mathbf{0}_s$ and $\mathbf{1}_s$ be $s$-dimensional vectors of zeros and ones, respectively, and denote the potential outcome for unit $i$ at time $t$ if unit $i$ is first treated at time $g$ by $Y_{i,t}(\mathbf{0}_{g-1},\mathbf{1}_{T-g+1})$, and denote by $Y_{i,t}(\mathbf{0}_{T})$ the outcome if untreated by time $t=T$. As we focus our attention on staggered treatment adoptions, we can simplify notation and index potential outcomes by the time treatment begins, $g$: $Y_{i,t}(g) = Y_{i,t}(\mathbf{0}_{g-1},\mathbf{1}_{T-g+1}) $ and use $Y_{i,t}(\infty) = Y_{i,t}(\mathbf{0}_{T})$ to denote never-treated potential outcomes.  In practice, though, we observe, 
\begin{equation}
\label{def:observed_outcome}
    Y_{i,t} = \sum_{g \in \mathcal{G}} 1\{G_i = g\} Y_{i,t}(g),
\end{equation}
where $1\{A\}$ represents the indicator function, which equals one if $A$ is true and zero otherwise. Additionally, we assume the observation of a random sample of $(Y_{t=1}, \dots, Y_{t=T}, X', G, S, Q)'$.
\begin{namedassumption}{S}[Random Sampling] \label{ass:sampling_panel} $\{(Y_{i,t=1},\dots,Y_{i,t=T}, X_i', G_i, S_i, Q_i)'\}_{i=1}^n$ is a random sample from $(Y_{t=1},\dots,Y_{t=T}, X', G,S,Q)'$.
\end{namedassumption}

\subsection{Parameters of interest}

In this paper, we aim to gain a deeper understanding of how average treatment effects vary across periods and different groups defined by the treatment starting period. More specifically, we want to make inferences on functionals of the group-time average treatment effects, $ATT(g,t)$'s, defined as
\begin{align}
ATT(g,t) \equiv \E[Y_{i,t}(g)-Y_{i,t}(\infty)|G_i=g] = \E[Y_{i,t}(g)-Y_{i,t}(\infty)|S_i=g, Q_i = 1],\label{eqn:att_gt}
\end{align}
By exploring that in our context $G_i=g$ if and only if $S_i=g$ and $Q_i = 1$, we have that $ATT(g,t) = \E[Y_{i,t}(g)-Y_{i,t}(\infty)|S_i=g, Q_i = 1]$. Note that $ATT(g,t)$ captures how the average treatment effects evolve over time for each treatment group $g$ \citep{Callaway2021}. As such, one can use $ATT(g,t)$ to construct group-$g$-specific event studies by analyzing how average treatment effects vary with elapsed treatment time $e=t-g$. 

In some setups with multiple groups $g$, researchers may want to summarize over the many $ATT(g,t)$'s into a more aggregate parameter. A natural summary parameter that still allows one to understand treatment effect dynamics with respect to elapsed treatment time is the aggregated event study parameter $ES(e)$, defined as
\begin{align}
ES(e) \equiv \E\big[ATT\left( G, G+e\right) \big| G+e \in [2,T]\big] = \sum_{g\in\mathcal{G}_{\text{trt}}} \P(G=g|G+e \in [2,T]) ATT(g,g+e).\label{eqn:ES}
\end{align}

One may also want to aggregate the event study coefficients further to recover a scalar summary measure. Let $\mathcal{E}$ denote the support of post-treatment event time $E=t-G$, $t\geq G$, and let $N_E$ denote its cardinality. Then, 
\begin{align}
ES_{\text{avg}} &\equiv \dfrac{1}{N_E}\sum_{e\in \mathcal{E}} ES(e)\label{eqn:overall_ATT},
\end{align}
provides a simple average of all post-treatment event study coefficients. Many other summary parameters are possible; see \citet{Callaway2021} for discussions of several alternatives.

\subsection{Identification assumptions}\label{sec:id_assumptions}
To identify the $ATT(g,t)$'s and their functionals $ES(e)$ and $ES_{\text{avg}}$, we impose the following assumptions.

\begin{namedassumption}{SO}[Strong Overlap]
    \label{ass:overlap_staggered}
For every $(g,q) \in \mathcal{S} \times \{0,1\}$ and for some $\epsilon>0$, $\mathbb{P}[S=g, Q=q |X] > \epsilon$ with probability one.
\end{namedassumption}
Assumption \ref{ass:overlap_staggered} is an overlap condition that ensures that for any value of $X$, there are units with any combination $(g,q) \in \mathcal{S}$ that have comparable $X$ values. Heuristically, this condition guarantees that we cannot perfectly predict which $(g,q)$-partition a unit belongs to using information from $X$. This assumption also rules out irregular identification \citep{Khan2010}.\footnote{As our focus is on $ATT(g,t)$-type parameters, it is possible to relax Assumption \ref{ass:overlap_staggered} to hold only over $X$ in the support of the covariates among the (eventually) treated units. To simplify the discussion, we abstract from these subtle points.}

We also impose the following no-anticipation assumptions.
\begin{namedassumption}{NA}[No-Anticipation]
\label{ass:anticipation}
   For every $g \in \mathcal{G}_{\text{trt}}$, and every pre-treatment period $t<g$, $\E[Y_{i,t}(g)|S=g, Q = 1,X] = \E[Y_{i,t}(\infty)|S=g, Q = 1,X]$ with probability one.
\end{namedassumption}
Assumption \ref{ass:anticipation} rules out anticipatory effects among treated units as in, e.g., \citet{Abbring2003}, \citet{Callaway2021}, and \citet{Sun2021}. This assumption is important as it allows us to consider observations in pre-treatment periods $t<g$ as effectively untreated. If units are expected to anticipate some treatments---for example, if treatment is announced in advance---it is important to adjust the definition of the treatment date to account for this; see \citet{Malani2015} for a discussion.

Next, we impose our final identification assumption that restricts the evolution of average untreated potential outcomes across groups.

\begin{namedassumption}{DDD-CPT}[DDD-Conditional Parallel Trends] \label{ass:PT-NYT}
For each $g \in \mathcal{G}_{\text{trt}}$,  $g' \in \mathcal{S}$ and time periods $t$ such that $t\ge g$ and $g'>\max\{g,t\}$, with probability one,
\begin{eqnarray*}
\mathbb{E}\left[Y_{t}(\infty) - Y_{t-1}(\infty)| S = g, Q=1, X\right] &-& \mathbb{E}\left[Y_{t}(\infty)- Y_{t-1}(\infty) | S = g, Q=0, X \right]  \\[-0.2cm]
&=&\\[-0.2cm]
\mathbb{E}\left[Y_{t}(\infty) - Y_{t-1}(\infty)|S = g', Q=1, X\right] &-& \mathbb{E}\left[Y_{t}(\infty)- Y_{t-1}(\infty) | S = g', Q=0, X \right].
\end{eqnarray*}
\end{namedassumption}

Assumption \ref{ass:PT-NYT} is a conditional parallel trends assumption for DDD setups that generalizes the unconditional DDD parallel trend assumption for the two-period setup of \citet{Olden2022} to setups with multiple periods, staggered treatment adoption, and when assumptions are only plausible after conditioning on $X$. Assumption \ref{ass:PT-NYT} can also be understood as an extension of the conditional PT assumption based on not-yet-treated units from the DiD setup of \citet{Callaway2021} to our DDD setup--- i.e., we can use any unit from groups that either never enabled treatment or those that will eventually enable treatment. Moreover, if covariates do not play any important identification role in the analysis, one can take $X=1$ for all units, so Assumption \ref{ass:PT-NYT} would hold unconditionally.

Several remarks about Assumption \ref{ass:PT-NYT} are worth making. First, if all units in a group $S$ are eligible for treatment, Assumption \ref{ass:PT-NYT} reduces to Assumption 5 of \citet{Callaway2021} under the no-anticipation condition in Assumption \ref{ass:anticipation}. However, this case is not appealing to us, as that would not qualify as a DDD design. Second, as Assumption \ref{ass:PT-NYT} only holds after conditioning on covariates, it does not restrict the evolution of untreated potential outcomes across covariate-strata, i.e., it allows for covariate-specific trends, which can be very important in applications. Third, and perhaps the most empirically relevant, Assumption \ref{ass:PT-NYT} 
does not impose DiD-type parallel trends among units with $S=g$---i.e., it does not impose that $\mathbb{E}\left[Y_{t}(\infty) - Y_{t-1}(\infty)| S = g, Q=1, X\right] = \mathbb{E}\left[Y_{t}(\infty)- Y_{t-1}(\infty) | S = g, Q=0, X \right]$---nor impose DiD-type parallel trends across treated groups---i.e., it does not impose that $\mathbb{E}\left[Y_{t}(\infty) - Y_{t-1}(\infty)| S = g, Q=1, X\right] = \mathbb{E}\left[Y_{t}(\infty) - Y_{t-1}(\infty)|S = g', Q=1, X\right]$. As such, Assumption \ref{ass:PT-NYT} allows for violations of traditional DiD-based PT, as provided that these violations are stable across groups. This observation is arguably what makes DDD appealing in settings where it can be applied.

\section{Implications for Empirical Practices}\label{sec:implication}

Before presenting our formal results on identification, estimation, and inference for average treatment effects in DDD designs, we challenge some standard empirical practices for DDD analyses and highlight some important practical takeaways from our paper. 

We first start with a simple setup involving only two periods, $t=1$ and $t=2$, and two eligibility groups, $S_i = 2$ (who enabled treatment in period 2) and $S_i = \infty$ (who have not enabled treatment by period two). As before, units are either eligible ($Q_i =1$) or ineligible ($Q_i = 0$) for the treatment, and we let $D_{i,t}$ be a treatment indicator for unit $i$ in time period $t$, i.e., $D_{i,t} = 1\{t\ge S_i, Q_i = 1\}$. Since there are only two eligibility groups and two time periods, the relevant group-time ATT in such a scenario is $ATT(2,2)$. As discussed in \citet{Olden2022}, when covariates are not important for the analysis, one can use ordinary least squares (OLS) based on the following three-way fixed effects linear regression specification to recover the $ATT(2,2)$:
  \begin{align}
    Y_{i,t}=& \gamma_i + \gamma_{s,t} + \gamma_{q,t} + \beta_{\text{3wfe}} D_{i,t} + \varepsilon _{i,t}, \label{eqn:3WFE_no_X}
    \end{align}
where $\gamma_i$ are unit fixed effects, $\gamma_{s,t}$ and $\gamma_{q,t}$ are enabled-group-by-time  and qualified-group-by-time fixed effects, and $\beta_{\text{3wfe}}$ is the parameter of interest. Indeed, in this particular setup, under Assumptions \ref{ass:sampling_panel}, \ref{ass:overlap_staggered}, \ref{ass:anticipation}, and \ref{ass:PT-NYT} with $X=1$ a.s.,  it is straightforward to show that
 \begin{align}
  \beta_{\text{3wfe}}  =&~\Bigg[\underbrace{\bigg(\mathbb{E}\left[Y_{2} - Y_{1}|S = 2, Q=1\right]\bigg)-\bigg( \mathbb{E}\left[ Y_{2} - Y_{1} | S = 2, Q=0 \right] \bigg)}_{\text{DiD estimand among } S=2}\Bigg]  \nonumber\\
&~~~~- \Bigg[\underbrace{\bigg(\mathbb{E}\left[ Y_{2} - Y_{1}| S = \infty, Q=1 \right] \bigg)\Bigg.-\Bigg.\bigg(\mathbb{E}\left[ Y_{2} - Y_{1}|S=\infty, Q=0 \right]\bigg)}_{\text{DiD estimand among } S=\infty}\Bigg] \label{eqn:3wtfe_att22}\\
=& ~ ATT(2,2) \nonumber.
    \end{align}

The observation that $\beta_{\text{3wfe}} = ATT(2,2)$ in this simple setup has two implications: (i) one can use a simple three-way fixed effects (3WFE) regression specification and use OLS to estimate $ATT(2,2)$ in the DDD design, and (ii) DDD estimates can be understood as the difference between two DiD estimates \citep{Olden2022}. Based on these, one may be tempted to extrapolate these claims to more general setups. In what follows, we highlight that, unfortunately, this is not warranted and that proceeding in this manner can lead to non-negligible biases. The solution to these issues is relatively simple and involves adopting a ``forward-engineering'' approach to DDD setups \citep{Baker_etal_2025_JEL}, recognizing its specific characteristics. See also \citet{Mogstad_Torgovitsky_2024} for a related discussion in an instrumental variable context.

\subsection{DDD setup with two periods, with covariates being important}\label{sec:ddd_2periods_x}
In this section, we illustrate the challenges of leveraging simple regression-based and DiD tools to DDD setups using simple simulations in a setup where covariates are important for identification, i.e., when Assumption \ref{ass:PT-NYT} is satisfied only after accounting for covariates. We consider a design with four different time-invariant, unit-specific covariates, $X_i = (X_{i,1}, X_{i,2}, X_{i,3}, X_{i,4})'$, and two periods and two treatment-enabling groups. The true $ATT(2,2)$ in our simulations is zero. To ease the exposition, we abstract from further details about the DGP and refer the reader to Section \ref{sec:sims_2_periods} and Supplemental Appendix \ref{appendix:sims_2_periods} for a more detailed discussion.

Based on the discussion on DDD without covariates above, it is natural to consider three alternative ways to incorporate covariates in the analysis. The first approach would be to ``extrapolate'' from \eqref{eqn:3WFE_no_X}, add the interactions of the time-invariant covariates with post-treatment dummies, 
  \begin{align}
    Y_{i,t}=& \gamma_i + \gamma_{s,t} + \gamma_{q,t} + \tilde{\beta}_{\text{3wfe}} D_{i,t} + (X_i 1_{ \{t=2\}})'\theta + u_{i,t}, \label{eqn:3WFE_with_X}
    \end{align}
and interpret the OLS estimates of $\tilde{\beta}_{\text{3wfe}}$ as estimates of $ATT(2,2)$. The second natural way to proceed is similar, but it would leverage the Mundlak device and replace unit fixed effects in \eqref{eqn:3WFE_no_X} with $S$-by-$Q$ fixed effects, add covariates linearly, 
 \begin{align}
    Y_{i,t}=& \gamma_{s,q} + \gamma_{s,t} + \gamma_{q,t} + \check{\beta}_{\text{3wfe}} D_{i,t} + X_i'\theta + e_{i,t}, \label{eqn:3WFE_with_X_Mundlak}
    \end{align}
and interpret the OLS estimates of $\check{\beta}_{\text{3wfe}}$ as estimates of $ATT(2,2)$. Both strategies leverage a presumption that it is sufficient to add covariates linearly into the 3WFE regression specification \eqref{eqn:3WFE_no_X} to allow for covariate-specific trends. A third strategy, which is also a priori intuitive, presumes that we can write DDD estimates as the difference between two DiD estimates: one DiD using the subset with $S=2$ and considering units treated if $Q=1$, and another DiD using the subset with $S=\infty$ and considering units treated if $Q=1$. Here, one could consider different estimation strategies. We focus on the doubly robust (DR) DiD estimators proposed by \citet{SantAnna2020}, as a DR estimator is generally more resilient to model misspecifications.

\begin{figure}[!htp!]
	\begin{center}
		\begin{subfigure}[t]{0.48\textwidth}
			\centering
			\includegraphics[width = \textwidth]{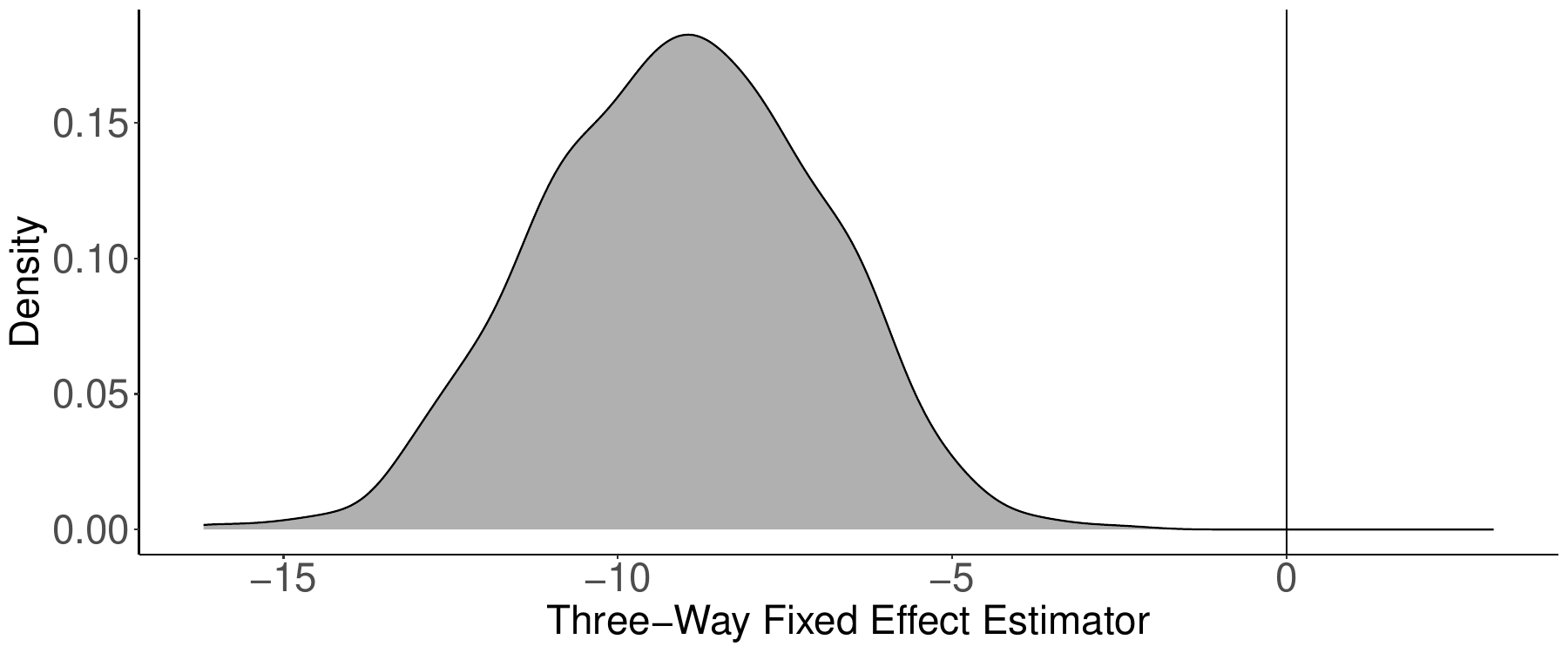}
			\caption{3WFE with covariates interacted with post}
			\label{fig:twfe_x_post}
		\end{subfigure}
		\hfill
		\begin{subfigure}[t]{0.48\textwidth}
		  \begin{center}
            \includegraphics[width = \textwidth]{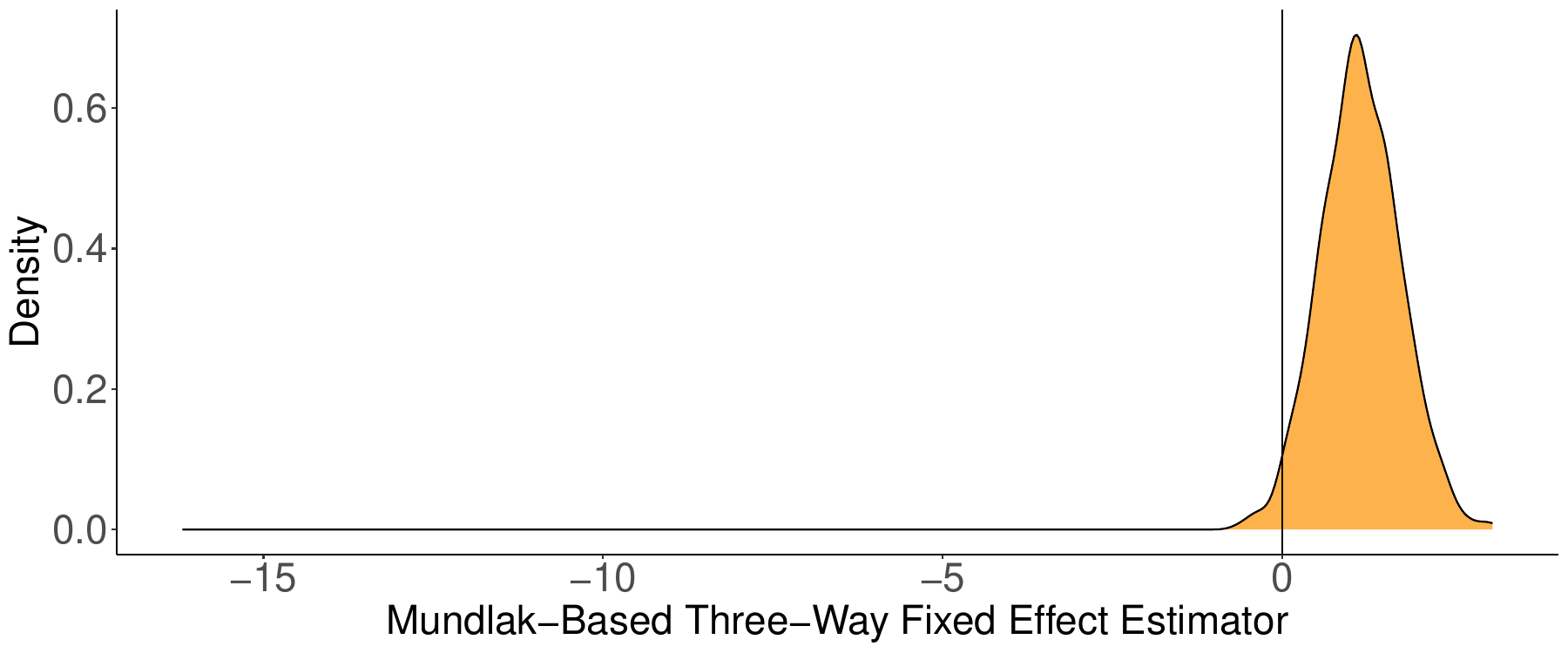}
            \caption{Mundlak-based 3WFE with covariates}
            \label{fig:Mudlak}
            \end{center}
		\end{subfigure}
        \begin{subfigure}[t]{0.48\textwidth}
			\centering
			\includegraphics[width = \textwidth]{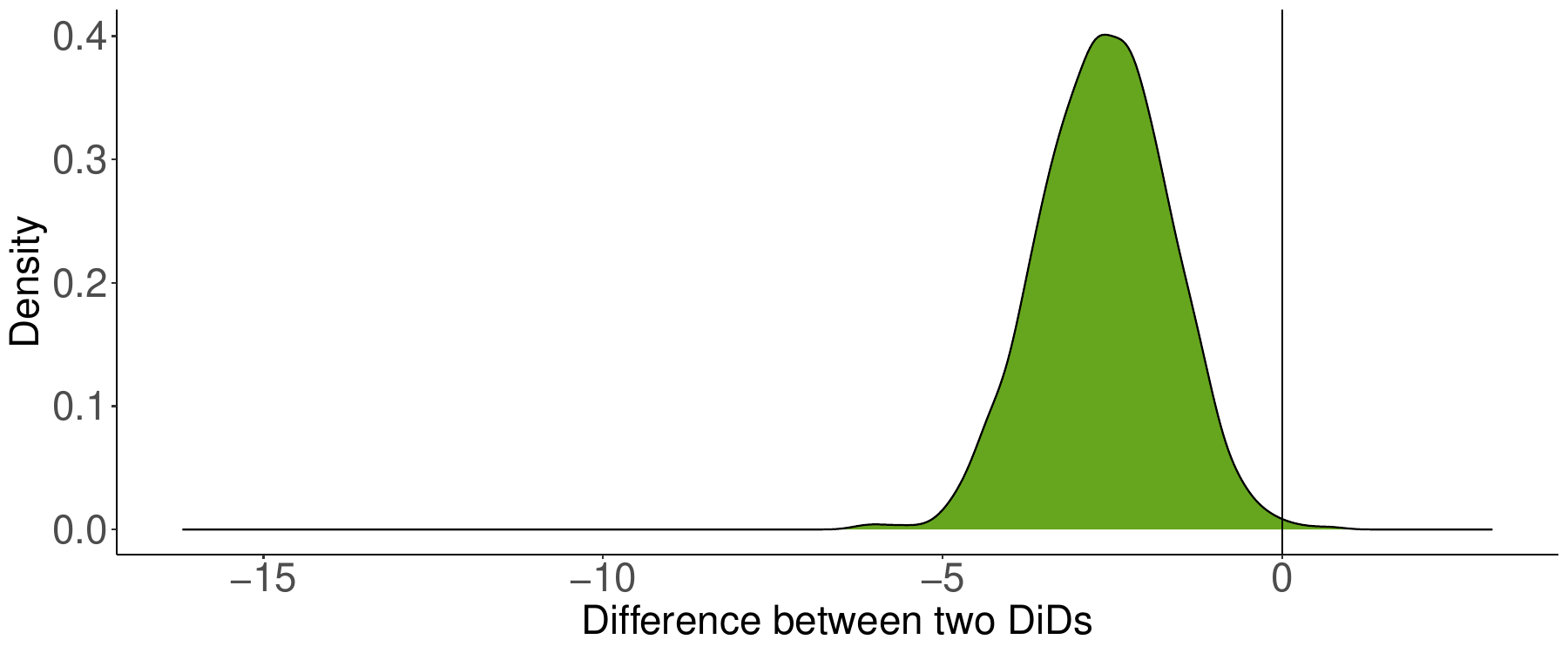}
			\caption{Difference between two Doubly Robust DiDs}
			\label{fig:dif_DRDID}
		\end{subfigure}
		\hfill
		\begin{subfigure}[t]{0.48\textwidth}
		  \begin{center}
            \includegraphics[width = \textwidth]{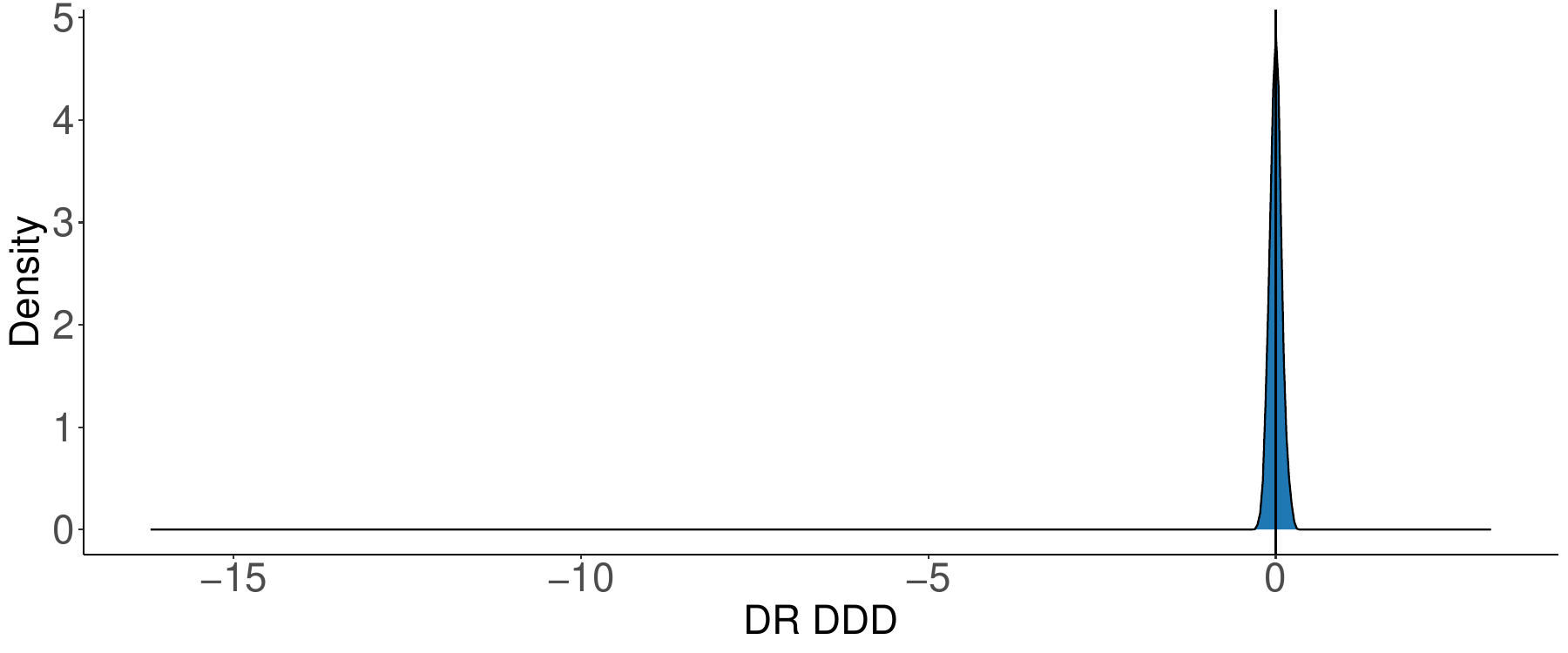}
            \caption{Doubly Robust DDD }
            \label{fig:DRDDD}
            \end{center}
		\end{subfigure}
		\caption{Density of different DDD estimates for ATT(2,2): two-period setup with covariates}
		\label{fig:Sims_2_periods}
	\end{center}
        \justifying
	\vspace{-.2cm}\scriptsize{Notes: Simulation designs based on DGP 1 described in Section \ref{sec:sims_2_periods} and Supplemental Appendix \ref{appendix:sims_2_periods}, with $n=5,000$ and $1,000$ Monte Carlo repetitions. True $ATT(2,2)$ is zero and is indicated in the solid vertical line in all panels. Panel (a) displays the density of OLS estimates of $\tilde{\beta}_{\text{3wfe}}$ based on \eqref{eqn:3WFE_with_X}. Panel (b) displays the density of OLS estimates of $\check{\beta}_{\text{3wfe}}$ based on \eqref{eqn:3WFE_with_X_Mundlak}. Panel (c) displays the density of the DDD estimates based on the difference between two doubly robust DiD estimators \citep{SantAnna2020}. Panel (d) displays the density of the estimates based on our proposed doubly robust DDD estimator described in \eqref{eqn:DRDDD_2periods}. All densities are computed across all simulation draws. Panels have the same x-axis range but different y-axis.
 }
\end{figure}

To check if such alternative strategies recover the $ATT(2,2)$, we draw $5,000$ units in each simulation draw, compute estimates using these three alternative estimators, and repeat this $1,000$ times---we defer all details of the data generating process to Section \ref{sec:sims_2_periods} and Supplemental Appendix \ref{appendix:sims_2_periods}.  Panels (a) and (b) from Figure \ref{fig:Sims_2_periods} display the density of OLS estimates for the $D_{i,t}$ coefficient in the regression specifications \eqref{eqn:3WFE_with_X} and \eqref{eqn:3WFE_with_X_Mundlak}, while Panel (c) displays the density of the DDD estimates based on the difference between two \citet{SantAnna2020} DR DiD estimators. These three panels make it clear that when covariates are necessary to justify the plausibility of the DDD research design, using any of these three procedures can lead to substantial biases and harm policy recommendations and evaluations. In other words, these results highlight that traditional 3WFE linear regression specifications are ``too rigid'' to be reliable for DDD analysis. They also highlight that, in general, one should not claim that DDD is the difference between two DiD procedures.

A natural question that then arises is: What should we do instead? As we discuss in Section \ref{sec:ddd-theory}, one can form regression adjustment, inverse probability weighting, and doubly robust DDD estimators that do not suffer from the shortcomings highlighted in Panels (a) - (c) in Figure \ref{fig:Sims_2_periods}. Among these, we generally favor the DR DDD estimator as it is more resilient against model misspecifications than the other alternatives. To form the DR DDD estimator for $ATT(2,2)$, we need to estimates for the outcome regression models $m_{Y_2-Y_1}^{S=g,Q=q}(X) \equiv \mathbb{E}\left[Y_2 - Y_1|S=g,Q=q ,X\right]$, and for the generalized propensity score model $p^{S=g,Q=q}(X) \equiv \mathbb{P}[S=g, Q=q | X]$. Let $\widehat{m}_{Y_1-Y_0}^{S=g,Q=q}(X)$ and $\widehat{p}^{S=g,Q=q}(X)$ be working models for these---e.g, a linear regression model and a multinomial logistic linear model, though much richer, potentially machine-learning based estimators can also be used \citep{Ahrens2025_DML_JEL}. Based on these estimates, we propose the following DR DDD estimator for the $ATT(2,2)$:
\begin{small}
\begin{align}\label{eqn:DRDDD_2periods}
\widehat{ATT}_{\text{dr}}(2,2) =&~~ \mathbb{E}_n\left[ \left(\widehat{w}^{S=2,Q=1}_{\text{trt}}(S,Q) - \widehat{w}^{S=2,Q=0}_{\text{comp}}(S,Q,X) \right)\left(Y_2- Y_1 - \widehat{m}_{Y_2-Y_1}^{S=2,Q=0}(X) \right)\right]\nonumber \\
& +\mathbb{E}_n\left[ \left(\widehat{w}^{S=2,Q=1}_{\text{trt}}(S,Q) - \widehat{w}^{S=\infty,Q=1}_{\text{comp}}(S,Q,X) \right)\left(Y_2- Y_1 -  \widehat{m}_{Y_2-Y_1}^{S=\infty,Q=1}(X)  \right)\right] \\
&- \mathbb{E}_n\left[ \left(\widehat{w}^{S=2,Q=1}_{\text{trt}}(S,Q) -\widehat{w}^{S=\infty,Q=0}_{\text{comp}}(S,Q,X)\right) \left(Y_2- Y_1 - \widehat{m}_{Y_2-Y_1}^{S=\infty,Q=0}(X)\right)\right] \nonumber,
\end{align}
\end{small}
where $\mathbb{E}_n [A] = n^{-1}\sum_{i=1}^n A_i$ denotes the sample mean, and the estimated weights $\widehat{w}$ are given by
\begin{small}
\begin{align*}
    \widehat{w}^{S=2,Q=1}_{\text{trt}}(S, Q)\equiv\dfrac{1{\{S=2,Q=1\}} }{\mathbb{E}_n[1{\{S=2,Q=1\}}]}, \quad \widehat{w}^{S=g, Q=q}_{\text{comp}}(S, Q, X) \equiv \dfrac{\dfrac{1{\{S=g,Q=q\}} \cdot \widehat{p}^{S=2,Q=1}(X) }{\widehat{p}^{S=g,Q=q}(X)}} {\mathbb{E}_n\bracks{\dfrac{1{\{S=g,Q=q\}} \cdot \widehat{p}^{S=2,Q=1}(X) }{\widehat{p}^{S=g,Q=q}(X)}}}.
\end{align*}
\end{small}
Interestingly, it is worth mentioning that although the DR DDD estimator in \eqref{eqn:DRDDD_2periods} cannot be expressed as the difference between two DR DiD estimators, it is a function of \emph{three} DR DiD estimators, each one using a particular subset of the untreated units as a comparison group.

For comparisons, we report in Panel (d) of Figure \ref{fig:Sims_2_periods} the density of the estimates using the DR DDD estimates based on \eqref{eqn:DRDDD_2periods}. Our proposed DR DDD estimator not only mitigates the biases associated with other estimation strategies but also yields substantially more precise estimates. All in all, the results in Figure \ref{fig:Sims_2_periods} highlight that common DDD practices can lead to misleading conclusions. However, it is straightforward to bypass these limitations by adopting our DR DDD estimators.

\subsection{DDD setups with variation in treatment timing}\label{sec:ddd_stag_noX}

The practical challenges of estimating average treatment effects in DDD setups are not confined to the presence of covariates. Even in designs without covariates, the use of too-rigid 3WFE regression specifications like \eqref{eqn:3WFE_no_X} can lead to misleading estimates when there is variation in treatment timing across groups \citep{strezhnev2023}. In such cases, new identification and estimation concerns emerge that the recent DiD literature does not address. In particular, in this section, we highlight that, unlike in staggered DiD procedures like \citet{Callaway2021}, \citet{Borusyaetal2024}, and \citet{Wooldridge2021}\footnote{See also \citet{deChaisemartin2020, deChaisemartin2023b-intertemporal-treatments} for related procedures that also pool not-yet-treated units. Their estimators allow for treatment turning on and off. However, they impose additional assumptions that restrict how past treatments affect future outcomes. We do not consider these setups in this paper.}, pooling all not-yet-treated units and using them as a comparison group does not respect the triple-differences identification assumptions and, as such, can lead to biased estimates for the parameters of interest. We also discuss straightforward and computationally simple estimators that bypass these problems. Throughout this section, we assume that all identification assumptions discussed in Section \ref{sec:id_assumptions} hold without covariates, i.e., by taking $X=1$ almost surely. Our methods naturally extend to setups where covariates are necessary for identification, as discussed in Section~\ref{sec:ddd-theory}. There, we also explore extensions to event-study aggregations and treatment effect heterogeneity across groups and time.

To build intuition, we begin by noting that the way the DiD literature has addressed the shortcomings of using regression specifications akin to \eqref{eqn:3WFE_no_X} to infer overall average treatment effects is to decompose the problem into a series of $2$-period $2$-group ($2\times 2$) DiDs; for an overview, see \citet{Roth2023a} and \citet{Baker_etal_2025_JEL}. A popular strategy involves using the units not yet treated by period $t$ as a comparison group when estimating $ATT(g,t)$ (\citealp{Callaway2021}, \citealp{Borusyaetal2024}, \citealp{Wooldridge2021}). It is thus intuitive and natural to build on these DiD papers, \citet{Olden2022}'s DDD procedure, and the linear regression specification \eqref{eqn:3wtfe_att22}, and attempt to estimate $ATT(g,t)$ in a DDD setup using
\begin{small}
 \begin{align}
  \widehat{ATT}_{\text{cs-nyt}}(g,t)  =&~\Bigg[\bigg(\mathbb{E}_n\left[Y_{t} - Y_{g-1}|S = g, Q=1\right]\bigg)-\bigg( \mathbb{E}_n\left[ Y_{t} - Y_{g-1} | S = g, Q=0 \right] \bigg)\Bigg]  \nonumber\\
&- \Bigg[\bigg(\mathbb{E}_n\left[ Y_{t} - Y_{g-1}| S >t, Q=1 \right] \bigg)\Bigg.-\Bigg.\bigg(\mathbb{E}_n\left[ Y_{t} - Y_{g-1}|S>t, Q=0 \right]\bigg)\Bigg] \label{eqn:att_gt_pooled}
    \end{align}
\end{small}
\noindent in any post-treatment periods $t\ge g$.\footnote{Since there are no covariates, we do not need to use three DiDs as we discussed in Section \ref{sec:ddd_2periods_x}. We use the notation $\text{cs}$ in \eqref{eqn:att_gt_pooled} to denote the estimator discussed above, which pinpoints the baseline period at period $g-1$.} The question now is whether \eqref{eqn:att_gt_pooled} indeed recovers $ATT(g,t)$'s under our identification assumptions.

To answer this practically relevant question, we conduct some Monte Carlo simulations for a setup with three time periods, $t=1,2,3$, three treatment-enabling groups, $S\in\{2,3,\infty\}$, and two eligibility groups $Q=1$ and $Q=0$. We focus on $ATT(2,2)$, i.e., the average treatment effect in period two of being treated in period two, among units treated in period two. The true $ATT(2,2)$ in our simulations is 10. We considered a setup with $n=5,000$ and conducted $1,000$ simulation draws. To ease the exposition, we abstract from further details about the DGP and refer the reader to Section \ref{sims_staggered} and Supplemental Appendix \ref{appendix:sims_staggered} for a more detailed discussion. 

Panel (a) from Figure \ref{fig:Sims_stagg} displays the density of the DDD estimates for $ATT(2,2)$ based on the estimator in \eqref{eqn:att_gt_pooled}. This result makes it clear that, in general, \eqref{eqn:att_gt_pooled} is not a valid estimator for the $ATT(2,2)$ in DDD setups, as it is systematically biased. In fact, in our simulations, \eqref{eqn:att_gt_pooled} always leads to a negative estimate while the true effect is positive. This bias arises because the DDD parallel trends assumption is more flexible than its DiD counterpart: it allows for treatment-enabling-groups- and partition-specific violations of DiD-type parallel trends. In particular, when the fraction of eligible units differs across treatment-enabling groups $S$, pooling not-yet-treated units may conflate trends across heterogeneous populations, violating the assumptions necessary to interpret differences as causal.

\begin{figure}[!htp!]
	\begin{center}
		\begin{subfigure}[t]{0.48\textwidth}
			\centering
			\includegraphics[width = \textwidth]{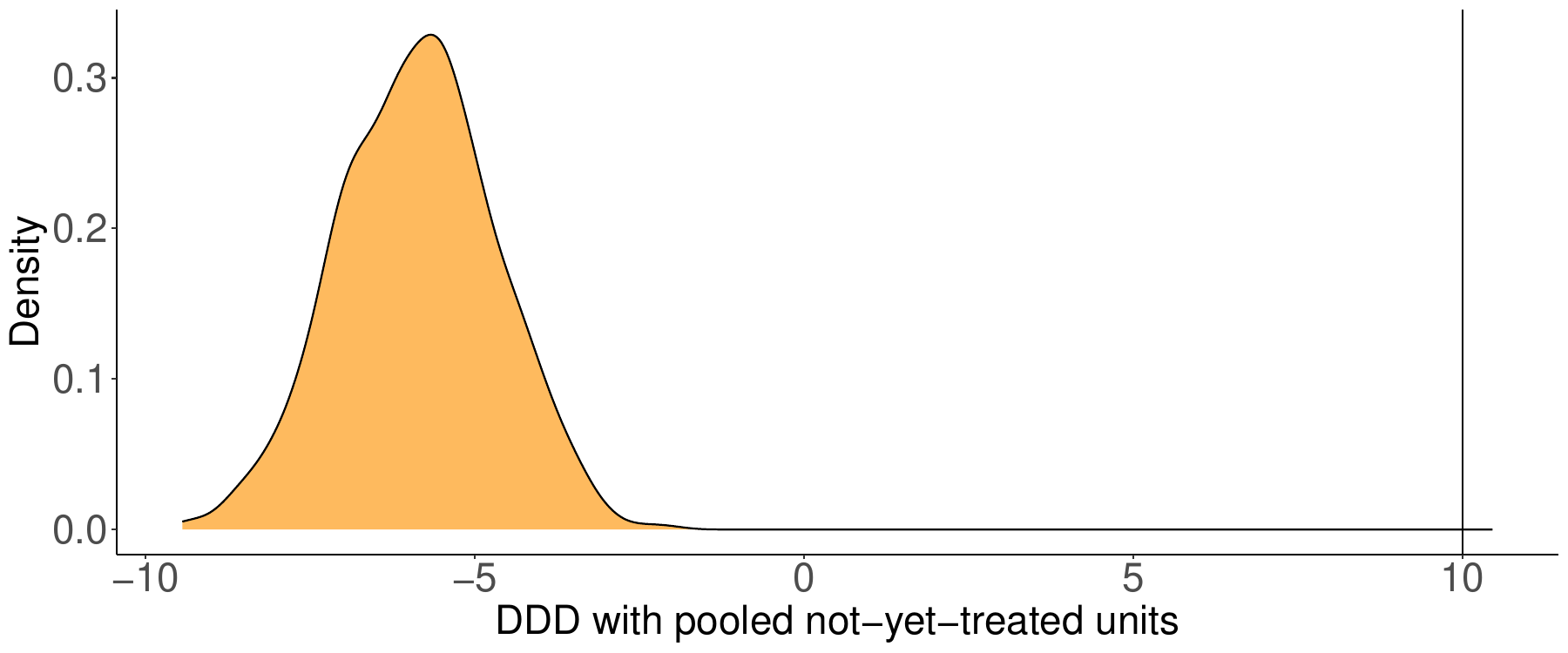}
			\caption{DDD using pooled not-yet-treated units }
			\label{fig:stag_ddd_pooled_nyt}
		\end{subfigure}
		\hfill
		\begin{subfigure}[t]{0.48\textwidth}
		  \begin{center}
            \includegraphics[width = \textwidth]{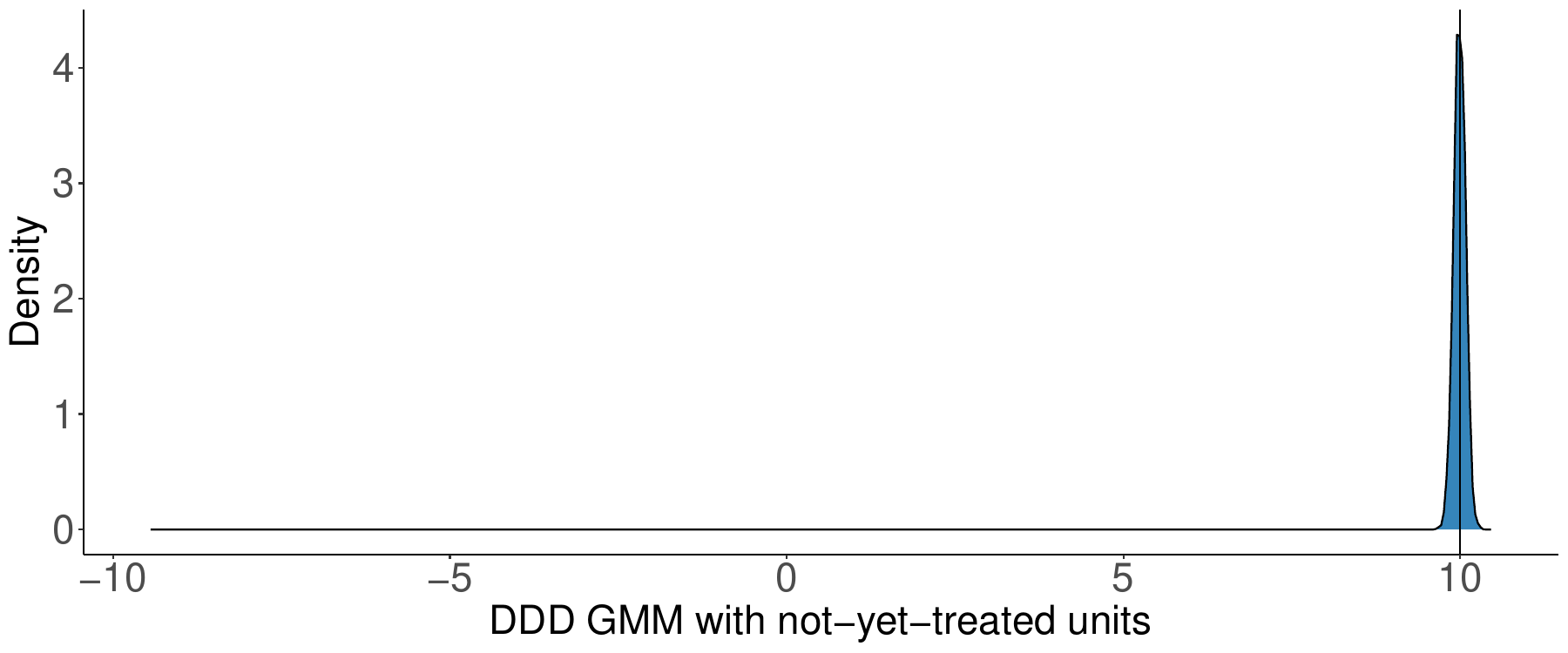}
            \caption{DDD GMM using all not-yet-treated units}
            \label{fig:stag_ddd_gmm}
            \end{center}
		\end{subfigure}
            \hfill
		\begin{subfigure}[t]{0.48\textwidth}
		  \begin{center}
            \includegraphics[width = \textwidth]{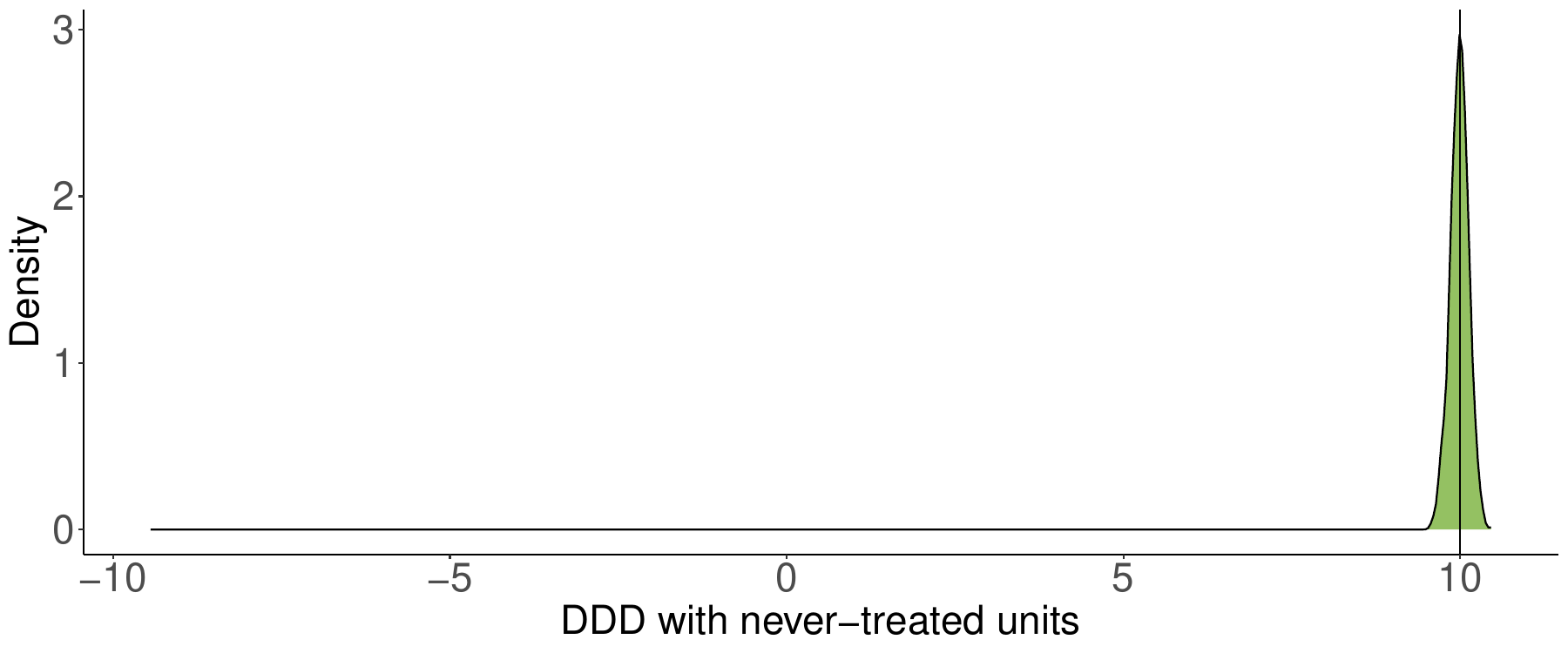}
            \caption{DDD using never-treated units}
            \label{fig:stag_ddd_never}
            \end{center}
		\end{subfigure}
            \begin{subfigure}[t]{0.48\textwidth}
		  \begin{center}
            \includegraphics[width = \textwidth]{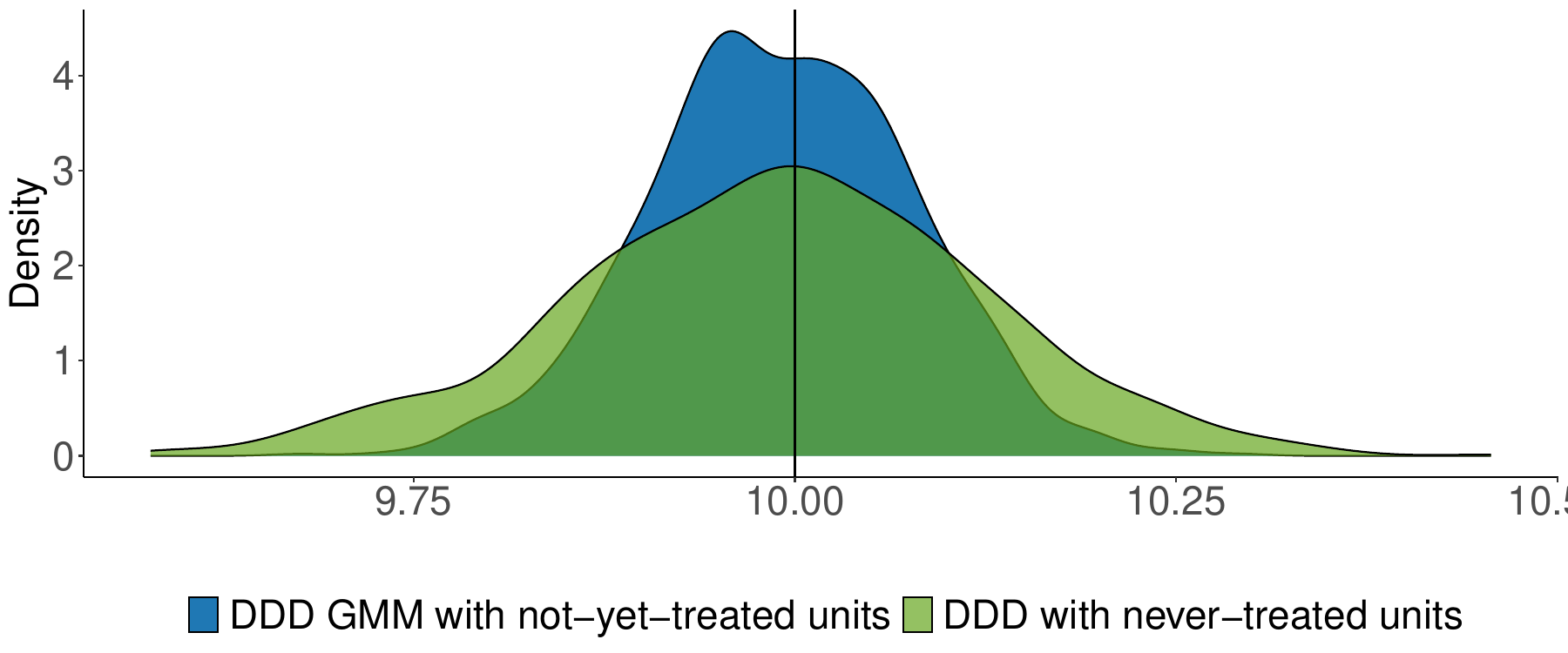}
            \caption{DDD using never-treated and GMM with not-yet-treated units}
            \label{fig:Sims_stagg_comp}
            \end{center}
		\end{subfigure}
		\caption{Density of different staggered DDD estimates for ATT(2,2), without covariates}
		\label{fig:Sims_stagg}
	\end{center}
        \justifying
	\vspace{-.2cm}\scriptsize{Notes: Simulation designs based on the design described in  Section \ref{sims_staggered} and Supplemental Appendix \ref{appendix:sims_staggered}, with $n=5,000$ and $1,000$ Monte Carlo repetitions. The true $ATT(2,2)$ is ten and is indicated in the solid vertical line in all panels. Panel (a) displays the density of DDD estimates that use the pooled not-yet-treated units as a comparison group as described in \eqref{eqn:att_gt_pooled}. Panel (b) displays the density of the estimates based on our proposed DDD GMM estimator that uses all not-yet-treated units as a comparison group described in \eqref{eqn:att_gt_gmm}. Panel (c) displays the density of the estimates based on our proposed DDD estimator that uses the never-treated units as a comparison group described in \eqref{eqn:att_gt_one-at-time} with $g_{\text{c}}=\infty$. Panel (d) compares DDD estimates using never-treated units (yellow curve) with GMM-based DDD using not-yet-treated (green curve), on the same scale. All densities are computed across all simulation draws. Panels(a)-(c) have the same x-axis range but different y-axis.
 }
\end{figure}

The key insight to address these problems is that we should be cautious when selecting the comparison group to estimate each $ATT(g,t)$. Such comparison groups must satisfy the DDD identification assumptions, which need to be verified on a group-by-group basis. Upon close inspection of Assumption \ref{ass:PT-NYT}, a natural solution is to avoid pooling across treatment-enabling groups and use one of them at a time. Doing so yields multiple valid comparisons for the same $(g,t)$ group, generating an over-identified model. More precisely, for each available not-yet-enabled group $g_{\text{c}}>t$, we can use the following estimator for $ATT(g,t)$, $t\ge g$:
\begin{small}
\begin{align}
  \widehat{ATT}_{g_{\text{c}}}(g,t)  =&~\Bigg[\bigg(\mathbb{E}_n\left[Y_{t} - Y_{g-1}|S = g, Q=1\right]\bigg)-\bigg( \mathbb{E}_n\left[ Y_{t} - Y_{g-1} | S = g, Q=0 \right] \bigg)\Bigg]  \nonumber\\
&- \Bigg[\bigg(\mathbb{E}_n\left[ Y_{t} - Y_{g-1}| S = g_{\text{c}}, Q=1 \right] \bigg)\Bigg.-\Bigg.\bigg(\mathbb{E}_n\left[ Y_{t} - Y_{g-1}|S=g_{\text{c}}, Q=0 \right]\bigg)\Bigg]. \label{eqn:att_gt_one-at-time}
    \end{align}   
\end{small}
Note that when $g_{\text{c}}=\infty$, \eqref{eqn:att_gt_one-at-time} uses the set of units that never enabled treatment $S=\infty$ as the comparison group. However, one is not restricted to this unique comparison group. In the context of our simulation, one can also use the units that enabled treatment in period three to learn about $ATT(2,2)$. In this sense, instead of choosing which comparison group to use, we propose combining all available options and forming a more precise DDD estimator for the $ATT(g,t)$s. More concretely, we propose using 
\begin{align}
  \widehat{ATT}_{\text{gmm}}(g,t)  =  \dfrac{\mathbf{1}' \widehat{\Omega}^{-1}}{\mathbf{1}' \widehat{\Omega}_{g,t}^{-1}\mathbf{1}}\widehat{ATT}_{\text{dr}}(g,t),\label{eqn:att_gt_gmm}
    \end{align}
where $\widehat{ATT}_{\text{dr}}(g,t)$ is the $k_{g,t}\times 1$-dimensional vector of all possible (non-collinear) estimators for $ATT(g,t)$ that uses a valid comparison group $g_{\text{c}}>t$, $\widehat{\Omega}_{g,t}$ is a consistent estimator of their variance-covariance matrix, and $\mathbf{1}$ is a ($k_{g,t}\times 1$-dimensional) vector of ones. We show that $ \widehat{ATT}_{\text{gmm}}(g,t)$ has a GMM interpretation based on re-centered influence functions in Remark \ref{rem:gmm}.

Panels (b) and (c) of Figure \ref{fig:Sims_stagg} display the density of the DDD estimates for $ATT(2,2)$ based on our GMM-based DDD estimator \eqref{eqn:att_gt_one-at-time} and our DDD estimator that only uses the never-treated as comparison group $g_{\text{c}}=\infty$ in \eqref{eqn:att_gt_one-at-time}, respectively. As it is easy to see, both estimators are correctly centered at the true $ATT(2,2)$. As Panels (a) - (c) in Figure \ref{fig:Sims_stagg} have the same scale, it is challenging to compare our DDD estimates that combine all not-yet-treated units with our DDD estimates that use never-treated units as comparison groups. In Figure \ref{fig:Sims_stagg}(d), we address this issue and display their densities based on the $1,000$ simulation draws. Overall, it is evident that utilizing all not-yet-treated units can yield substantial gains in precision. The results of our simulations indicate that confidence intervals based on $ \widehat{ATT}_{g_{\text{c}}=\infty}(g,t)$ are around 50\% wider than those based on $\widehat{ATT}_{\text{gmm}}(g,t)$, underscoring the appeal of using our GMM-based DDD estimator in terms of power.

Overall, the results in this section underscore the broader lesson that DDD designs with staggered adoption should not be treated as simple extensions of DiD methods. The interaction between timing, eligibility, and heterogeneity in group composition introduces complexities that necessitate more careful attention to the identification argument and the construction of the comparison group.

\section{The econometrics of DDD designs}\label{sec:ddd-theory}
In this section, we discuss the econometrics of DDD designs following the framework discussed in Section \ref{sec:framework}. We start by establishing nonparametric identification of the $ATT(g,t)$'s under the identification assumptions in Section \ref{sec:id_assumptions}. We then discuss estimation and inference procedures for $ATT(g,t)$'s and their event-study functional $ES(e)$ as defined in \eqref{eqn:ES}. Throughout this section, we focus on setups where covariates are important for identification, i.e., all the assumptions discussed in Section \ref{sec:id_assumptions} are only plausible after you condition on covariates. Results for unconditional DDD setups follow as special cases by taking all covariates $X=1$ for all units. We also focus on staggered treatment adoption DDD setups, as they nest DDD setups with a single treatment date.

\subsection{Identification}\label{sec:identification}

In this section, we establish the nonparametric identification for the $ATT(g,t)$'s in all post-treatment periods $t\ge g$ under Assumptions \ref{ass:sampling_panel}, \ref{ass:overlap_staggered}, \ref{ass:anticipation}, and \ref{ass:PT-NYT}. Furthermore, we demonstrate that one can utilize regression adjustment/outcome regression (RA), inverse probability weighting (IPW), or doubly robust estimands to recover the $ATT(g,t)$'s. We also demonstrate that one can potentially utilize different comparison groups, thereby opening the door to combining them for potential efficiency gains. 

Before formalizing our results, we need to introduce some additional notation. Let $m_{Y_t-Y_{t'}}^{S=g,Q=q}(X) \equiv \mathbb{E}\left[Y_t - Y_{t'}|S=g,Q=q ,X\right]$ denote the population regression function of changes in outcomes from period $t'$ to period $t$ given covariates $X$ among units that enabled treatment in period $g$ ($S=g$) that belongs to eligibility group $q$ ($Q=q$). Analogously, let $p^{S=g,Q=1}_{g', q'}(X) \equiv \mathbb{P}[S=g, Q=1 | X, (S=g, Q=1) \cup (S=g',Q=q')]$ denote the generalized propensity score. Note that $p^{S=g,Q=1}_{g', q'}(X)$ indicates the probability of a unit being observed in enabling group $S=g$ and being eligible for treatment ($Q=1)$, conditional on pre-treatment covariates $X$ and on either being in the $S=g$ group and being eligible for treatment, or being in the $S=g'$ group with eligibility to treatment $Q=q'$.\footnote{We use this notion of generalize propensity score as it allow us to focus on sequences of two-groups comparisons as in \citet{Lechner2002} and \citet{Callaway2021}. One can understand these generalized propensity scores as $p^{S=g,Q=1}_{g’, q’}(X) = \frac{p^{S=g,Q=1}(X)}{p^{S=g,Q=1}(X) + p^{S=g’,Q=q’}(X)}$ with $p^{S=g,Q=q}(X) \equiv \mathbb{P}[S=g, Q=q | X]$. We favor $p^{S=g,Q=1}_{g’, q’}(X)$ as this is how we implement these when constructing our DDD estimators.} 

For any $g_{\text{c}}\in \mathcal{S}$ such that $g_{\text{c}}>\max\{g,t\}$, and any post-treatment period $t\ge g$, let the doubly robust DDD estimand for the $ATT(g,t)$ be given by
\begin{small}
\begin{align}\label{eqn:DRDDD_stagg}
{ATT}_{\text{dr}, g_{\text{c}}}(g,t) =&~ \mathbb{E}\left[ \left({w}^{S=g,Q=1}_{\text{trt}}(S,Q) - {w}^{S=g,Q=1}_{g,0}(S,Q,X) \right)\left(Y_t- Y_{g-1} - {m}_{Y_t-Y_{g-1}}^{S=g,Q=0}(X) \right)\right]\nonumber \\
& +\mathbb{E}\left[ \left({w}^{S=g,Q=1}_{\text{trt}}(S,Q) - {w}^{S=g,Q=1}_{g_{\text{c}},1}(S,Q,X) \right)\left(Y_t- Y_{g-1} -  {m}_{Y_t-Y_{g-1}}^{S=g_{\text{c}},Q=1}(X)  \right)\right] \\
&- \mathbb{E}\left[ \left({w}^{S=g,Q=1}_{\text{trt}}(S,Q) -{w}^{S=g,Q=1}_{g_{\text{c}},0}(S,Q,X)\right) \left(Y_t- Y_{g-1} - {m}_{Y_t-Y_{g-1}}^{S=g_{\text{c}},Q=0}(X)\right)\right] \nonumber,
\end{align}
\end{small}
where the weights ${w}$ are given by
{\fontsize{10.4pt}{10pt}\selectfont
\vspace{-.5cm}
\begin{align}
    {w}^{S=g,Q=1}_{\text{trt}}(S, Q)\equiv\dfrac{1{\{S=g,Q=1\}} }{\mathbb{E}[1{\{S=g,Q=1\}}]}, \quad {w}^{S=g, Q=1}_{g',q'}(S, Q, X) \equiv \dfrac{\dfrac{1{\{S=g',Q=q'\}} \cdot {p}^{S=g,Q=1}_{g',q'}(X) }{1 - {p}^{S=g,Q=1}_{g',q'}(X)}} {\mathbb{E}\bracks{\dfrac{1{\{S=g',Q=q'\}} \cdot {p}^{S=g,Q=1}_{g',q'}(X) }{1 - {p}^{S=g,Q=q}_{g',q'}(X)}}}. \label{eqn:weights}
\end{align}
}
Analogously, let the RA DDD estimand for the $ATT(g,t)$ be given by 
\begin{small}
\begin{equation}\label{eqn:regDDD_stagg}
\resizebox{.93\textwidth}{!}{$
\begin{aligned}
{ATT}_{\text{ra}, g_{\text{c}}}(g,t) =\ \mathbb{E}\Big[{w}^{S=g,Q=1}_{\text{trt}}(S, Q)\Big(Y_t - Y_{g-1} - {m}_{Y_t-Y_{g-1}}^{S=g,Q=0}(X) - {m}_{Y_t-Y_{g-1}}^{S=g_{\text{c}},Q=1}(X) + {m}_{Y_t-Y_{g-1}}^{S=g_{\text{c}},Q=0}(X) \Big)\Big]
\end{aligned}
$}
\end{equation}
\end{small}
and the IPW estimand be
\begin{small}
\begin{align}\label{eqn:ipwDDD_stagg}
{ATT}_{\text{ipw}, g_{\text{c}}}(g,t) =& ~\mathbb{E}\left[\left({w}^{S=g,Q=1}_{\text{trt}}(S,Q) - {w}^{S=g,Q=1}_{g,0}(S,Q,X) \right) \left( Y_t-Y_{g-1}\right)\right]\nonumber\\ 
& - \mathbb{E}\left[\left({w}^{S=g,Q=1}_{g_{\text{c}},1}(S,Q,X) - {w}^{S=g,Q=1}_{g_{\text{c}},0}(S,Q,X)\right) \left( Y_t-Y_{g-1}\right)\right].
\end{align}
\end{small}

\begin{theorem}\label{thm:Identification}
    Let Assumptions \ref{ass:sampling_panel}, \ref{ass:overlap_staggered}, \ref{ass:anticipation}, and \ref{ass:PT-NYT} hold. Then, for all $g\in \mathcal{G}_{\text{trt}}$, $t\in \{2,\dots, T\}$, and $g_{\text{c}} \in \mathcal{S}$ such that $t\ge g$ and $g_{\text{c}}>t$,
    \begin{equation}
        ATT(g,t) = ATT_{\text{dr}, g_{\text{c}}}(g,t) = {ATT}_{\text{ra}, g_{\text{c}}}(g,t)={ATT}_{\text{ipw}, g_{\text{c}}}(g,t).
    \end{equation}
\end{theorem}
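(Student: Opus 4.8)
The plan is to reduce $ATT(g,t)$ to a counterfactual trend for the treated stratum, identify that trend from a telescoped version of Assumption \ref{ass:PT-NYT}, and then verify the three reweighted and regression-adjusted representations by change-of-measure algebra. First I would use the observed-outcome equation \eqref{def:observed_outcome} and the absorbing-treatment structure: units in $\{S=g,Q=1\}$ have $G=g$, so $Y_{g-1}=Y_{g-1}(g)$ and, by No-Anticipation (Assumption \ref{ass:anticipation}), $\E[Y_{g-1}|S=g,Q=1,X]=\E[Y_{g-1}(\infty)|S=g,Q=1,X]$, while $Y_t=Y_t(g)$ for $t\ge g$. Adding and subtracting $\E[Y_t(\infty)|S=g,Q=1,X]$ and averaging over the conditional law of $X$ given $\{S=g,Q=1\}$ yields
\[
ATT(g,t)=\E[Y_t-Y_{g-1}|S=g,Q=1]-\E[Y_t(\infty)-Y_{g-1}(\infty)|S=g,Q=1],
\]
so it remains only to identify $\E[Y_t(\infty)-Y_{g-1}(\infty)|S=g,Q=1,X]$.

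Next I would pin that trend down using Assumption \ref{ass:PT-NYT}. Writing $Y_t(\infty)-Y_{g-1}(\infty)=\sum_{s=g}^{t}\big(Y_s(\infty)-Y_{s-1}(\infty)\big)$ and applying the assumption to each increment $s\in\{g,\dots,t\}$ with comparison group $\gc$ (legitimate because $\gc>t\ge s$ forces $\gc>\max\{g,s\}$), the telescoped identity holds with probability one. Because units with $(S=g,Q=0)$ or $(S=\gc,Q=0)$ are never treated (they have $Q=0$) and units with $(S=\gc,Q=1)$ have $G=\gc>t$, for each of these three comparison groups the observed change $Y_t-Y_{g-1}$ coincides with $Y_t(\infty)-Y_{g-1}(\infty)$, so their conditional means are exactly $m_{Y_t-Y_{g-1}}^{S=g,Q=0}(X)$, $m_{Y_t-Y_{g-1}}^{S=\gc,Q=1}(X)$, and $m_{Y_t-Y_{g-1}}^{S=\gc,Q=0}(X)$. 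Solving the telescoped identity for the treated trend gives
\[
\E[Y_t(\infty)-Y_{g-1}(\infty)|S=g,Q=1,X]=m_{Y_t-Y_{g-1}}^{S=g,Q=0}(X)+m_{Y_t-Y_{g-1}}^{S=\gc,Q=1}(X)-m_{Y_t-Y_{g-1}}^{S=\gc,Q=0}(X),
\]
and substituting into the previous display, together with $\E[w_{\text{trt}}^{S=g,Q=1}(S,Q)\,\varphi]=\E[\varphi|S=g,Q=1]$ for integrable $\varphi$, delivers $ATT(g,t)=ATT_{\text{ra},\gc}(g,t)$ as in \eqref{eqn:regDDD_stagg}.

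For the IPW and doubly robust representations I would establish a change-of-measure identity. For $(g',q')\in\{(g,0),(\gc,1),(\gc,0)\}$, the definition of the generalized propensity score makes $p^{S=g,Q=1}_{g',q'}(X)\big/\big(1-p^{S=g,Q=1}_{g',q'}(X)\big)$ equal $\P[S=g,Q=1|X]\big/\P[S=g',Q=q'|X]$, and Assumption \ref{ass:overlap_staggered} guarantees the weights in \eqref{eqn:weights} are well-defined with finite expectation; a Bayes-rule computation then gives, for integrable $\varphi$,
\[
\E\big[w^{S=g,Q=1}_{g',q'}(S,Q,X)\,\varphi\big]=\E\big[\,\E[\varphi|S=g',Q=q',X]\ \big|\ S=g,Q=1\,\big].
\]
Taking $\varphi=Y_t-Y_{g-1}$ gives $\E[w^{S=g,Q=1}_{g',q'}(S,Q,X)(Y_t-Y_{g-1})]=\E[m_{Y_t-Y_{g-1}}^{S=g',Q=q'}(X)|S=g,Q=1]$, which substituted into \eqref{eqn:ipwDDD_stagg} yields $ATT_{\text{ipw},\gc}(g,t)=ATT(g,t)$; taking instead $\varphi=m_{Y_t-Y_{g-1}}^{S=g',Q=q'}(X)$ (a function of $X$ only) returns the same number, so in \eqref{eqn:DRDDD_stagg} each residual term $\E[w^{S=g,Q=1}_{g',q'}(S,Q,X)(Y_t-Y_{g-1}-m_{Y_t-Y_{g-1}}^{S=g',Q=q'}(X))]$ has mean zero while the three terms carrying $w_{\text{trt}}^{S=g,Q=1}$ reassemble the RA integrand, so $ATT_{\text{dr},\gc}(g,t)=ATT_{\text{ra},\gc}(g,t)=ATT(g,t)$.

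The main obstacle is the trend-identification step: one must carefully justify the telescoped form of Assumption \ref{ass:PT-NYT} (it is stated only for consecutive-period increments) and verify, using the convention $D_{i,t}=1\{t\ge S_i,Q_i=1\}$ and the requirement $\gc>t$, that for each of the three comparison groups the observed outcome change equals the untreated-potential-outcome change, so that the relevant conditional means really are the $m(\cdot)$ functions. Given that, the first reduction is the familiar add-and-subtract argument and the weighting step is routine Bayes-rule bookkeeping followed by an algebraic cancellation.
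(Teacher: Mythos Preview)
Your proposal is correct and follows essentially the same route as the paper. The paper packages the first two steps into an auxiliary lemma identifying $CATT_X(g,t)$, but its derivation is the same add-and-subtract argument followed by Assumption~\ref{ass:PT-NYT}; your explicit telescoping over $s\in\{g,\dots,t\}$ is actually a bit more careful than the paper's proof, which applies Assumption~\ref{ass:PT-NYT} to $Y_t(\infty)-Y_{g-1}(\infty)$ in one line without spelling out that the assumption is stated only for consecutive increments. Your change-of-measure identity and the residual-cancellation argument for the DR estimand are likewise the same computations the paper carries out.
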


Theorem \ref{thm:Identification} is the first main result of this paper. It establishes the nonparametric identification of all post-treatment $ATT(g,t)$'s in DDD setups. It extends the DiD identification results of \citet{Callaway2021} to DDD setups. As such, it also extends the difference-in-differences identification results based on the RA approach of \citet{Heckman1997}, the IPW approach of \citet{Abadie2005}, and the DR approach of \citet{SantAnna2020} to DDD setups with multiple periods and variation in treatment time. Theorem \ref{thm:Identification} also highlights that one can use different parts of the data-generating process to identify the $ATT(g,t)$'s: the RA estimand only models the conditional expectation of evolution of outcomes among untreated units, the IPW approach only models the conditional probability of being observed in a given partition of the $S$-by-$Q$ groups, whereas the DR approach exploits both components. A big advantage of the DR approach is that it is based on a Neyman-orthogonal moment condition \citep{Belloni2017}, and, therefore, it is more robust against model misspecifications than the IPW and RA formulations. It is very easy to show that estimators based on ${ATT}_{\text{dr}, g_{\text{c}}}(g,t)$ enjoy a very attractive doubly-robust property \citep{SantAnna2020} that allows for some forms of (global) model misspecifications.\footnote{For an overview of doubly robust estimators in cross-sectional designs, see section 2 of \cite{Sloczynski2018}, and \cite{Seaman2018}.}

Another important result from Theorem \ref{thm:Identification} is that our DDD model is over-identified, as we can use multiple not-yet-treated enabling groups $g_{\text{c}}$ as valid comparison groups. For instance, in a setup with $S\in \{2,3,\infty\}$, we can set $\gc=3$ or $\gc=\infty$ to identify $ATT(2,2)$, and both will lead to the same target parameter. As a direct consequence of this result, any weighted sum of these estimands that use different $\gc$'s will also lead to the $ATT(g,t)$, as long as the weights sum up to one. We formalize this result in the following corollary, using the DR estimand; however, this also applies to the RA and IPW. Let $\mathcal{G}_{\text{c}}^{\text{g,t}}=\{ \gc \in \mathcal{S}: \gc > \max\{g,t\}\}$. 
\begin{corollary}\label{cor:over-id}
    Let Assumptions \ref{ass:sampling_panel}, \ref{ass:overlap_staggered}, \ref{ass:anticipation}, and \ref{ass:PT-NYT} hold. Then, for all $g\in \mathcal{G}_{\text{trt}}$ and $t\in \{2,\dots, T\}$ such that $t\ge g$, and any set of weights $w^{\text{g,t}}_{\\gc}$ that sum up to one over $\mathcal{G}_{\text{c}}^{\text{g,t}}$, 
    \begin{equation*}
        ATT(g,t) = \sum_{\gc\in \mathcal{G}_{\text{c}}^{\text{g,t}}} w^{\text{g,t}}_{\gc}~ ATT_{\text{dr}, g_{\text{c}}}(g,t).
    \end{equation*}
\end{corollary}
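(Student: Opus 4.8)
The plan is to derive Corollary~\ref{cor:over-id} as an immediate consequence of Theorem~\ref{thm:Identification}, using nothing more than the linearity of the target parameter in the choice of comparison group. First I would fix $g\in\mathcal{G}_{\text{trt}}$ and $t\in\{2,\dots,T\}$ with $t\ge g$, and note that $\mathcal{G}_{\text{c}}^{\text{g,t}}$ is by construction the set of all $\gc\in\mathcal{S}$ with $\gc>\max\{g,t\}$; since $t\ge g$ this is exactly the index set over which Theorem~\ref{thm:Identification} applies. Hence, for \emph{every} $\gc\in\mathcal{G}_{\text{c}}^{\text{g,t}}$, Theorem~\ref{thm:Identification} gives $ATT_{\text{dr},\gc}(g,t)=ATT(g,t)$ — the same number, independent of which valid comparison cohort $\gc$ we selected.

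The second step is the one-line aggregation argument. Let $\{w^{\text{g,t}}_{\gc}\}_{\gc\in\mathcal{G}_{\text{c}}^{\text{g,t}}}$ be any collection of weights (not necessarily nonnegative) with $\sum_{\gc\in\mathcal{G}_{\text{c}}^{\text{g,t}}} w^{\text{g,t}}_{\gc}=1$. Then
\begin{equation*}
\sum_{\gc\in\mathcal{G}_{\text{c}}^{\text{g,t}}} w^{\text{g,t}}_{\gc}\, ATT_{\text{dr},\gc}(g,t)
= \sum_{\gc\in\mathcal{G}_{\text{c}}^{\text{g,t}}} w^{\text{g,t}}_{\gc}\, ATT(g,t)
= ATT(g,t)\sum_{\gc\in\mathcal{G}_{\text{c}}^{\text{g,t}}} w^{\text{g,t}}_{\gc}
= ATT(g,t),
\end{equation*}
where the first equality substitutes the identification result from Theorem~\ref{thm:Identification} term-by-term, and the last uses that the weights sum to one. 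The analogous statements for $ATT_{\text{ra},\gc}(g,t)$ and $ATT_{\text{ipw},\gc}(g,t)$ follow identically, since Theorem~\ref{thm:Identification} also equates each of those estimands to $ATT(g,t)$ for every valid $\gc$.

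There is essentially no obstacle here: the corollary is a formal rephrasing of the over-identification content already contained in Theorem~\ref{thm:Identification}, and the proof is a substitution plus the normalization $\sum w^{\text{g,t}}_{\gc}=1$. The only point worth a sentence of care is to confirm that $\mathcal{G}_{\text{c}}^{\text{g,t}}$ is nonempty so that the weighted sum is well-defined — but this is guaranteed by the standing assumption that a never-enabled cohort $S=\infty$ exists, so $\infty\in\mathcal{G}_{\text{c}}^{\text{g,t}}$ always. I would also remark, in passing, that the weights are allowed to depend on $(g,t)$ but not on the data, which is what keeps the identity exact rather than merely asymptotic; the efficient data-driven choice of weights (the GMM combination in \eqref{eqn:att_gt_gmm} and Remark~\ref{rem:gmm}) is a separate matter addressed elsewhere in the paper.
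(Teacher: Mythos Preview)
Your proposal is correct and follows essentially the same approach as the paper: both proofs invoke Theorem~\ref{thm:Identification} to replace each $ATT_{\text{dr},\gc}(g,t)$ by $ATT(g,t)$, factor the constant out of the sum, and use that the weights sum to one. Your added remarks on nonemptiness of $\mathcal{G}_{\text{c}}^{\text{g,t}}$ and the RA/IPW analogues are fine embellishments but not required.
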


As Corollary \ref{cor:over-id} indicates that all weighted sums lead to the same $ATT(g,t)$, a natural way to choose these weights is to pick them such that we maximize precision in terms of minimizing the resulting asymptotic variance. In the next session, we will discuss this in greater detail, connecting these arguments to a formulation based on generalized methods of moments using re-centered influence functions. 

\begin{remark}\label{rem:no_difference_of_2_DiDs}
    As we discussed in Section \ref{sec:implication}, in two-period DDD setups without covariates, one can identify $ATT(2,2)$ using the difference of two DiD estimands as in \eqref{eqn:3wtfe_att22} \citep{Olden2022}. This equivalence breaks down when the DDD identification assumptions are only satisfied after you condition on covariates $X$---see Figure \ref{fig:Sims_2_periods}. The econometric reason for this failure of equivalence is that one needs to integrate the covariates using the covariate distribution among treated units, i.e., units with $S=2$ and $Q=1$. Proceeding as if $ATT(2,2)$ were the difference of two DiD estimands would integrate $X$ using the covariate distribution of untreated units ($S=\infty$ and $Q=1$), leading to biases. The results in Theorem \ref{thm:Identification} address this problem by guaranteeing that one integrates out covariates using the correct reference distribution, which leads to a combination of three DiD estimands, not just two.
\end{remark}

\begin{remark}\label{rem:no_pooling_nyt}
    Although Theorem \ref{thm:Identification} and Corollary \ref{cor:over-id} allow one to use several different not-yet-treated cohorts $\gc$ as the comparison group, it does not allow one to pool all not-yet-treated units and use that pooled set of units as the aggregate comparison group to identify $ATT(g,t)$ in DDD. This sharply contrasts DiD procedures such as those discussed in \citet{Callaway2021}---see Figure \ref{fig:stag_ddd_pooled_nyt} for an illustration of the bias that can arise by following this type of procedure. The econometric reasoning for such results is that Assumption \ref{ass:PT-NYT} allows for both enabling-group- and eligibility-group-specific trends, and it does not impose that the proportion of units in each eligibility group $Q$ is the same across all enabling groups $S$. As such, Assumption \ref{ass:PT-NYT} does not guarantee that, with probability one, 
    \begin{small}
    \vspace{-.5cm}
    \begin{eqnarray*}
\mathbb{E}\left[Y_{t}(\infty) - Y_{t-1}(\infty)| S = g, Q=1, X\right] &-& \mathbb{E}\left[Y_{t}(\infty)- Y_{t-1}(\infty) | S = g, Q=0, X \right]  \\[-0.2cm]
&=&\\[-0.2cm]
\vspace{-5cm}\mathbb{E}\left[Y_{t}(\infty) - Y_{t-1}(\infty)|S >t, Q=1, X\right] &-& \mathbb{E}\left[Y_{t}(\infty)- Y_{t-1}(\infty) | S >t, Q=0, X \right],
\end{eqnarray*}
 \end{small}
 as it would be required to use the pooled, not-yet-treated units as a comparison group. 
 \end{remark}

\begin{remark}\label{rem:event-study}
    As Theorem \ref{thm:Identification} establishes nonparametric identification of the $ATT(g,t)$'s over all post-treatment periods and that $\P(G=g|G+e \in [1,T])$ is also nonparametrically identified, it follows that event-study parameters that aggregate across eligibility-groups, $ES(e)$ as defined in \eqref{eqn:ES}, is also nonparametrically identified. For instance, it follows that for any event-time $e\ge 0$,
    \begin{align}
ES(e)  = \sum_{g\in\mathcal{G}_{\text{trt}}} \P(G=g|G+e \in [1,T]) ATT_{\text{dr}, g_{\text{c}}}(g,g+e).
\end{align}
One can also replace $ATT_{\text{dr}, g_{\text{c}}}(g,g+e)$ with their analogs in Corollary \ref{cor:over-id} or with the RA or IPW estimands in Theorem \ref{thm:Identification}. One can also use Theorem \ref{thm:Identification} to establish the identification of many other aggregate summary causal parameters discussed in Section 3 of \citet{Callaway2021}.
\end{remark}

\begin{remark}\label{rem:pre-treatment-periods}
One of the biggest appeals of DiD and DDD setups is the availability of pre-treatment periods that allow the assessment of the plausibility of PT assumptions, such as Assumption \ref{ass:PT-NYT}. Under Assumption \ref{ass:anticipation}, a very popular way to assess the plausibility of PT is to construct event-study plots based on $ES(e)$ as in \eqref{eqn:ES}, consider both pre-treatment ($e<0$) and post-treatment ($e\ge 0$) event times, and check whether pre-treatment event-study coefficients are all close to zero. It is straightforward to adapt this strategy in our DDD context by fixing the statistical estimand---for example, the ${ATT}_{\text{dr}, g_{\text{c}}}(g,t)$ in \eqref{eqn:DRDDD_stagg}---consider pre-treatment periods $t<g$, and then aggregate them using cohort-size. More specifically, for any event-time $e < 0$,
    \begin{align}
ES(e)  = \sum_{g\in\mathcal{G}_{\text{trt}}} \P(G=g|G+e \in [1,T]) ATT_{\text{dr}, g_{\text{c}}}(g,g+e).
\end{align} 
Note that when $e=-1$, $ES(e)=0$ by construction, as we fix the baseline period at the last untreated period for group $g$, $g-1$. Based on these event-study aggregations, it is also possible to conduct sensitivity analysis for the plausibility of Assumption \ref{ass:PT-NYT} using the results in \citet{rambachan_roth}.
\end{remark}

\begin{remark}
    The DR DDD estimand can also be understood as a DDD estimand that builds on an efficient influence function in DDD setups with two periods. See Lemma \ref{lemma:eif} in the appendix for such results.
\end{remark}

\subsection{Estimation and inference}
In this section, we now propose simple-to-use plug-in estimators for the $ATT(g,t)$s and $ES(e)$s parameters, and discuss how one can conduct valid inference for these parameters. We focus on the doubly robust DDD estimator; the results for the RA and IPW DDD estimators are analogous. 

First, notice that for any $g_{\text{c}}\in \mathcal{S}$ such that $g_{\text{c}}>\max\{g,t\}$, Theorem \ref{thm:Identification} suggests that we can estimate $ATT(g,t)$ by using the sample analogue of \eqref{eqn:DRDDD_stagg},
\begin{small}
\begin{align}\label{eqn:DRDDD_stagg_estimated}
\widehat{{ATT}}_{\text{dr}, g_{\text{c}}}(g,t) =&~ \mathbb{E}_n\left[ \left(\widehat{w}^{S=g,Q=1}_{\text{trt}}(S,Q) - \widehat{w}^{S=g,Q=1}_{g,0}(S,Q,X) \right)\left(Y_t- Y_{g-1} - \widehat{m}_{Y_t-Y_{g-1}}^{S=g,Q=0}(X) \right)\right]\nonumber \\
& +\mathbb{E}_n\left[ \left(\widehat{w}^{S=g,Q=1}_{\text{trt}}(S,Q) - \widehat{w}^{S=g,Q=1}_{g_{\text{c}},1}(S,Q,X) \right)\left(Y_t- Y_{g-1} -  \widehat{m}_{Y_t-Y_{g-1}}^{S=g_{\text{c}},Q=1}(X)  \right)\right] \\
&- \mathbb{E}_n\left[ \left(\widehat{w}^{S=g,Q=1}_{\text{trt}}(S,Q) -\widehat{w}^{S=g,Q=1}_{g_{\text{c}},0}(S,Q,X)\right) \left(Y_t- Y_{g-1} - \widehat{m}_{Y_t-Y_{g-1}}^{S=g_{\text{c}},Q=0}(X)\right)\right] \nonumber,
\end{align}
\end{small}
where the estimated weights $\widehat{w}$ are given by
\begin{small}
\vspace{-.5cm}
\begin{align*}
    \widehat{w}^{S=g,Q=1}_{\text{trt}}(S, Q)\equiv\dfrac{1{\{S=g,Q=1\}} }{\mathbb{E}_n[1{\{S=g,Q=1\}}]}, \quad \widehat{w}^{S=g, Q=1}_{g',q'}(S, Q, X) \equiv \dfrac{\dfrac{1{\{S=g',Q=q'\}} \cdot \widehat{p}^{S=g,Q=1}_{g',q'}(X) }{1 - \widehat{p}^{S=g,Q=1}_{g',q'}(X)}} {\mathbb{E}_n\bracks{\dfrac{1{\{S=g',Q=q'\}} \cdot \widehat{p}^{S=g,Q=1}_{g',q'}(X) }{1 - \widehat{p}^{S=g,Q=q}_{g',q'}(X)}}},
\end{align*}
\end{small}
and $\widehat{m}_{Y_t-Y_{g-1}}^{S=g_{\text{c}}, Q=1}(X)$ and $\widehat{p}^{S=g, Q=1}_{g',q'}(X)$ are (potentially misspecified) working models for the outcome regression ${m}_{Y_t-Y_{g-1}}^{S=g_{\text{c}}, Q=1}(X)$ and the generalized propensity score $\widehat{p}^{S=g, Q=1}_{g',q'}(X)$. These estimators extend the DR DiD estimator of \citet{Callaway2021} to the DDD setup, and remain consistent if \emph{either} outcome regression or generalized propensity score models are correctly specified. It is also worth stressing that we do not need that \emph{all} generalized propensity score working models or \emph{all} outcome regression working models in \eqref{eqn:DRDDD_stagg_estimated} to be correctly specified to get a consistent DDD estimator for $ATT(g,t)$; it suffices that any of the working models within each of the 3 DR DiD components of \eqref{eqn:DRDDD_stagg_estimated} to be correctly specified, allowing a greater deal of estimation flexibility.\footnote{Some people may call this a multiply-robust estimator, as one has more than two opportunities to estimate the target parameter consistently. For simplicity, we retain the doubly robust terminology to avoid new acronyms.  }

As Corollary \ref{cor:over-id} highlights, one can also combine several $\widehat{{ATT}}_{\text{dr}, g_{\text{c}}}(g,t)$ that leverage different comparison groups $\gc$, i.e., for any (consistently estimated) weights $\widehat{w}^{\text{g,t}}_{\gc}$ that sum up to one over $ \mathcal{G}_{\text{c}}$, 
\begin{equation}
        \widehat{ATT}_{\text{dr},\widehat{w}}(g,t) = \sum_{\gc\in \mathcal{G}_{\text{c}}^{\text{g,t}}} \widehat{w}^{\text{g,t}}_{\gc}~ \widehat{ATT}_{\text{dr}, g_{\text{c}}}(g,t) = \widehat{w}^{\text{g,t}~'}\widehat{ATT}_{\text{dr}}(g,t),\label{eqn:ATT_gt_w_estimated}
\end{equation}
where $\widehat{ATT}_{\text{dr}}(g,t)$ is the $k_{g,t}\times 1$ vector of $\widehat{{ATT}}_{\text{dr}, g_{\text{c}}}(g,t)$ for all $\gc \in \mathcal{G}_{\text{c}}^{\text{g,t}}$, and $\widehat{w}^{\text{g,t}~'}$ is a $k_{g,t}\times 1$ vector of (estimated) weights that sum up to one, i.e., for a generic vector of ones $\textbf{1}$, $\textbf{1}'w^{\text{g,t}}=1$.

A natural question that arises is: how should one choose these weights $\widehat{w}^{\text{g,t}}_{\gc}$? We propose to choose the weights that lead to the asymptotically most precise (minimum variance) estimator for $ATT(g,t)$, that is, to pick weights that solve
\begin{align}
   \min_{w^{\text{g,t}}} w^{\text{g,t}~'}~\widehat{\Omega}_{g,t}~ w^{\text{g,t}} \text{~~~~~subject to } \textbf{1}'w^{\text{g,t}} = 1, \label{eqn:minimization}
\end{align}
where $\widehat{\Omega}_{g,t}$ is a $k_{g,t}\times k_{g,t}$ consistent estimator for the variance-covariance matrix of $\widehat{ATT}_{\text{dr}}(g,t)$. Notice that the solution of \eqref{eqn:minimization} admits a closed-form solution, and the optimal weights are given by
\begin{align}
  \widehat{w}^{\text{g,t}}_{\text{gmm}} =\dfrac{\widehat{\Omega}^{-1}_{g,t}\textbf{1}}{\textbf{1}'\widehat{\Omega}^{-1}_{g,t}\textbf{1}} \label{eqn:opt_weights}.
\end{align}
In turn, this implies that the linear combination of $\widehat{{ATT}}_{\text{dr}, g_{\text{c}}}(g,t)$ that leads to the most precise estimator for $ATT(g,t)$ is given by
\begin{equation}
        \widehat{ATT}_{\text{dr},\text{gmm}}(g,t) = \dfrac{\mathbf{1}' \widehat{\Omega}_{g,t}^{-1}}{\mathbf{1}' \widehat{\Omega}_{g,t}^{-1}\mathbf{1}} \widehat{ATT}_{\text{dr}}(g,t). \label{eqn:DRDDD_optimal}
\end{equation}

In many situations with multiple periods and variation in treatment time, researchers are interested in summarizing the $ATT(g,t)$'s into fewer parameters that highlight treatment effect heterogeneity with respect to the time elapsed since treatment take-up. That is, very often, researchers are interested in estimating event-study type parameters $ES(e)$ as defined in \eqref{eqn:ES}. A very natural estimator for $ES(e)$ is the plug-in estimator, where we replace $ATT(g,t)$ with $\widehat{ATT}_{\text{dr},\text{gmm}}(g,t)$ (or $\widehat{ATT}_{\text{dr},\gc}(g,t)$), and $\P(G=g|G+e \in [1,T])$ by its sample analogue, that is,
\begin{equation}
\label{eqn:es_estimator}
    \widehat{ES}_{\text{dr},\text{gmm}}(e) = \sum_{g\in\mathcal{G}_{\text{trt}}} \P_n(G=g|G+e \in [1,T]) \widehat{ATT}_{\text{dr},\text{gmm}}(g,g+e),
\end{equation}
where $\P_n(G=g|G+e \in [1,T]) = \sum_{i=1}^n 1\{G_i = g\} 1\{G_i + e \in [1,T]\}\big/ \sum_{j=1}^n 1\{G_j + e \in [1,T]\}$. We can define $ \widehat{ES}_{\text{dr},\gc}(e)$ analogously by replacing $\widehat{ATT}_{\text{dr},\text{gmm}}(g,g+e)$ with $\widehat{ATT}_{\text{dr},\gc}(g,g+e)$ on \eqref{eqn:es_estimator}. Based on it, we can also estimate an overall summary parameter by averaging all post-treatment event times, i.e., 
\begin{align}
\widehat{ES}_{\text{avg},\text{gmm}} = \dfrac{1}{N_E}\sum_{e\in \mathcal{E}} \widehat{ES}_{\text{dr},\text{gmm}}(e)\label{eqn:overall_ATT_estimate}.
\end{align}

\begin{remark}\label{rem:gmm}
   It is also worth noticing that $\widehat{ATT}_{\text{dr},\text{gmm}}(g,t)$ in \eqref{eqn:DRDDD_optimal} can be interpreted as an optimal Generalized Method of Moments (GMM) estimator based on re-centered influence functions. To see this, let ${\mathbb{IF}}_{\text{dr}, g_{\text{c}}}(g,t)$ denote the influence function of $\sqrt{n}\left(\widehat{ATT}_{\text{dr}, g_{\text{c}}}(g,t) - ATT_{\text{dr}, g_{\text{c}}}(g,t)\right)$. Let ${\mathbb{RIF}}_{\text{dr}, g_{\text{c}}}(g,t) = {\mathbb{IF}}_{\text{dr}, g_{\text{c}}}(g,t) + {ATT}_{\text{dr},\gc}(g,t)$ denote its re-centered influence function, and denote the $k_{g,t}\times 1$ vector of all ${\mathbb{RIF}}_{\text{dr}, g_{\text{c}}}(g,t)$ for $\gc \in \mathcal{G}_{\text{c}}^{\text{g,t}}$ by ${\mathbb{RIF}}_{\text{dr}}(g,t)$. Since influence functions are mean zero, and that $ATT(g,t) = ATT_{\text{dr}, g_{\text{c}}}(g,t)$ for any $\gc \in \mathcal{G}_{\text{c}}^{\text{g,t}}$, we have the vector of moment conditions $\E[{\mathbb{RIF}}_{\text{dr}}(g,t) - \theta^{g,t}] =  0$, with $\theta^{g,t} = ATT(g,t)$. From standard GMM results \citep{Newey_McFadden_1994_Handbook}, it follows that, under mild regularity conditions, the optimal (population) GMM estimator for $\theta^{g,t}$ is given by $$\theta^{g,t}_{\text{gmm}} = \dfrac{\mathbf{1}' {\Omega}_{g,t}^{-1}}{\mathbf{1}' {\Omega}_{g,t}^{-1}\mathbf{1}} \E[ {\mathbb{RIF}}_{\text{dr}}(g,t)] = \dfrac{\mathbf{1}' {\Omega}_{g,t}^{-1}}{\mathbf{1}' {\Omega}_{g,t}^{-1}\mathbf{1}} ATT_{\text{dr}}(g,t),$$ where the last equality follows from $\E[ {\mathbb{IF}}_{\text{dr}}(g,t)] = 0$. Thus, $\widehat{ATT}_{\text{dr},\text{gmm}}(g,t)$ in \eqref{eqn:DRDDD_optimal} is the sample-analogy of the efficient population GMM $\theta^{g,t}_{\text{gmm}}$. 
\end{remark}

\subsubsection{Asymptotic theory for ATT(g,t)'s}

In what follows, we derive the large sample properties of our DR DDD estimators $\widehat{ATT}_{\text{dr},\gc}(g,t)$ and $\widehat{ATT}_{\text{dr},\text{gmm}}(g,t)$. All our results are derived for the large $n$, fixed $T$ paradigm. For a generic $Z$, let $|| Z ||  = \sqrt{trace(Z'Z)}$ denote the Euclidean norm of $Z$ and set $W_i=(Y_{i,t=1},\dots,Y_{i,t=T}, X_i', G_i, S_i, Q_i)'$; we will omit the index $i$ to unclutter the notation. Let $g(\cdot)$ be a generic notation for the outcome regressions $m_{Y_t-Y_{t'}}^{S=g',Q=q}(X)$ and generalized propensity scores $p^{S=g,Q=1}_{g', q'}(X)$, and, with some abuse of notation, let $g(\cdot; \gamma)$ denote a parametric model for $g(\cdot)$ that is known up to the finite-dimensional parameters $\gamma$. For a generic $\kappa^{g,t}_{\gc} = (\gamma^{ps~\prime}_{g,t,\gc}, \gamma^{reg~\prime}_{ g,t,\gc})'$, with $\gamma^{ps}_{g,t,\gc}$ and $\gamma^{reg}_{ g,t,\gc}$ being nuisance parameters for the generalized propensity score and outcome regressions, respectively, let
\begin{small}
\begin{align*}
h^{g,t}_{\gc}(W; \kappa^{g,t}_{\gc}) =&~  \left({w}^{S=g,Q=1}_{\text{trt}}(W) - {w}^{S=g,Q=1}_{g,0}(W;\gamma^{ps}_{g,t,\gc}) \right)\left(Y_t- Y_{g-1} - {m}_{Y_t-Y_{g-1}}^{S=g,Q=0}(X;\gamma^{reg}_{ g,t,\gc}) \right)\nonumber \\
& +\left({w}^{S=g,Q=1}_{\text{trt}}(W) - {w}^{S=g,Q=1}_{g_{\text{c}},1}(W;\gamma^{ps}_{g,t,\gc}) \right)\left(Y_t- Y_{g-1} -  {m}_{Y_t-Y_{g-1}}^{S=g_{\text{c}},Q=1}(X;\gamma^{reg}_{ g,t,\gc})  \right) \\
&-  \left({w}^{S=g,Q=1}_{\text{trt}}(W) -{w}^{S=g,Q=1}_{g_{\text{c}},0}(W;\gamma^{ps}_{g,t,\gc})\right) \left(Y_t- Y_{g-1} - {m}_{Y_t-Y_{g-1}}^{S=g_{\text{c}},Q=0}(X;\gamma^{reg}_{ g,t,\gc})\right) \nonumber,
\end{align*}
\end{small}
where the weights $w(W;\gamma^{ps}_{g,t,\gc})$ are defined similarly to those in \eqref{eqn:weights}, with the difference being that the true unknown generalized propensity score models are replaced by working parametric counterparts, $p^{S=g,Q=1}_{g', q'}(X;\gamma^{ps}_{g,t,\gc})$, and the true unknown outcome regression models ${m}_{Y_t-Y_{g-1}}^{S=g',Q=q}(X)$ are also replaced with parametric working models, ${m}_{Y_t-Y_{g-1}}^{S=g',Q=q}(X;\gamma^{reg}_{ g,t,\gc})$. We denote the vector of pseudo-true parameters by $\kappa^{g,t}_{0, \gc}$ and let $\dot{h}^{g,t}_{\gc}(\kappa) = \partial h^{g,t}_{\gc}(W;\kappa) / \partial \kappa$.

To derive our results, we make the following relatively mild assumptions.

\begin{namedassumption}{WM}[Working Model Conditions]\label{ass:wm}
\label{ass:smoothness}
        (i) $g(x;\gamma)$ is a parametric model for $g(x)$, where $ \gamma \in \Theta \subset \mathbb{R}^{d_k}$ is a compact set; (ii) the mapping $\theta \mapsto g(X ; \theta)$ is a.s.~continuous; (iii) the pseudo-true parameter $\theta_0 \in \operatorname{int}(\Theta)$ satisfies that for an appropriate criterion function $Q: \Theta \rightarrow  \R$ and for any $\epsilon > 0$, there exists some $\delta >0 $ such that $\inf _{\theta \in \Theta:\left\|\theta-\theta_0\right\| \geq \epsilon} Q(\theta)-Q\left(\theta_0\right)>\delta$; (iv) there exists an open neighborhood $\Theta_0 \subset \Theta$ containing $\theta_0$ such that $g(X;\gamma)$ is a.s.~continuously differentiable in a neighborhood of $\gamma_0 \in \Theta_0$. In addition, (v) there exists some $\epsilon > 0$ such that, for all $(g,g',q') \in \mathcal{G}_{\text{trt}}\times \mathcal{G}^{g,t}_c \times \{0,1\}$, we have that  
        $0 \leq p^{S=g,Q=1}_{g',q'}(X;\theta) \leq 1 - \epsilon$ a.s. for all $\theta \in \operatorname{int}(\Theta_{ps})$, where $\Theta_{ps}$ denotes the parameter space of $\gamma$ for the generalized propensity score working model.
\end{namedassumption}

\begin{namedassumption}{ALR}[$\sqrt{n}$-Asymptotically Linear Representation]
\label{ass:ALR}
Let $\hat{\theta}$ be a strongly consistent estimator of $\theta_0 \mapsto g(x ; \theta_0)$ and satisfy the following linear expansion
\begin{align}
    \sqrt{n}\left(\widehat{\theta}-\theta_0\right)=\frac{1}{\sqrt{n}} \sum_{i=1}^n l \left(W_i ; \theta_0\right)+o_p(1)
\end{align}
where $l \left(\cdot ; \cdot\right)$ is a function such that $\E\left[l \left(W_i ; \theta_0\right)\right] = 0$; $\E\left[l \left(W_i ; \theta_0\right) \cdot l \left(W_i ; \theta_0\right)^{'}\right] < \infty$ and is positive definite; and $\lim_{\delta \to 0}\E\left[\sup_{\theta \in \Theta_0: \left\|\theta-\theta_0\right\| \leq \delta} \left\| l(W; \theta) -  l(W; \theta_0) \right\|^2 \right] = 0$.
\end{namedassumption}

\begin{namedassumption}{IC}[Integrability Conditions]
\label{ass:integrability}
  For each $g \in \mathcal{G}_{\text{trt}}$, $t \in \{2,\dots, T\}$, and $g'\in \mathcal{G}^{g,t}_c$, assume that $\E[\| h^{g,t}_{\gc}(W; \kappa^{g,t}_{0, \gc}) \|^{2}] < \infty$ and $\E\left[\sup_{\kappa \in \Gamma_0} \left| \dot{h}^{g,t}_{\gc}(\kappa) \right|\right] < \infty$, where $\Gamma_0$ is a small neighborhood of the pseudo-true parameter $\kappa^{g,t}_{0,\gc}$.
\end{namedassumption}

Assumptions \ref{ass:smoothness}, \ref{ass:ALR}, and \ref{ass:integrability} are standard in the literature; see e.g., \cite{Abadie2005, WOOLDRIDGE20071281, SantAnna2020, Callaway2021}. Assumptions \ref{ass:smoothness} and \ref{ass:ALR} impose a well-behaved parametric structure for the first-step estimators for the nuisance parameters. This assumption is made for statistical convenience and acknowledges that, in many DDD applications, the number of units in each group is small, making it difficult to adopt a nonparametric approach reliably. It is relatively straightforward to relax these conditions and allow for nonparametric or data-adaptive/machine-learning-based estimators; see, e.g., \citet{Ahrens2025_DML_JEL} for an empirically-oriented discussion of causal double machine learning methods. Assumption \ref{ass:integrability} imposes mild regularity constraints on the moments of the estimating equations, preventing ill-behaved variance properties and ensuring the stability of higher-order approximations. 

In what follows, we omit $W$ and $X$  from the weights and outcome regressions to minimize notation, and for a generic $\kappa^{g,t}_{\gc}$, let 
\begin{align}
 \psi^{g,t}_{\gc}(W; \kappa^{g,t}_{\gc}) =  \psi^{g,t}_{S=g,Q=0}(W; \kappa^{g,t}_{\gc}) +  \psi^{g,t}_{S=\gc,Q=1}(W; \kappa^{g,t}_{\gc}) -  \psi^{g,t}_{S=\gc,Q=0}(W; \kappa^{g,t}_{\gc}),
 \label{eqn:composite_influence_function}
\end{align}
where, for $(g',q') \in \{(g,0), (\gc,1), (\gc,0)\}$, $\psi^{g,t}_{S=g',Q=q'}(W; \kappa^{g,t}_{\gc})$ is an influence function for one of the three DR DiD components of the DR DDD, and is given by
\begin{align}
 \psi^{g,t}_{S=g',Q=q'}(W; \kappa^{g,t}_{\gc}) = \psi^{g,t,1}_{S=g',Q=q'}(W; \kappa^{g,t}_{\gc}) - \psi^{g,t,0}_{S=g',Q=q'}(W; \kappa^{g,t}_{\gc}) -  \psi^{g,t,est}_{S=g',Q=q'}(W; \kappa^{g,t}_{\gc}),
 \label{eqn:composite_influence_function_each}
\end{align}
with
\begin{align*}
    \psi^{g,t,1}_{S=g',Q=q'}(W; \kappa^{g,t}_{\gc})= &~{w}^{S=g,Q=1}_{\text{trt}}\left(Y_t- Y_{g-1} - {m}_{Y_t-Y_{g-1}}^{S=a,Q=b}(\gamma^{reg}_{ g,t,\gc}) \right) \\
    &~~~~-  {w}^{S=g,Q=1}_{\text{trt}}\E\left[{w}^{S=g,Q=1}_{\text{trt}}\left(Y_t- Y_{g-1} - {m}_{Y_t-Y_{g-1}}^{S=a,Q=b}(\gamma^{reg}_{ g,t,\gc}) \right) \right]\\
     \psi^{g,t,0}_{S=g',Q=q'}(W; \kappa^{g,t}_{\gc})= &~{w}^{S=g,Q=1}_{g',q'}(\gamma^{ps}_{ g,t,\gc})\left(Y_t- Y_{g-1} - {m}_{Y_t-Y_{g-1}}^{S=a,Q=b}(\gamma^{reg}_{ g,t,\gc}) \right) \\
    &~~~~-  {w}^{S=g,Q=1}_{g',q'}(\gamma^{ps}_{ g,t,\gc})\E\left[{w}^{S=g,Q=1}_{g',q'}(\gamma^{ps}_{ g,t,\gc})\left(Y_t- Y_{g-1} - {m}_{Y_t-Y_{g-1}}^{S=a,Q=b}(\gamma^{reg}_{ g,t,\gc}) \right) \right]
 \end{align*}   
 and 
 \begin{align*}
     \psi^{g,t,est}_{S=g',Q=q'}(W; \kappa^{g,t}_{\gc}) = l^{g,t,reg}_{S=g',Q=q'}(\gamma^{ref}_{ g,t,\gc})' M^{g,t, 1}_{S=g',Q=q'}(\kappa^{g,t}_{\gc}) + l^{g,t,ps}_{S=g',Q=q'}(\gamma^{ps}_{ g,t,\gc})' M^{g,t,2}_{S=g',Q=q'}(\kappa^{g,t}_{\gc})
 \end{align*}
where $l^{g,t,reg}_{S=g',Q=q'}(\cdot)$ is the asymptotic linear representation of the outcome evolution for the group with $S=g'$ and $Q=q'$ as described in Assumption \ref{ass:ALR}, $l^{g,t,ps}_{S=g',Q=q'}(\cdot)$ is defined analogously for the generalized propensity score that uses group $S=a,Q=b$ as a comparison group, and 
 {\fontsize{10.4pt}{10pt}\selectfont
 \begin{align*}
    M^{g,t,1}_{S=g',Q=q'}(\kappa^{g,t}_{\gc}) &= \E\left[ \left({w}^{S=g,Q=1}_{\text{trt}} - {w}^{S=g,Q=1}_{g',q'}(\gamma^{ps}_{g,t,\gc})\right) \dot{m}_{Y_t-Y_{g-1}}^{S=a,Q=b}(\gamma^{reg}_{g,t,\gc}) \right],\\
    M^{g,t,2}_{S=g',Q=q'}(\kappa^{g,t}_{\gc}) & = \E \left[ {\alpha}^{S=g, Q=1}_{g',q'}(\gamma^{ps}_{ g,t,\gc}) \left(Y_t- Y_{g-1} - {m}_{Y_t-Y_{g-1}}^{S=a,Q=b}(\gamma^{reg}_{ g,t,\gc})  \right) \cdot \dot{p}^{S=g,Q=q}_{g',q'}(\gamma^{ps}_{g,t,\gc})\right]\\
    &~~~ - \E \left[ {\alpha}^{S=g, Q=1}_{g',q'}(\gamma^{ps}_{ g,t,\gc}) \left(\E\left[ {w}^{S=g,Q=1}_{g',q'}(\gamma^{ps}_{g,t,\gc})\left(Y_t- Y_{g-1} - {m}_{Y_t-Y_{g-1}}^{S=a,Q=b}(\gamma^{reg}_{ g,t,\gc}) \right)\right] \right) \cdot \dot{p}^{S=g,Q=q}_{g',q'}(\gamma^{ps}_{g,t,\gc})\right] 
 \end{align*}
}
with $\dot{m}_{Y_t-Y_{g-1}}^{S=a,Q=b}(\gamma^{reg}_{g,t,\gc}) = \partial{m}_{Y_t-Y_{g-1}}^{S=a,Q=b}(\gamma^{reg}_{g,t,\gc})\big/ \partial \gamma^{reg}_{g,t,\gc}$, $\dot{p}^{S=g,Q=q}_{g',q'}(\gamma^{ps}_{g,t,\gc}) = \partial {p}^{S=g,Q=q}_{g',q'}(\gamma^{ps}_{g,t,\gc}) \big/\gamma^{ps}_{g,t,\gc} $, and
 {\fontsize{10.4pt}{10pt}\selectfont
\begin{align*}
{\alpha}^{S=g, Q=1}_{g',q'}(\gamma^{ps}_{ g,t,\gc}) = \left. \dfrac{1{\{S=a,Q=b\}} }{\left(1 - {p}^{S=g,Q=1}_{g',q'}(X;\gamma^{ps}_{g,t,\gc})\right)^2} \right/ \mathbb{E}\bracks{\dfrac{1{\{S=a,Q=b\}} \cdot {p}^{S=g,Q=1}_{g',q'}(X;\gamma^{ps}_{g,t,\gc}) }{1 - {p}^{S=g,Q=q}_{g',q'}(X;\gamma^{ps}_{g,t,\gc})}}.
\end{align*}
}

For each $g\in \mathcal{G}_{\text{trt}}$ and each $t\in \{2,3,\dots, \}$, let  ${ATT}_{\text{dr}}(g,t)$ denote the $k_{g,t}\times 1$ vector of ${{ATT}}_{\text{dr}, g_{\text{c}}}(g,t)$ for all (non-collinear) $\gc \in \mathcal{G}_{\text{c}}^{\text{g,t}}$, and ${\Omega}_{g,t}$ be the asymptotic variance-covariance matrix of $\sqrt{n} \left( \widehat{ATT}_{\text{dr}}(g,t) -  {ATT}_{\text{dr}}(g,t)\right)$, i.e., $ {\Omega}_{g,t} = \E\left[ \psi^{g,t}(W; \kappa^{g,t})\psi^{g,t}(W; \kappa^{g,t})'\right]$, with $\psi^{g,t}(W; \kappa^{g,t})$ being the $k_{g,t} \times 1$ vector that stacks all non-collinear $\psi^{g,t}_{\gc}(W; \kappa^{g,t}_{\gc})$ for $\gc \in \mathcal{G}_{\text{c}}^{\text{g,t}}$. Let  ${ATT}_{\text{dr},\text{gmm}}(g,t) = ({\textbf{1}'{\Omega}^{-1}_{g,t}\textbf{1}})^{-1} {\textbf{1}'{\Omega}^{-1}_{g,t}}~ ATT_{\text{dr}}(g,t)$, and for a generic set of weights that sum up to one, let $ {ATT}_{\text{dr},{w}}(g,t) = \sum_{\gc\in \mathcal{G}_{\text{c}}^{\text{g,t}}} w^{\text{g,t}}_{\gc}~ ATT_{\text{dr}, g_{\text{c}}}(g,t)$, and recall that $\widehat{ATT}_{\text{dr},\widehat{w}}(g,t)$ is its empirical analogue as defined in \eqref{eqn:ATT_gt_w_estimated}. 
Finally, let 
$\widehat{\Omega}_{g,t}$ be the empirical analogue of ${\Omega}_{g,t}$, where one replaces expectations by sample analogues and $\kappa^{g,t}$ with $\widehat{\kappa}^{g,t}$, and consider the following claim:
%{\fontsize{11pt}{10pt}\selectfont
\begin{align}
    &\text{For each } g\in \mathcal{G}_{\text{trt}}, t \in \{2,\dots, T\} \text{ such that }t\geq g, \text{ and each } \gc\in \mathcal{G}_{\text{c}}^{\text{g,t}} ,\nonumber\\
    &\text{ we have that,  for each } (g',q') \in \{(g,0), (\gc,1), (\gc,0)\},\nonumber\\
    & \exists \gamma^{ps}_{0, g,t,\gc} \in \Theta^{ps}: \P({p}^{S=g,Q=q}_{g',q'}(X;\gamma^{ps}_{0,g,t,\gc}) = {p}^{S=g,Q=q}_{g',q'}(X)) = 1 \text{ or} \label{claim:doubly_robust}\\
     & \exists \gamma^{reg}_{0, g,t,\gc} \in \Theta^{reg}: \P({m}^{S=g,Q=q}_{g',q'}(X;\gamma^{reg}_{0,g,t,\gc}) = {m}^{S=g,Q=q}_{g',q'}(X)) = 1. \nonumber
\end{align}
%}

Claim \eqref{claim:doubly_robust} states that for each $(g,t)$-pair and each suitable comparison group $\gc$, either the working parametric model for the generalized propensity score is correctly specified, or the working outcome regression model for the comparison group is correctly specified for each of the three DiD components of our DDD estimator. Thus, eight possible working model combinations would lead to consistent DDD estimation of the $ATT(g,t)$  parameter.

The next theorem establishes the limiting distribution of $\widehat{ATT}_{\text{dr},\gc}(g,t)$ and $\widehat{ATT}_{\text{dr},\text{gmm}}(g,t)$.

\begin{theorem}[Consistency and Asymptotic Normality]\label{thm:CAN}
    Let Assumptions \ref{ass:sampling_panel}, \ref{ass:overlap_staggered}, \ref{ass:anticipation}, \ref{ass:PT-NYT}, \ref{ass:wm}, \ref{ass:ALR}, and \ref{ass:integrability} hold. Then, for all $g\in \mathcal{G}_{\text{trt}}$, $t\in \{2,\dots, T\}$, and $g_{\text{c}} \in \mathcal{G}_{\text{c}}^{\text{g,t}}$ such that $t\ge g$, provided that \eqref{claim:doubly_robust} is true,
\begin{align*}
    \sqrt{n}\left(\widehat{ATT}_{\text{dr},\gc}(g,t) - {ATT}(g,t) \right) &= \dfrac{1}{\sqrt{n}}\sum_{i=1}^n \psi^{g,t}_{\gc}(W_i; \kappa^{g,t}_{0, \gc})+ o_p(1)  \overset{d}{\rightarrow} N(0,\Omega_{g,t,\gc}),
    \end{align*}
   where $\Omega_{g,t,\gc} = \E\left[ \psi^{g,t}_{\gc}(W_i; \kappa^{g,t}_{0, \gc})\psi^{g,t}_{\gc}(W_i; \kappa^{g,t}_{0, \gc})'\right]$. Furthermore, 
   \begin{align*}
    \sqrt{n}\left(\widehat{ATT}_{\text{dr},\text{gmm}}(g,t) - {ATT}(g,t) \right) &=  \dfrac{\textbf{1}'{\Omega}^{-1}_{g,t}}{\textbf{1}'{\Omega}^{-1}_{g,t}\textbf{1}} \dfrac{1}{\sqrt{n}}\sum_{i=1}^n \psi^{g,t}(W_i; \kappa^{g,t}_{0})+ o_p(1) \overset{d}{\rightarrow} N(0,\Omega_{g,t, \text{gmm}}),
    \end{align*}
    where $\Omega_{g,t, \text{gmm}} = \left({\textbf{1}'{\Omega}^{-1}_{g,t}\textbf{1}}\right)^{-1} \leq \Omega_{g,t,\gc}$ for any $\gc \in \mathcal{G}_{\text{c}}^{\text{g,t}}$. In fact, for any set of weights $w$ that sum up to one over the $\mathcal{G}_{\text{c}}^{\text{g,t}}$, $\Omega_{g,t, \text{gmm}} \leq \Omega_{g,t, w}$, with $\Omega_{g,t, w}$ defined as the asymptotic variance of $\sqrt{n}\left(\widehat{ATT}_{\text{dr},\widehat{w}}(g,t) - {ATT}_{\text{dr},{w}}(g,t) \right)$.
\end{theorem}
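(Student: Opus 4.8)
The plan is to proceed in three stages: (i) establish the $\sqrt n$-asymptotically linear representation and a CLT for each single–comparison-group estimator $\widehat{ATT}_{\text{dr},\gc}(g,t)$; (ii) upgrade this to a joint representation for the stacked vector $\widehat{ATT}_{\text{dr}}(g,t)$ and push it through the estimated GMM weights via Slutsky; and (iii) prove the efficiency ordering by a standard Aitken/GLS argument. Throughout, the argument mirrors \citet{SantAnna2020} and \citet{Callaway2021}, with extra bookkeeping for the three DR-DiD components.

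For stage (i), write $\widehat{ATT}_{\text{dr},\gc}(g,t) = \mathbb{E}_n[h^{g,t}_{\gc}(W;\widehat{\kappa}^{g,t}_{\gc})]$, noting that the Hájek-type self-normalization of the weights makes each of the three DiD components a smooth function of a finite vector of sample averages evaluated at $\widehat\kappa$. First I would establish consistency: Assumption \ref{ass:wm} (compact $\Theta$, a.s.\ continuity, a well-separated criterion) together with the dominance conditions in Assumption \ref{ass:integrability} yield a uniform law of large numbers, so $\widehat\kappa \convp \kappa^{g,t}_{0,\gc}$ and $\mathbb{E}_n[h^{g,t}_{\gc}(W;\widehat\kappa)] \convp \mathbb{E}[h^{g,t}_{\gc}(W;\kappa^{g,t}_{0,\gc})]$, which equals $ATT(g,t)$ by Theorem \ref{thm:Identification} together with Claim \eqref{claim:doubly_robust} --- it is precisely the double-robustness of $h^{g,t}_{\gc}$ that makes the limit the target parameter under either the propensity or the regression model being correct. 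Next I would Taylor-expand in $\kappa$ around $\kappa^{g,t}_{0,\gc}$,
\[
\sqrt n\big(\widehat{ATT}_{\text{dr},\gc}(g,t) - ATT(g,t)\big) = \sqrt n\,(\mathbb{E}_n - \mathbb{E})\big[h^{g,t}_{\gc}(W;\kappa^{g,t}_{0,\gc})\big] + \mathbb{E}\big[\dot{h}^{g,t}_{\gc}(\kappa^{g,t}_{0,\gc})\big]\,\sqrt n\big(\widehat\kappa - \kappa^{g,t}_{0,\gc}\big) + o_p(1),
\]
where the $o_p(1)$ remainder is controlled by the $L^2$-continuity condition in Assumption \ref{ass:ALR}, the dominance of $\dot h^{g,t}_{\gc}$ in Assumption \ref{ass:integrability}, and $\widehat\kappa \convp \kappa^{g,t}_{0,\gc}$ (a standard stochastic-equicontinuity argument). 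Substituting the linear expansion $\sqrt n(\widehat\kappa - \kappa^{g,t}_{0,\gc}) = n^{-1/2}\sum_i l(W_i;\theta_0) + o_p(1)$ from Assumption \ref{ass:ALR}, and identifying $\mathbb{E}[\dot h^{g,t}_{\gc}]\,l(\cdot)$ with the estimation-effect piece $\psi^{g,t,est}$ --- whose $M^{g,t,1}$ and $M^{g,t,2}$ factors are exactly the relevant blocks of the Jacobian $\mathbb{E}[\dot h^{g,t}_{\gc}]$ --- gives $\sqrt n(\widehat{ATT}_{\text{dr},\gc}(g,t) - ATT(g,t)) = n^{-1/2}\sum_i \psi^{g,t}_{\gc}(W_i;\kappa^{g,t}_{0,\gc}) + o_p(1)$. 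Since the sample is i.i.d.\ (Assumption \ref{ass:sampling_panel}) and $\mathbb{E}[\|\psi^{g,t}_{\gc}\|^2]<\infty$ (Assumption \ref{ass:integrability}), the Lindeberg--Lévy CLT delivers $N(0,\Omega_{g,t,\gc})$.

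For stage (ii), the identical argument applied coordinate-by-coordinate yields the joint representation $\sqrt n(\widehat{ATT}_{\text{dr}}(g,t) - ATT(g,t)\mathbf 1) = n^{-1/2}\sum_i \psi^{g,t}(W_i;\kappa^{g,t}_0) + o_p(1) \convd N(0,\Omega_{g,t})$, with $\Omega_{g,t}$ invertible by the non-collinearity convention on the stacked components. The plug-in estimator $\widehat\Omega_{g,t}$ is consistent by the same ULLN/continuity reasoning, so $\widehat w^{\text{g,t}}_{\text{gmm}} = \widehat\Omega_{g,t}^{-1}\mathbf 1/(\mathbf 1'\widehat\Omega_{g,t}^{-1}\mathbf 1) \convp w^{\text{g,t}}_{\text{gmm}} := \Omega_{g,t}^{-1}\mathbf 1/(\mathbf 1'\Omega_{g,t}^{-1}\mathbf 1)$. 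Because $\mathbf 1'\widehat w^{\text{g,t}}_{\text{gmm}} = 1$, we have $\sqrt n(\widehat{ATT}_{\text{dr},\text{gmm}}(g,t) - ATT(g,t)) = \widehat w^{\text{g,t}~\prime}_{\text{gmm}}\,\sqrt n(\widehat{ATT}_{\text{dr}}(g,t) - ATT(g,t)\mathbf 1)$, and Slutsky's theorem replaces $\widehat w^{\text{g,t}}_{\text{gmm}}$ by $w^{\text{g,t}}_{\text{gmm}}$, producing the stated influence function $\tfrac{\mathbf 1'\Omega_{g,t}^{-1}}{\mathbf 1'\Omega_{g,t}^{-1}\mathbf 1}\psi^{g,t}(W;\kappa^{g,t}_0)$ and limiting variance $w^{\text{g,t}~\prime}_{\text{gmm}}\Omega_{g,t} w^{\text{g,t}}_{\text{gmm}} = (\mathbf 1'\Omega_{g,t}^{-1}\mathbf 1)^{-1} =: \Omega_{g,t,\text{gmm}}$.

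Stage (iii) is pure linear algebra. For any weight vector $w$ with $\mathbf 1'w = 1$, write $w = w^{\text{g,t}}_{\text{gmm}} + u$ with $\mathbf 1'u = 0$; then $w'\Omega_{g,t}w = w^{\text{g,t}~\prime}_{\text{gmm}}\Omega_{g,t} w^{\text{g,t}}_{\text{gmm}} + 2 w^{\text{g,t}~\prime}_{\text{gmm}}\Omega_{g,t}u + u'\Omega_{g,t}u$, and $w^{\text{g,t}~\prime}_{\text{gmm}}\Omega_{g,t}u = (\mathbf 1'\Omega_{g,t}^{-1}\mathbf 1)^{-1}\mathbf 1'u = 0$, so $\Omega_{g,t,w} = w'\Omega_{g,t}w = \Omega_{g,t,\text{gmm}} + u'\Omega_{g,t}u \geq \Omega_{g,t,\text{gmm}}$ since $\Omega_{g,t} \succeq 0$; taking $w$ equal to a unit vector (which trivially sums to one) specializes this to $\Omega_{g,t,\text{gmm}} \leq \Omega_{g,t,\gc}$ for every $\gc \in \mathcal{G}_{\text{c}}^{\text{g,t}}$. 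The main obstacle is stage (i): carefully verifying that the Taylor remainder is $o_p(1)$ while simultaneously tracking the self-normalized (Hájek) structure of the weights and the three-component bookkeeping, and confirming that $\mathbb{E}[\dot h^{g,t}_{\gc}]\,l(\cdot)$ reproduces exactly the $\psi^{g,t,est}$ term with its $M^{g,t,1}$, $M^{g,t,2}$, and $\alpha^{S=g,Q=1}_{g',q'}$ factors. Once that representation is in hand, stages (ii) and (iii) are routine.
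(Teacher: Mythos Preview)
Your proposal is correct and follows essentially the same route as the paper, but with a different level of detail. The paper's proof is deliberately terse: for stage~(i) it simply invokes Theorem~A.1(a) of \citet{SantAnna2020}, observing that $\widehat{ATT}_{\text{dr},\gc}(g,t)$ is a sum of three DR DiD components and that $\psi^{g,t}_{\gc}$ is the corresponding weighted sum of their influence functions, whereas you write out explicitly the Taylor expansion and identification of the Jacobian blocks $M^{g,t,1}$, $M^{g,t,2}$ that is \emph{inside} that cited theorem. For stage~(iii), the paper appeals to efficient-GMM theory from \citet{Newey_McFadden_1994_Handbook} (noting that the sandwich variance $(\mathbf{1}'W\mathbf{1})^{-1}\mathbf{1}'W\Omega_{g,t}W\mathbf{1}(\mathbf{1}'W\mathbf{1})^{-1}$ is minimized at $W=\Omega_{g,t}^{-1}$), while you give the equivalent direct Aitken/GLS orthogonality argument by decomposing $w = w_{\text{gmm}}^{\text{g,t}} + u$ with $\mathbf{1}'u=0$. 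Your version is more self-contained; the paper's buys brevity by outsourcing the mechanics to known references. Substantively the arguments coincide.
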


Theorem \ref{thm:CAN} provides the influence function for estimating each $ATT(g,t)$, using different comparison groups $\gc$, as well as establishes the consistency and asymptotic normality of our DR DDD estimator $\widehat{ATT}_{\text{dr},\gc}(g,t)$. Theorem \ref{thm:CAN} also highlights that combining different comparison groups as our DR DDD estimator $\widehat{ATT}_{\text{dr},\text{gmm}}(g,t)$ does is effective in terms of asymptotically improving precision. That is, Theorem \ref{thm:CAN} highlights that $\widehat{ATT}_{\text{dr},\text{gmm}}(g,t)$ is optimal in the sense that it asymptotically achieves the minimum variance across all weighted average estimators that combine multiple $\widehat{ATT}_{\text{dr},\gc}(g,t)$s. Importantly, Theorem \ref{thm:CAN} also highlights the doubly (or multiply) robust property of our DDD estimators: they recover the $ATT(g,t)$ provided that each of the three DR DiD estimators has a correctly specified outcome regression or generalized propensity score working model.

\begin{remark}\label{rem:simul_inf}
  Although Theorem \ref{thm:CAN} provides pointwise inference results for each $ATT(g,t)$, it is straightforward to extend it to hold simultaneously across multiple $ATT(g,t)$'s. For instance, by letting $\widehat{ATT}_{\text{gmm},t\ge g}$ and ${ATT}_{\text{gmm},t\ge g}$ denote the vector of $\widehat{ATT}_{\text{dr},\text{gmm}}(g,t)$ and ${ATT}_{\text{dr},\text{gmm}}(g,t)$, respectively, for all $g\in\mathcal{G}_{\text{trt}}$, $t\in \{2,\dots, T,\}$ such that $t\ge g$, it is straightforward to show that $\sqrt{n}\left(\widehat{ATT}_{\text{gmm},t\ge g} - {ATT}_{\text{gmm},t\ge g}\right) \overset{d}{\rightarrow}  N(0,\Omega)$, with $\Omega = \E[\psi^{t\ge g}_{\text{gmm}}(W_i; \kappa^{t\ge g}_{0}) \psi^{t\ge g}_{\text{gmm}}(W_i; \kappa^{t\ge g}_{0})']$, with $\psi^{t\ge g}_{\text{gmm}}(W_i; \kappa^{t\ge g}_{0})$ the asymptotic linear representation of  $\sqrt{n}\left(\widehat{ATT}_{\text{gmm},t\ge g} - {ATT}_{\text{gmm},t\ge g}\right)$. One can then construct simultaneous confidence bands using a simple-to-use multiplier bootstrap as discussed in Theorem 3 and Algorithm 1 of \citet{Callaway2021}. It is also straightforward to conduct cluster-robust inference; see Remark 10 of \citet{Callaway2021}. As these results are commonly accessible, we will not include them here to conserve space.
\end{remark}

\subsubsection{Asymptotic theory for event-study parameters}

In this section, we derive large sample properties for our event-study estimator $\widehat{ES}_{\text{dr},\text{gmm}}(e)$ as defined in \eqref{eqn:es_estimator}. Given that $\P_n(G=g|G+e \in [1,T])$ is an $\sqrt{n}$-consistent and asymptotically normal estimator of $\P(G=g|G+e \in [1,T])$, then for all $g \in \mathcal{G}_{\text{trt}}$, we have that
\begin{equation}
\label{alr_weights_es}
    \sqrt{n}(\P_n(G=g|G+e \in [1,T]) - \P(G=g|G+e \in [1,T])) = \frac{1}{\sqrt{n}} \sum_{i=1}^{n} \xi^{g,e}(W_i) + o_p(1),
\end{equation}
with $\E[\xi^{g,e}(W)] = 0$ and $\E[\xi^{g,e}(W) \xi^{g,e}(W)'] < \infty$ being positive definite, and 
\begin{equation*}
\xi^{g,e}(W) = \frac{1}{\P(G+e \in [1,T])} \cdot \bigg[ 1\{G=g, G+e \in [1,T]\} - \P(G=g| G+e \in [1,T]) \cdot 1\{G+e \in [1,T] \} \bigg]
\end{equation*} 
% \MOV{Confirm this, please.}

% {\fontsize{11pt}{10pt}\selectfont
% \begin{align*}
% \xi^{g,e}(W) =&~ \dfrac{1\{G=g, G+e \in [1,T]\} - \P(G=g, G+e \in [1,T])}{\P(G+e \in [1,T])} \\
% &~~~- \dfrac{\P(G=g| G+e \in [1,T]) \left( 1\{G+e \in [1,T]\} - \P(G+e \in [1,T])\right)}{\P(G+e \in [1,T])}\\
%  &= \dfrac{1\{G=g, G+e \in [1,T]\}}{\P(G+e \in [1,T])} - \dfrac{\P(G=g| G+e \in [1,T]) 1\{G+e \in [1,T]\} }{\P(G+e \in [1,T])}
% \end{align*} 
% }

The following corollary can be used to conduct asymptotically valid (pointwise) inference for the event-study type parameter $ES(e)$.

\begin{corollary}
\label{cor:asy_es}
    Under the assumptions of Theorem \ref{thm:CAN}, for each $e$ such that $\P(1 \leq G+e \leq T)$, as $n\rightarrow \infty$,
    \begin{align*}
       \sqrt{n}(\widehat{ES}_{\text{dr},\text{gmm}}(e) - ES(e)) &= \frac{1}{\sqrt{n}} \sum_{i=1}^{n} l^{es,e}_{\text{gmm}}(W_i) + o_p(1)\\
       & \overset{d}{\rightarrow} N(0, \E[l^{es,e}_{\text{gmm}}(W)^2]),
    \end{align*}
    with $l^{es,e}_{\text{gmm}}(W) = \sum_{g \in \mathcal{G}_{\text{trt}}} \Big( \P(G=g| G+e \in [1,T]) \cdot \dfrac{\textbf{1}'{\Omega}^{-1}_{g,t}}{\textbf{1}'{\Omega}^{-1}_{g,t}\textbf{1}} \psi^{g,t}(W_i; \kappa^{g,t}_{0}) + \xi^{g,e}(W_i) \cdot ATT(g,t) \Big)$.
\end{corollary}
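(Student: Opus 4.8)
The plan is to treat $\widehat{ES}_{\text{dr},\text{gmm}}(e)$ as a bilinear functional of two jointly asymptotically linear estimators---the cohort shares $\widehat{p}_g\equiv\P_n(G=g|G+e\in[1,T])$ and the optimally combined group-time estimators $\widehat{\theta}_g\equiv\widehat{ATT}_{\text{dr},\text{gmm}}(g,g+e)$---and then invoke a CLT for i.i.d.\ sums. First I would use Theorem \ref{thm:Identification} together with the definition in \eqref{eqn:ES} to write $ES(e)=\sum_{g\in\mathcal{G}_{\text{trt}}} p_g\,\theta_g$ with $p_g=\P(G=g|G+e\in[1,T])$ and $\theta_g=ATT(g,g+e)$, so the target decomposes cohort by cohort. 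The elementary identity $\widehat{p}_g\widehat{\theta}_g-p_g\theta_g = p_g(\widehat{\theta}_g-\theta_g)+\theta_g(\widehat{p}_g-p_g)+(\widehat{p}_g-p_g)(\widehat{\theta}_g-\theta_g)$, summed over the finite set $\mathcal{G}_{\text{trt}}$ and multiplied by $\sqrt{n}$, gives
\[
\sqrt{n}\big(\widehat{ES}_{\text{dr},\text{gmm}}(e)-ES(e)\big)=\sum_{g\in\mathcal{G}_{\text{trt}}}\Big(p_g\sqrt{n}(\widehat{\theta}_g-\theta_g)+\theta_g\sqrt{n}(\widehat{p}_g-p_g)\Big)+\sum_{g\in\mathcal{G}_{\text{trt}}}\sqrt{n}(\widehat{p}_g-p_g)(\widehat{\theta}_g-\theta_g).
\]

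Second, I would dispatch the remainder and substitute the linear expansions. By \eqref{alr_weights_es}, $\widehat{p}_g-p_g=O_p(n^{-1/2})$, and by Theorem \ref{thm:CAN} applied with $t=g+e$, $\widehat{\theta}_g-\theta_g=O_p(n^{-1/2})$; hence each cross term is $\sqrt{n}\cdot O_p(n^{-1/2})\cdot O_p(n^{-1/2})=o_p(1)$, and the whole second sum is $o_p(1)$ since $\mathcal{G}_{\text{trt}}$ is finite. For the leading term, Theorem \ref{thm:CAN} supplies $\sqrt{n}(\widehat{\theta}_g-\theta_g)=n^{-1/2}\sum_{i=1}^n \tfrac{\mathbf{1}'\Omega^{-1}_{g,t}}{\mathbf{1}'\Omega^{-1}_{g,t}\mathbf{1}}\psi^{g,t}(W_i;\kappa^{g,t}_{0})+o_p(1)$ with $t=g+e$---crucially, this representation already absorbs the first-step estimation of the nuisance parameters $\kappa^{g,t}$ and of the GMM weights $\widehat{\Omega}_{g,t}$ (the latter being first-order negligible by the envelope property behind \eqref{eqn:opt_weights})---while \eqref{alr_weights_es} gives $\sqrt{n}(\widehat{p}_g-p_g)=n^{-1/2}\sum_{i=1}^n\xi^{g,e}(W_i)+o_p(1)$. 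Because both sums run over the same sample index $i$, they combine into a single i.i.d.\ sum $n^{-1/2}\sum_{i=1}^n l^{es,e}_{\text{gmm}}(W_i)$ with $l^{es,e}_{\text{gmm}}$ exactly as stated.

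Third, I would check the hypotheses of the Lindeberg--L\'evy CLT: under Assumption \ref{ass:sampling_panel} the $W_i$ are i.i.d.; $\E[l^{es,e}_{\text{gmm}}(W)]=0$ because $\E[\psi^{g,t}(W;\kappa^{g,t}_0)]=0$ and $\E[\xi^{g,e}(W)]=0$; and $\E[l^{es,e}_{\text{gmm}}(W)^2]<\infty$ follows from finiteness of the second moments of $\psi^{g,t}$ (guaranteed by Assumptions \ref{ass:overlap_staggered}, \ref{ass:wm}, \ref{ass:ALR}, \ref{ass:integrability}, as already used in Theorem \ref{thm:CAN}) and of $\xi^{g,e}$ (from \eqref{alr_weights_es}), the finiteness of $\mathcal{G}_{\text{trt}}$, the boundedness of the $p_g$ and of $ATT(g,t)$, and the Cauchy--Schwarz inequality to control cross-moments. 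The CLT and Slutsky's theorem then yield $\sqrt{n}(\widehat{ES}_{\text{dr},\text{gmm}}(e)-ES(e))\overset{d}{\rightarrow}N(0,\E[l^{es,e}_{\text{gmm}}(W)^2])$.

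I expect the main obstacle to be organizational rather than technical: one must make sure the asymptotic linear representation from Theorem \ref{thm:CAN} is \emph{literally} the influence function that appears inside $l^{es,e}_{\text{gmm}}$---already written with the population weight $\mathbf{1}'\Omega^{-1}_{g,t}/(\mathbf{1}'\Omega^{-1}_{g,t}\mathbf{1})$ and already accounting for estimated nuisances and estimated GMM weights---so that the event-study aggregation introduces no further correction term. A secondary point needing a line of care is that the two expansions (for the cohort shares and for the $ATT(g,t)$'s) are correlated, so one cannot add variances: one must retain the combined influence function $l^{es,e}_{\text{gmm}}$ and compute its variance directly, which is precisely the form given in the statement.
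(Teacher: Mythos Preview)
Your proposal is correct and follows essentially the same route as the paper's proof: an add-and-subtract (bilinear) decomposition of $\widehat{p}_g\widehat{\theta}_g-p_g\theta_g$, negligibility of the cross term via the $O_p(n^{-1/2})$ rates, substitution of the asymptotic linear representations from Theorem~\ref{thm:CAN} and \eqref{alr_weights_es}, and an application of the Lindeberg--L\'evy CLT. You are somewhat more explicit than the paper in verifying the mean-zero and second-moment conditions for $l^{es,e}_{\text{gmm}}$, but the argument is otherwise identical.
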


The results in Corollary \ref{cor:asy_es} also apply to estimators of $ES(e)$ using $\widehat{ATT}_{\text{dr},g_c}(g,g+e)$ on \eqref{eqn:es_estimator}.  Corollary \ref{cor:asy_es} focuses on pointwise inference procedures. Still, as discussed in Remark \ref{rem:simul_inf}, it is straightforward to extend it to hold for all event-times $e$ and conduct simultaneous-based inference. The asymptotic results for our overall summary parameter $\widehat{ES}_{\text{avg},\text{gmm}}$ as defined in \eqref{eqn:overall_ATT_estimate} follow from the delta method and are omitted.

\section{Monte Carlo Simulations}\label{simulations}

In this section, we evaluate the finite sample properties of our proposed DR DDD estimators via Monte Carlo simulations. We examine two scenarios: (i) when covariates play a crucial role in identification across two time periods, and (ii) when there are multiple time periods with variation in treatment timings. For the first scenario, we have panel data for two time periods, $t=1,2$, four covariates, two enabling-groups $S \in \mathcal{S}_{des-1} \equiv \{2, \infty\}$, and there are two eligibility groups: $Q=1$ and $Q=0$. In the setup with staggered adoption, we consider the simplest possible case with three time periods, $t=1,2,3$, with $S \in \mathcal{S}_{des-2} \equiv \{2,3,\infty\}$, and we abstract from covariates in the main text. We relegate simulation results with DDD staggered adoption with covariates to the Supplemental Appendix. In the main text, we compare the performance of different DDD estimators via graphs: one that presents the density of the point estimates across the 1,000 Monte Carlo repetitions, and one that presents the length of confidence intervals in each Monte Carlo draw. In the Supplemental Appendix, we also report the traditional summary statistics for the Monte Carlo involving average bias, root mean square error (RMSE), empirical 95\% coverage probability, and the average length of a 95\% confidence interval under 1,000 Monte Carlo repetitions. Light-gray confidence intervals mean that they do not contain the true parameter of interest, $ATT(2,2)$ in our simulations, and are appropriately colored when they contain it. We focus on results with $n=5,000$ but report results for different sample sizes in the Supplemental Appendix \ref{appendix:simulations}.

\subsection{Simulations for DDD with two periods and covariates}
\label{sec:sims_2_periods}

We describe the data-generating process (DGP) for the 2-period DDD setup. For a generic four-dimensional vector $O$, the conditional probability of each unit belonging to a subgroup $(g,q) \in \{2,\infty\} \times \{0,1\}$ is
\begin{equation}
\label{ps_score_sims}
    \P[S = g, Q=q | O] \equiv p^{S=g, Q=q}(O) =  \dfrac{\exp(f^{ps}_{S=g,Q=q}(O))}{\sum_{(g,q) \in \mathcal{S}_{\text{des-1}} \times \{0,1\}}\exp(f^{ps}_{S=g,Q=q}(O))},
\end{equation}
where $f^{ps}_{S=g,Q=q}(O))$ is a linear index with heterogeneous coefficients across sub-groups; we defined these in the Supplemental Appendix \ref{appendix:sims_2_periods} to save space. Of course, each unit belongs to a single subgroup, and we assigned these subgroups as follows:  
\begin{equation}
    \label{eq_appen:assignment_process}
    (S,Q) := \begin{cases}
(\infty,0), & \text { if } U \leqslant p^{S=\infty, Q=0}(O), \\ 
(\infty,1), & \text { if } p^{S=\infty, Q=0}(O) < U \leq \sum_{j=0}^1 p^{S=\infty, Q=j}(O), \\ 
(2,0), & \text { if } \sum_{j=0}^1 p^{S=\infty, Q=j}(O) < U \leq \sum_{j=0}^1 p^{S=\infty, Q=j}(O) + p^{S=2, Q=0}(O), \\
(2,1), & \text { if }  \sum_{j=0}^1 p^{S=\infty, Q=j}(O) + p^{S=2, Q=0}(O) < U,\end{cases}
\end{equation}
with $U$ being a uniform random variable in $[0,1]$, independent of all other variables. 

The potential outcomes are defined as 
\begin{align}
    Y_{i,1}(\infty) &= f^{reg}(O_{i}, S_i) + \nu_i(O_i, S_i, Q_i) + \varepsilon_{i,1}(\infty) \nonumber\\
    Y_{i,2}(\infty) &= 2 f^{reg}(O_{i}, S_i)+ \nu_i(O_i, S_i, Q_i) + \varepsilon_{i,2}(\infty) \label{eqn:PO_2_periods}\\
    Y_{i,2}(2) &= 2 f^{reg}(O_{i}, S_i)+ \nu_i(O_i, S_i, Q_i) + \varepsilon_{i,2}(2),\nonumber
\end{align}  
where $f^{reg}(O_{i}, S_i)$ is a linear regression specification with heterogeneous coefficients across the enabling groups $S$, $\nu_i(O_i, S_i, Q_i)$ is a time-invariant unobserved heterogeneity correlated with covariates and sub-groups, and $\varepsilon_{i,1}(\infty), \varepsilon_{i,2}(\infty)$ and $\varepsilon_{i,2}(2)$ are independent standard normal random variables; we provide a precise definition of $f^{reg}(O_{i}, S_i)$ and $\nu_i(O_i, S_i, Q_i)$ in the Supplemental Appendix \ref{appendix:sims_2_periods}. Note that our designs' $ATT(2,2)$ equals zero, though there is treatment effect heterogeneity across units. We observe untreated outcomes for all units in period $t=1$; in period $t=2$, we observed $Y_{i,2}(2)$ if unit $i$ belongs to group $S=2$, $Q=1$, and observe $Y_{i,2}(\infty)$ otherwise.

Building on \cite{kang_schafer_2007} and \citet{SantAnna2020}, we allow propensity score and/or outcome regression models to be misspecified. We consider four different types of DGP: DGP 1, where all models are correctly specified; DGP 2, where outcome models are correctly specified but the propensity score model is misspecified; DGP 3, where the propensity score is correctly specified but outcome regressions are not; and DGP 4, where all models are misspecified. The source of misspecification in these nuisance models is related to whether they depend on $X$ or $Z$, where $X$ is a nonlinear transformation of all the $Z$'s. In our simulations, the observed data is $W_i = \{Y_{i,1}, Y_{i,2}, S_{i}, Q_{i}, X_{i}\}_{i=1}^{n}$, so using $Z$ as linear covariates in these nuisance models lead to working model misspecification; we relegate to the Supplemental Appendix \ref{appendix:sims_2_periods} the definition of $X$'s and $Z$'s. 

We compare the performance of four different estimators for $ATT(2,2)$, just like in Section \ref{sec:ddd_2periods_x}: our DR DDD estimator as defined in \eqref{eqn:DRDDD_2periods} (we label it as DRDDD), 3WFE OLS estimator for $\tilde{\beta}_{\text{3wfe}}$ based on \eqref{eqn:3WFE_with_X} (we label it as 3WFE), 3WFE OLS estimator of $\check{\beta}_{\text{3wfe}}$ based on \eqref{eqn:3WFE_with_X_Mundlak} (we label it as M-3WFE), and the difference of two \citet{SantAnna2020}' DR DiD estimators (we label it as DRDID-DIF). We summarize the results of our simulations in Figure \ref{fig:Sims_2_periods_MC}, where we consider a sample size $n=5,000$ and conducted 1,000 Monte Carlo repetitions. The left panels display the density of the point estimates across all Monte Carlo draws, and the right panels display the 95\% confidence intervals for each Monte Carlo draw. See Table \ref{mc_sim_att22} in the Supplemental Appendix for additional results.

\begin{figure}[htp]
  \centering

  %────────── DGP 1 ──────────
  \begin{subfigure}[t]{0.9\textwidth}
  
    \centering
   % \begin{minipage}[t]{\textwidth}
      \includegraphics[width=\linewidth]{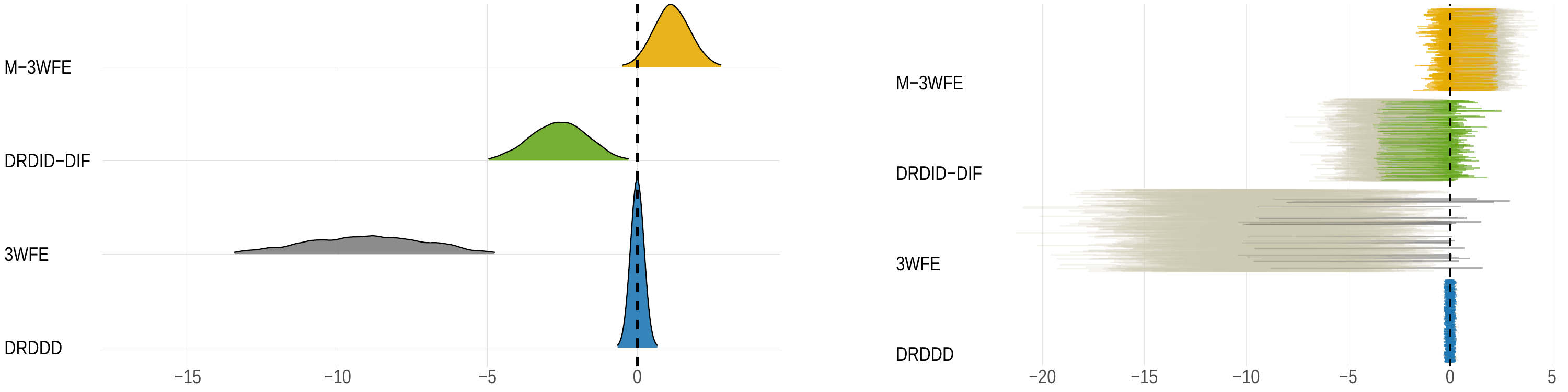}
      \caption{DGP 1: All working models are correctly specified}
  %  \end{minipage}
    % \hfill
    % \begin{minipage}[t]{.48\textwidth}
    %   \includegraphics[width=\linewidth]{}
    % \end{minipage}
    
  \end{subfigure}

  %────────── DGP 2 ──────────
  \begin{subfigure}[t]{0.9\textwidth}
 
    \centering
  %  \begin{minipage}[t]{\textwidth}
      \includegraphics[width=\linewidth]{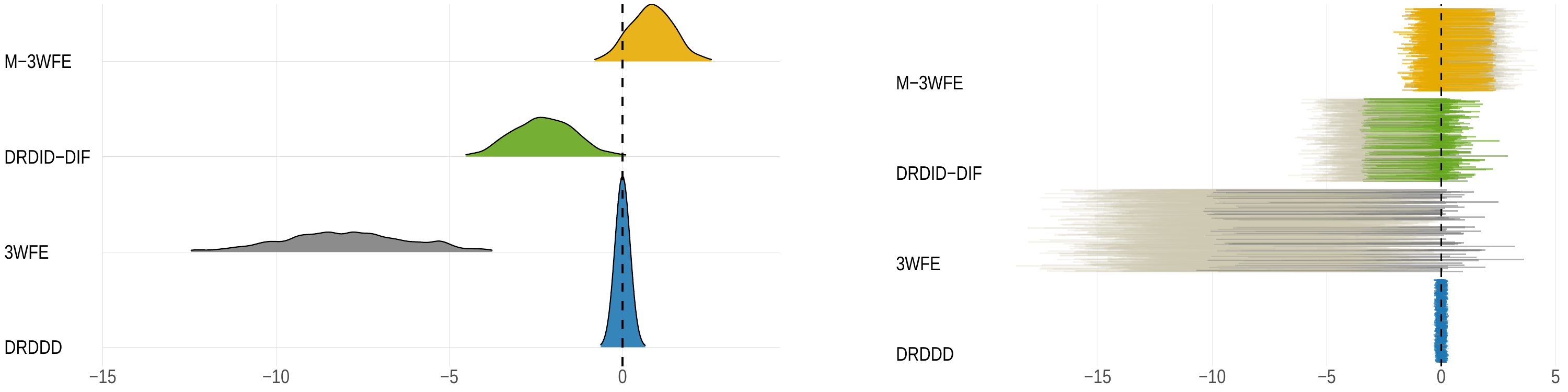}
       \caption{DGP 2: Outcome working models are correctly specified}
   % \end{minipage}
    % \hfill
    % \begin{minipage}[t]{.48\textwidth}
    %   \includegraphics[width=\linewidth]{}
    % \end{minipage}
    
  \end{subfigure}

  %────────── DGP 3 ──────────
  \begin{subfigure}[t]{0.9\textwidth}
  
    \centering
   % \begin{minipage}[t]{\textwidth}
      \includegraphics[width=\linewidth]{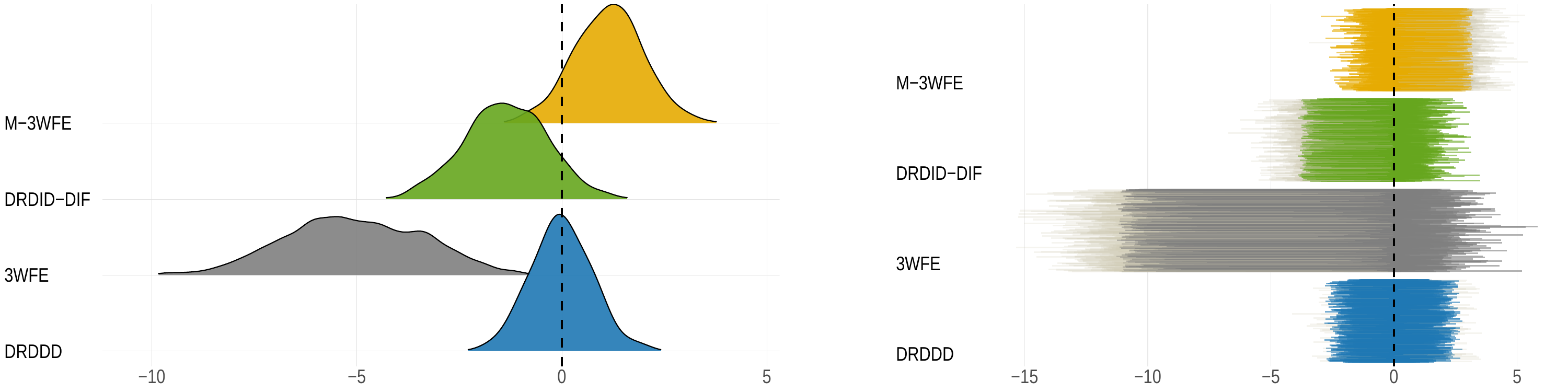}
      \caption{DGP 3: Propensity-score working models are correctly correct}
  %  \end{minipage}
    % \hfill
    % \begin{minipage}[t]{.48\textwidth}
    %   \includegraphics[width=\linewidth]{}
    % \end{minipage}
    
  \end{subfigure}

  %────────── DGP 4 ──────────
  \begin{subfigure}[t]{0.9\textwidth}
  
    \centering
  %  \begin{minipage}[t]{\textwidth}
      \includegraphics[width=\linewidth]{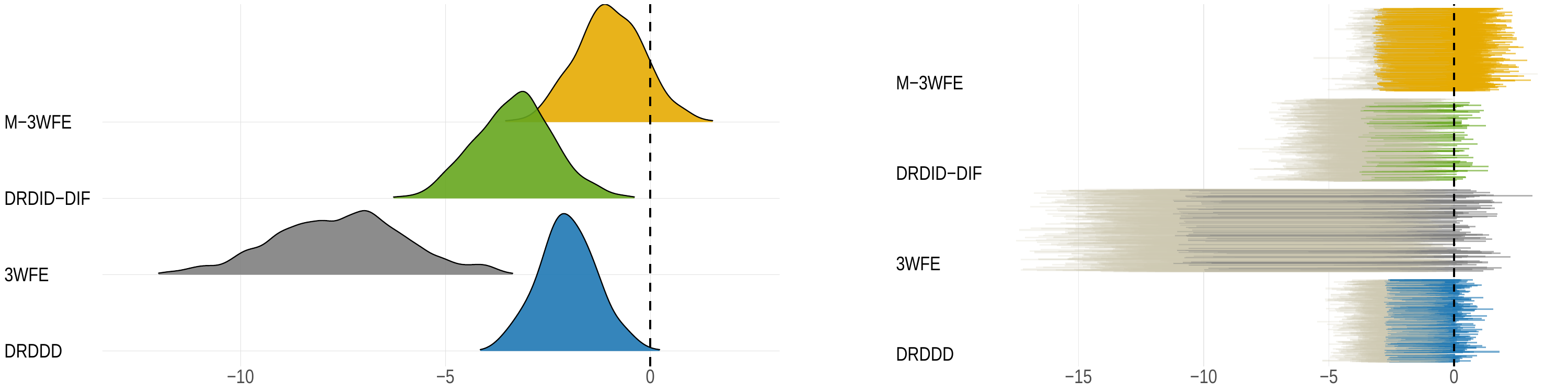}
      \caption{DGP 4: All working models are misspecified}
  %  \end{minipage}
    % \hfill
    % \begin{minipage}[t]{.48\textwidth}
    %   \includegraphics[width=\linewidth]{}
    % \end{minipage}
    
  \end{subfigure}

    \caption{Monte Carlo Simulation Results for DDD: two-period setup with covariates}
    \label{fig:Sims_2_periods_MC}
        \justifying
	\noindent \scriptsize{Notes: Simulation designs as discussed in text, with $n=5,000$ and $1,000$ Monte Carlo repetitions. True $ATT(2,2)$ is zero and is indicated in the solid vertical line in all panels. 3WFE corresponds to the OLS estimates of $\tilde{\beta}_{\text{3wfe}}$ based on \eqref{eqn:3WFE_with_X}. M-3WFE corresponds to the OLS estimates of $\check{\beta}_{\text{3wfe}}$ based on \eqref{eqn:3WFE_with_X_Mundlak}. DRDID-DIF corresponds to the difference between two doubly robust DiD estimators \citep{SantAnna2020}. DRDDD corresponds to our proposed doubly robust DDD estimator described in \eqref{eqn:DRDDD_2periods}. All densities (left) and confidence intervals (right) are computed across all simulation draws. Light grey areas in the right plots indicate confidence intervals that exclude the true $ATT(2,2)$, where increased prominence suggests lower empirical coverage.
 }
\end{figure}

The results in Figure \ref{fig:Sims_2_periods_MC} are self-explanatory and firmly support our theoretical results. When outcome regression or propensity score models (but not necessarily both) are correctly specified, our DR DDD estimators are appropriately centered (so they are unbiased), their confidence intervals are the narrowest across all other estimators, and they still have appropriate coverage across the first three DGPs (94.4\%, 94.5\%, and 94.6\%, respectively). For instance, when all working models are correctly specified, the average length of the 95\% confidence interval of the M-3WFE estimator is 7 times longer than our DR DDD estimator; this difference is much larger for the other considered estimators. In fact, the performance of all other estimators in all our considered DGPs is poor, as they have non-negligible bias, high RMSE, and poor coverage properties. In DGP 4, when all working models are misspecified, we note that our estimator is biased, directly affecting the confidence intervals' coverage probabilities. None of the considered DDD estimators perform well when all working models are misspecified (DGP 4), highlighting that all estimators indeed depend on modeling assumptions.

\subsection{Simulations for DDD with variation in treatment timing}
\label{sims_staggered}
We now discuss the staggered DDD setup with three time periods, $t=1,2,3$, three enabling groups, $S \in  \{2,3,\infty\} $, and two eligibility groups $Q \in \{0,1\}$. We abstract from covariates and defer a discussion about them to the Supplemental Appendix. 

Each unit $i$, we have that $p^{S=2,Q=0}=0.20$, $p^{S=2,Q=1}=0.15$, $p^{S=3,Q=0}=0.30$, $p^{S=3,Q=1}=0.20$, $p^{S=\infty,Q=0}=0.05$, and $p^{S=\infty,Q=1}=0.10$. We then randomly assign the realized value of $(S,Q)$ based on the above distribution. The potential outcomes are generated as 
\begin{align}
    Y_{i,1}(\infty) &= (1 + Q_i)\alpha +  \nu_i(S_i, Q_i) + \varepsilon_{i,1}(\infty) \nonumber\\
    Y_{i,2}(\infty) &=  (2 + Q_i) \alpha  + 1.1 \nu_i(S_i, Q_i) + \varepsilon_{i,2}(\infty) \nonumber\\
    Y_{i,3}(\infty) &=  (3 + Q_i) \alpha  + 1.2 \nu_i(S_i, Q_i) + \varepsilon_{i,3}(\infty) \nonumber\\
    Y_{i,2}(2) &=  (2 + Q_i)  \alpha  + 1.1 \nu_i(S_i, Q_i)  + ATT(2,2) Q_i +  \varepsilon_{i,2}(2) \label{eqn:PO_staggered}\\
    Y_{i,3}(2) &=  (3 + Q_i)  \alpha  + 1.2 \nu_i(S_i, Q_i) +  ATT(2,3) Q_i + \varepsilon_{i,3}(2)\nonumber\\
    Y_{i,3}(3) &=  (3 + Q_i)  \alpha  +  1.2 \nu_i(S_i, Q_i) + ATT(3,3) Q_i + \varepsilon_{i,3}(3)\nonumber,
\end{align}  
where we set $\alpha = 278.5$, $ATT(2,2)=10$, $ATT(2,3)=20$, $ATT(3,3)=25$, all $\varepsilon_{i,t}(\cdot)$ are independent standard normal, and $\nu_i(S_i, Q_i)$ is an unobserved heterogeneity term that we formally define in the Supplemental Appendix \ref{appendix:sims_staggered}. The observed data is represented as $W_i = \{Y_{i, 1}, Y_{i,2}, Y_{i,3}, S_{i}, Q_{i}\}_{i=1}^{n}$, where the $Y_{i,1} = Y_{i,1}(\infty)$,  $Y_{i,2} = Y_{i,2}(2)$ for units with $S_i=2,Q_i=1$, and $Y_{i,2} = Y_{i,2}(\infty)$ otherwise, and $Y_{i,3} = Y_{i,3}(3)$ for units with $S_i=3,Q_i=1$, $Y_{i,3} = Y_{i,3}(2)$ for units with $S_i=2,Q_i=1$, and $Y_{i,3} = Y_{i,3}(\infty)$ for all other units. In what follows, we focus our attention on estimators for $ATT(2,2)$, though we compare estimators for $ATT(2,3)$ and $ATT(3,3)$ in the Supplemental Appendix \ref{appendix:sims_staggered}. We consider $n=5,000$ here and refer to Supplemental Appendix \ref{appendix:sims_staggered} for information on other sample sizes.

We compare the performance of three staggered DDD estimators for the $ATT(2,2)$ as in Section \ref{sec:ddd_stag_noX}. More precisely, we consider our optimally GMM-weighted estimator $\widehat{ATT}_{\text{gmm}}(g,t)$ as defined in \eqref{eqn:att_gt_gmm}, which is a special case of the \eqref{eqn:DRDDD_optimal} without covariates, i.e., with $X_i=1$ for all units; we refer to this estimator as $DDD_{\text{gmm}}$. We also consider our DDD estimator that uses never-treated units as the comparison group as defined in \eqref{eqn:att_gt_one-at-time} with $ g_{\text{c}} = \infty$; we refer to this estimator as $DDD_{\text{nev}}$.\footnote{This estimator is a special case of \eqref{eqn:DRDDD_stagg_estimated} when covariates are trivial}. Finally, we consider a DDD estimator that pools all not-yet-treated units a la \citet{Callaway2021}, $\widehat{ATT}_{\text{cs-nyt}}(g,t)$, as defined in \eqref{eqn:att_gt_pooled}; we refer to this estimator as $DDD_{\text{cs-nyt}}$. Our theoretical results indicated that $DDD_{\text{gmm}}$ and $DDD_{\text{nev}}$ should both be valid DDD estimators under our identification assumptions. At the same time, we have no statistical guarantee about the performance of $DDD_{\text{cs-nyt}}$. Our theoretical results also suggest that $DDD_{\text{gmm}}$ should be more precisely estimated than $DDD_{\text{nev}}$, as it uses more information.

\begin{figure}[htp]
\centering

  %────────── top ──────────
  \begin{subfigure}[t]{\textwidth}
 
    \centering
  %  \begin{minipage}[t]{\textwidth}
      \includegraphics[width=\linewidth]{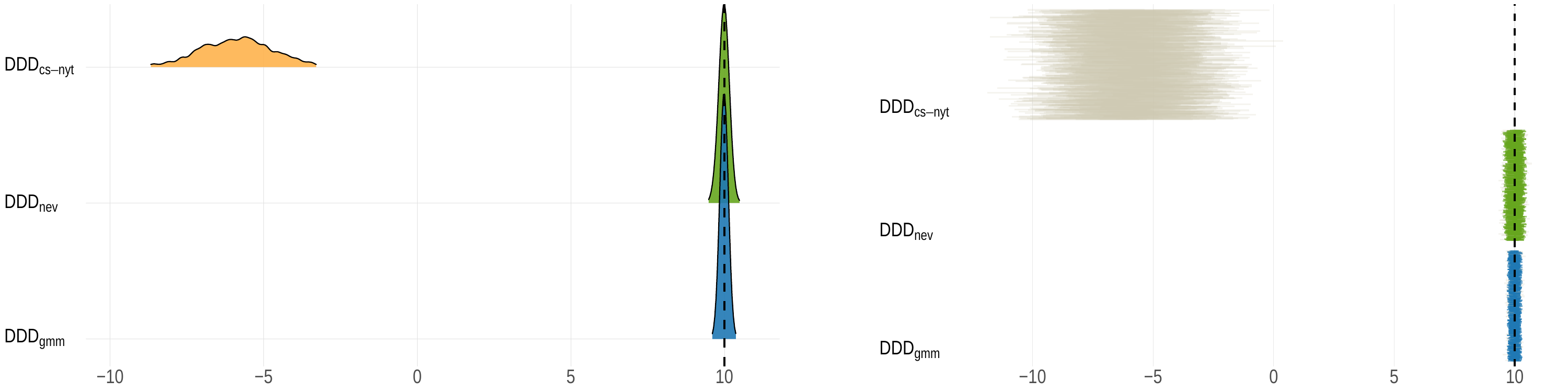}
  %  \end{minipage}
   \caption{Comparison across the three DDD estimators}
  \end{subfigure}

  %────────── bottom ──────────
  \begin{subfigure}[t]{\textwidth}
 
    \centering
    %\begin{minipage}[t]{\textwidth}
      \includegraphics[width=\linewidth]{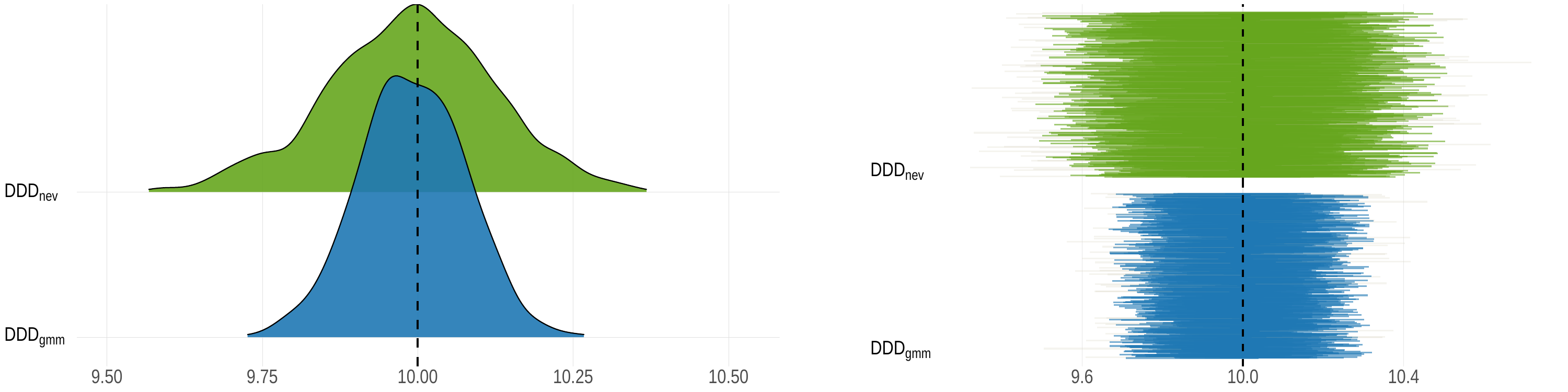}
       \caption{Comparing the performance of $DDD_{\text{{nev}}}$ and $DDD_{\text{{gmm}}}$}
   % \end{minipage}
  \end{subfigure}
  
  \protect\caption[position=bottom]{Monte Carlo Simulation Results for DDD: staggered setup without covariates}
    \label{fig:Sims_staggered_MC}
  \justifying
  \noindent \scriptsize{Notes: Simulation designs are detailed in the text, using $n=5,000$ and 1,000 Monte Carlo repetitions. The true $ATT(2,2)$ is 10, marked by a solid vertical line in all panels. $DDD_{\text{{nev}}}$ denotes our DDD estimator with $g_c = \infty$ from Equation \eqref{eqn:DRDDD_stagg}. $DDD_{\text{{gmm}}}$ is our proposed DDD estimator from Equation \eqref{eqn:DRDDD_optimal}. $DDD_{cs-nyt}$ is the estimator pooling all not-yet-treated units as defined in \eqref{eqn:att_gt_pooled}. In the top-left panel, we plot the densities of the $DDD_{\text{{nev}}}$, $DDD_{\text{{gmm}}}$, and $DDD_{cs-nyt}$ estimates across all simulation draws. The bottom-left panel zooms into the densities for $DDD_{\text{{nev}}}$ and $DDD_{\text{{gmm}}}$. The top-right panel shows the confidence intervals for these estimators, while the bottom-right panel focuses on $DDD_{\text{{nev}}}$ and $DDD_{\text{{gmm}}}$ only. Light grey areas in the right plots indicate confidence intervals that exclude the true $ATT(2,2)$, where increased prominence suggests lower empirical coverage.
 }
\end{figure}

Figure \ref{fig:Sims_staggered_MC} summarizes our simulation results; see also Table \ref{app:mc_sim_attgt} in the Supplemental Appendix. Panel (a) of Figure \ref{fig:Sims_staggered_MC} compares the performance of the three estimators. As it is evident, $DDD_{\text{cs-nyt}}$ is severely biased in our design, and its confidence interval never covered the true $ATT(2,2)$ in the Monte Carlo draws. This result further highlights that proceeding as if DDD is just the difference of two DiD procedures can lead to misleading conclusions. At the same time, the simulations highlight that  $DDD_{\text{{nev}}}$ and $DDD_{\text{{gmm}}}$ are unbiased and have good coverage properties, $93.2\%$ and $95\%$, respectively. In Panel (b), we drop $DDD_{\text{cs-nyt}}$ and focus on comparing the performance of our two proposed DDD estimators. As the plot makes it clear, the gains in efficiency of using all not-yet-treated comparison groups as in $DDD_{\text{{gmm}}}$ are notable. The average length of the confidence intervals of $DDD_{\text{{nev}}}$ is more than $50\%$  higher than that of $DDD_{\text{{gmm}}}$. Thus, in practice, we recommend favoring $DDD_{\text{{gmm}}}$, especially when the sample size of the never-enabling comparison group is low.

\section{Empirical Illustrations}
\label{application}

In this section, we revisit (i) \cite{cai_insurance_2016} and analyze the effect of agriculture insurance provision on household financial decisions, (ii) \citet{carbon_pricing} and analyze the effect of the emission trading scheme implemented in China on carbon emissions, and (iii) \citet{hansen_national_2023} and assess the impact of genetically modified crop adoption on countrywide yields. These three studies examine DDD setups with a single treatment date and covariates, as well as staggered DDD setups with and without covariates. The aim is to illustrate how our proposed estimator performs compared to the methods used in these original studies. Overall, we find that our DR DDD procedure can (a) produce substantial gains in precision, with 3WFE 95\% confidence intervals being up to 115\% wider than ours, (b) lead to different conclusions about the effect of a policy, or (c) provide more substantial evidence supporting the effectiveness of a policy.

\subsection{Effect of insurance provision on financial decisions}
% DDD with single treatment date and covariates:

% context
\cite{cai_insurance_2016} analyzes the effects of agricultural insurance programs implemented in rural Chinese households on their production, borrowing, and saving behaviors. These rural areas are particularly vulnerable to unique weather shocks, resulting in income volatility for households engaged in agriculture. To address this, the People's Insurance Company of China (PICC) launched the first weather-indexed crop-insurance program in 2003 for tobacco farmers in selected counties of Jiangxi province.  The contracts were compulsory for tobacco growers in the treatment regions, allowing \cite{cai_insurance_2016} to take advantage of the variation in insurance provision across regions and household types. By design, the policy aimed to simultaneously protect farm income and sustain county fiscal revenues that heavily depend on tobacco taxes.  

% setting
The insurance policy rollout introduces three sources of variation: the timing of insurance provision (pre-policy, February 2000, and post-policy, August 2003), geographic regions (counties that enabled policy vs.  those that did not), and household eligibility conditions (tobacco-specialized vs. other farm or off-farm households). This setup enables us to allow for industry-specific and region-specific trends. \cite{cai_insurance_2016} utilizes an administrative panel from the Rural Credit Cooperative (RCC), encompassing data from over 5,000 households annually surveyed between 2000 and 2008. This dataset merges financial and savings records with RCC survey details on demographics, land allocations, and crop revenues. It includes more than 3,000 tobacco-growing households (approximately 1,200 of which are located in treated counties) and 2,200 non-tobacco households. The study evaluates outcomes such as the area of tobacco production, loan size, loan limits, interest rates, savings rates, and flexible-term savings\footnote{Although the original study analyzed outcomes related to production, borrowing, and saving behaviors, the DDD design specifically estimates the impact of insurance provision on borrowing and saving behavior, accounting for region-specific effects.}.

% empirical strategy 
We adopt a similar approach as in \cite{cai_insurance_2016} by employing a DDD design to evaluate how insurance provision affects household saving behaviors. \cite{cai_insurance_2016} considered the following 3WFE event study specification\footnote{See, e.g., Equations (4) and (6) in \cite{cai_insurance_2016}.} to estimate dynamic average treatment effects of insurance provision on household saving behavior
\begin{equation}
\label{model_es_insurance}
Y_{i,t} = \gamma_i + \gamma_{r,t} + \gamma_{j,t} + \sum_{e \neq-1} \beta_{e}  \mathbf{1}\{E_{i,t}=e\} +  X'_{i,j,r}\theta + u_{i,t}, 
\end{equation}
where where $i,j,r,t$ are household, sector, region, and years indices, respectively, $\gamma_i, \gamma_{r,t}, \gamma_{j,t}$ denote household fixed effects, region-by-time fixed effect and industry-by-time fixed effect, respectively, $E_{i,t} = t - G_i$ is the time relative to the period at which household $i$ has received the insurance\footnote{In this setup, household $i$ is considered to have received the treatment if it is located in a treated region and qualifies as a tobacco household.} and $u_{i,t}$ is an idiosyncratic error term. Covariates in $X_{i,j,r}$ include household size, head of household age, and education level. Regarding outcome variables $Y_{i,t}$ for measuring household saving decisions, \cite{cai_insurance_2016} examines net saving, defined as the annual increase in total savings; saving rate, defined as the ratio of net saving to current household income; and flexible-term saving, which is the ratio of net savings in checking accounts to the total net savings. We compare the estimates of $\beta_{e}$ with those of $ES(e)$, as specified in \eqref{eqn:ES}, using our DR procedure, using the same set of covariates. In our setup, $S$ represents the geographic region that enables or not the insurance policy, and $Q$ represents the household eligibility condition---i.e., tobacco-specialized or not.

% regression specification
% \begin{align}
% \label{model_3wfe_insurance}
% Y_{i,t} &=  \alpha_0+\alpha_1 \text {After}_t+\alpha_2 \text {Insurance}_{i,r}+\alpha_3 \text {Tobacco}_{i,j} +\alpha_4 \text {After}_t \times \text {Insurance}_{i,r} \nonumber \\
%  &+\alpha_5 \text {After}_t \times \text {Tobacco}_{i,j} +\alpha_6{\text {Tobacco}_{i,j}} \times \text {Insurance}_{i,r} \nonumber \\
%  &+\alpha_7 \text {After}_t \times \text {Insurance}_{i,r} \times \text {Tobacco}_{i,j} + X'_{i,j,r}\theta+ \epsilon_{i,t}, 
% \end{align}
% where $i,j,r,t$ are household, sector, region, and years indices, respectively. $\text{Insurance}_{i,r}$ is a dummy that takes value one for counties that rollout the insurance provision and zero otherwise; $\text{After}_t$ is a dummy variable equal to zero for years 2000-2002 and one for years 2003-2008; and $\text {Tobacco}_{i,j}$ is a dummy variable equal to one for tobacco households and zero otherwise. Covariates in $X_{i,j,r}$ include household size, head of household age, and education level. Having data from both before and after the treatments allows us to conduct an event study to assess the policy's effects immediately after implementation and over the following years.

\begin{figure}[pht]
  \centering
  %────────── Outcome 1 ──────────
  \begin{subfigure}[t]{0.8\textwidth}
    \centering
      \includegraphics[width=\linewidth]{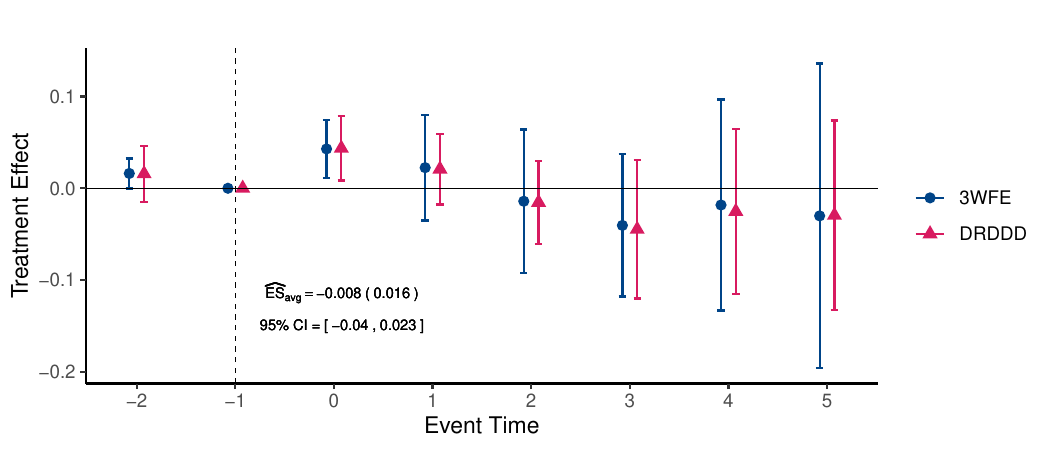}
                \caption{Panel A: Insurance Provision on Net saving}
  \end{subfigure}

  %────────── Outcome 2 ──────────
  \begin{subfigure}[t]{0.8\textwidth}
    \centering
      \includegraphics[width=\linewidth]{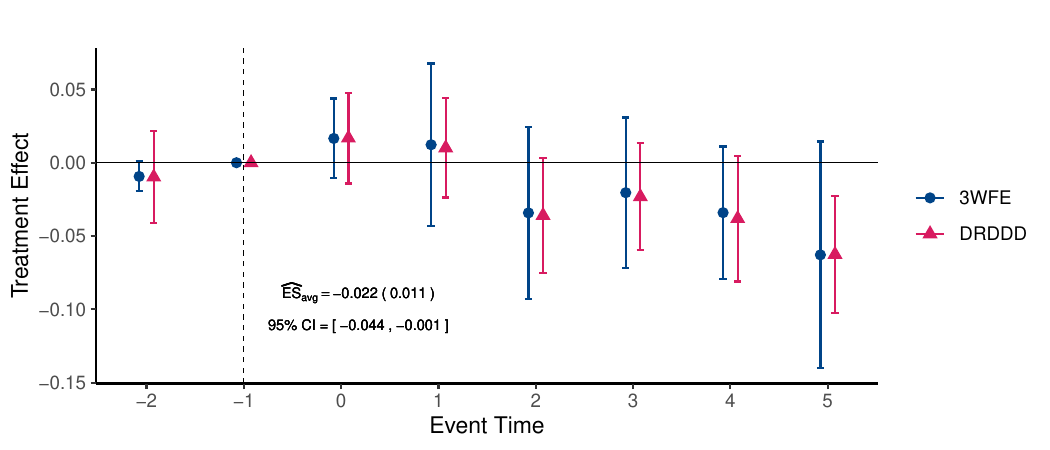}
               \caption{Panel B: Insurance Provision on Saving rate}
  \end{subfigure}

  %────────── Outcome 3 ──────────
  \begin{subfigure}[t]{0.8\textwidth}
    \centering
      \includegraphics[width=\linewidth]{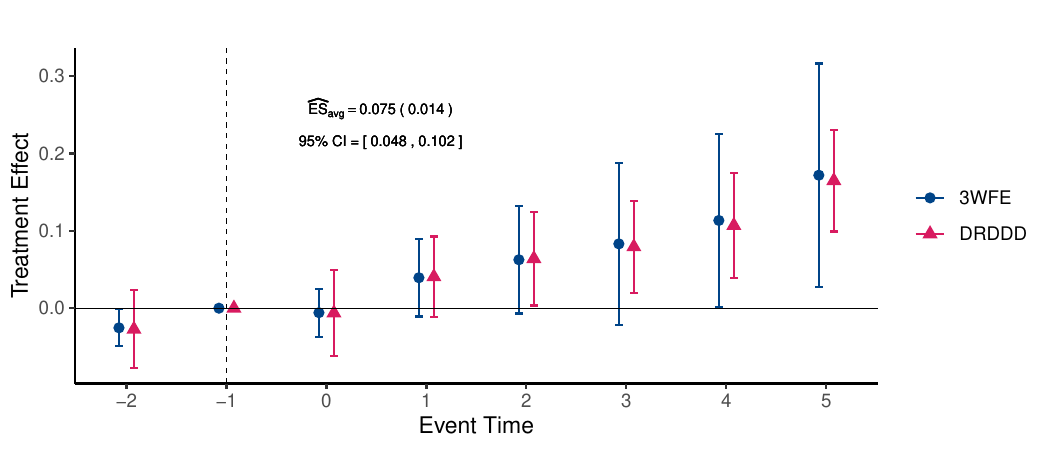}
                \caption{Panel C: Insurance Provision on Flexible-term saving}
  \end{subfigure}

    \caption{Impact of Insurance Provision on Savings Decisions.}
    \label{fig:tobacco_insurance}
        \justifying
	\noindent \scriptsize{Notes: This figure presents event study estimates with covariates including household size and head of household age. Panel A uses Net Saving as outcome, Panel B focuses on the Saving Rate, and Panel C examines Flexible-term Saving. In each panel, blue dots represent the point estimates from the specification in \eqref{model_es_insurance}, while red triangles show the $\widehat{ES}_{dr, \gc}(e)$ estimates aggregated from our proposed DR estimator as in \eqref{eqn:es_estimator}. The average $\widehat{ES}_{\text{avg}}$, calculated from all $\widehat{ES}_{dr, \gc}(e)$ values to the right of the vertical dashed line at \(e = -1\), is provided alongside its bootstrapped standard errors, based on 999 repetitions, and confidence interval. The corresponding blue and red vertical lines represent the 95\% confidence intervals.
 }
\end{figure}
% results
Figure \ref{fig:tobacco_insurance} plots the event study estimates for the 2003 weather insurance policy. Both the 3WFE and our DR DDD procedure yield similar results in terms of magnitude. However, our estimator offers a precision advantage in this particular context, with 3WFE 95\% confidence intervals being up to 1.15 times wider than the one associated with our proposed DR DDD estimator. In Panel A of Figure \ref{fig:tobacco_insurance}, we observe an insignificant effect on net saving, in line with the results reported in \cite{cai_insurance_2016}. In Panel B, we observe a negative effect on the saving rate, consistent with previous findings by \cite{cai_insurance_2016}. However, unlike earlier results, which indicated an insignificant negative effect, our proposed estimator reveals a small yet statistically significant negative effect at the 95\%  confidence level. This suggests a change in savings behavior after obtaining insurance. The findings in Panel C, consistent in magnitude with those from \cite{cai_insurance_2016}, demonstrate that weather insurance significantly increases the proportion of flexible-term savings. This suggests that households prefer to shift their savings to more flexible instruments after acquiring insurance. Visually, results in Panel C affirm the previous findings in \cite{cai_insurance_2016}, though we emphasize the enhanced precision our method provides. For example, the confidence intervals associated with $\widehat{\beta}_e$ are 16.29\%, 85.53\%, 64.21\%, and 115.84\% wider than the confidence intervals associated with $\widehat{ES}(e)$ at event times $e=2,3,4,5$, respectively. This illustrates that our DDD procedure can provide much more powerful analysis than alternative procedures.

% DDD with single treatment date and covariates
\subsection{Effect of emission trading scheme on carbon emissions}

% context
We next revisit the study by \cite{carbon_pricing}, which examines the impact of the emission trading scheme (ETS) implemented by the Chinese government to regulate and reduce carbon emissions at minimal cost, focusing on its effect in promoting low-carbon innovation among firms. China tackled its dual objectives of economic growth and reducing carbon emissions by approving seven regional ETS pilot programs in 2011. These pilot programs spanned four major cities, two provinces, and one special economic zone, each developing their own allowance and enforcement regulations within national guidelines, to peak national carbon emissions by around 2030. 

% setting
\cite{carbon_pricing} leverages three sources of variation. First, it considers firms' patent applications before and after the introduction of ETS pilots. Second, the pilots were launched in different regions whereas some regions are not impacted by the introduction of an ETS. Third, the pilots covered various manufacturing sectors within these regions such as power and heating, chemical, cement etc. Therefore, the authors utilize these variations in ETS pilots over time, across sectors, and across regions, employing a DDD approach to identify the effects of ETS on low-carbon innovation.

This setup exemplifies a DDD model with variation in treatment timing and covariates. \cite{carbon_pricing} uses the year of announcement (i.e., 2011) rather than the actual launch year as the post-policy indicator. The regional carbon market pilots were phased in starting from 2013, with initial launches in Shenzhen, Shanghai, Beijing, Guangdong, and Tianjin, followed by the introduction in Hubei and Chongqing the subsequent year. Thus, we adjust the construction of the treatment variable to use the official launch date instead of the announcement date, allowing for staggered treatment adoption.\footnote{We also considered the original specification using treatment announcement as the definition of the treatment. In such a case, the results obtained using our DR DDD procedure and those in \cite{carbon_pricing} are very close; these results are available upon request. Thus, we stress that the empirical results in this section should be interpreted as an extension of \cite{carbon_pricing}, and not a ``replication''.}

% data
The dataset utilized in \cite{carbon_pricing} includes information on publicly listed firms in China from 2003 to 2015, sourced from the Shanghai and Shenzhen stock exchanges. This dataset comprises both financial data, obtained through the China Stock Market and Accounting Research (CSMAR), and patent application details, collected from China's State Intellectual Property Office (SIPO). The dataset includes $18,937$ firm-year observations from $1,956$ firms spanning the years 2003 to 2015. The three outcomes evaluated by \cite{carbon_pricing} are the number of low-carbon patent applications, the number of patent applications in other non-low-carbon technologies, and the ratio of low-carbon patents relative to the total number of patents. 

% empirical strategy
In terms of estimators, we use a 3WFE similar to \cite{carbon_pricing}\footnote{See, e.g., Equation (1) in \cite{carbon_pricing}}, for firm $i$ at year $t$
\begin{equation}
\label{model_3wfe_carbon}
    Y_{i,t} = \gamma_i + \gamma_{r,t} + \gamma_{s,t} + \beta_{3wfe} D_{i,t} + X'_{i,s,r}\theta + \varepsilon_{i,t},
\end{equation}
where $\beta_{3wfe}$ is meant to capture the impact of ETS on low-carbon technology innovation, $D_{i,t}$ is the treatment indicator post-announcement of ETS for covered sectors in pilot regions, $\gamma_i$ is a firm fixed effect, $\gamma_{r,t}$ and $\gamma_{s,t}$ are region-by-time and sector-group-by-time fixed effects, respectively. Covariates $X_{i,s,r}$ include firm-level attributes such as assets, revenue, and current liabilities. As we are also interested in treatment effect dynamics, we also consider a 3WFE event-study specification 
\begin{equation}
    \label{model_es_carbon}
Y_{i,t} = \gamma_i + \gamma_{r,t} + \gamma_{s,t} +\sum_{e \neq-1} \beta_{e} \mathbf{1}\{E_{i,t}=e\} + X'_{i,s,r}\theta + u_{i,t},
\end{equation}
where $E_{i,t} = t - G_i$ denotes the time relative to when the regional ETS pilot was initiated for firm $i$, and $u_{i,t}$ is an idiosyncratic error term. In this setup, a firm $i$ is considered to have received the treatment if it is located in a region that is a carbon market pilot and its sector qualifies as a regulated sector by the ETS. We compare these 3WFE estimators with our DR DDD estimator \eqref{eqn:es_estimator}. In our setup, $S$ represents the year a region launched the ETS pilot, and $Q$ represents the manufacturing sector that is eligible for ETS.

\begin{figure}[!htp]
  \centering
  %────────── Outcome 1 ──────────
  \begin{subfigure}[t]{0.75\textwidth}
    \centering
      \includegraphics[width=\linewidth]{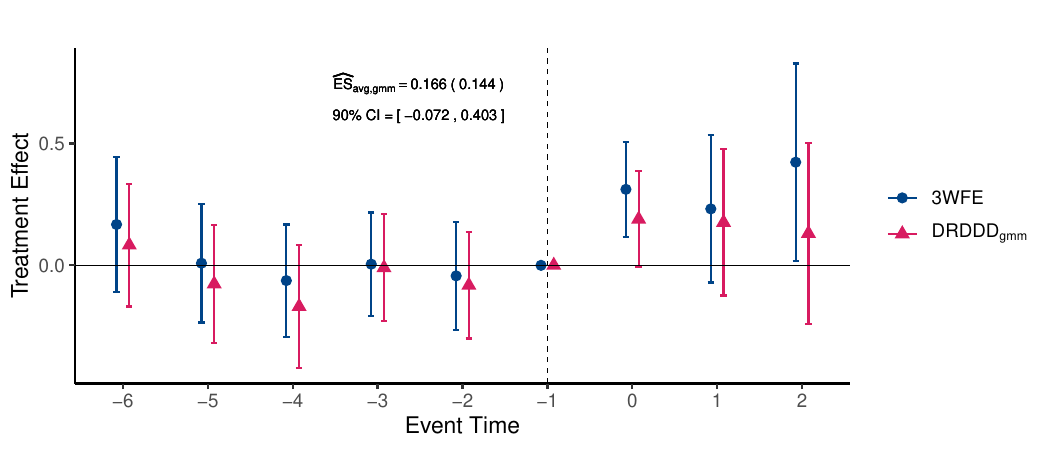}
                \caption{Panel A: Emission Trading Scheme regulation on low-carbon patents}
  \end{subfigure}

  %────────── Outcome 2 ──────────
  \begin{subfigure}[t]{0.75\textwidth}
    \centering
      \includegraphics[width=\linewidth]{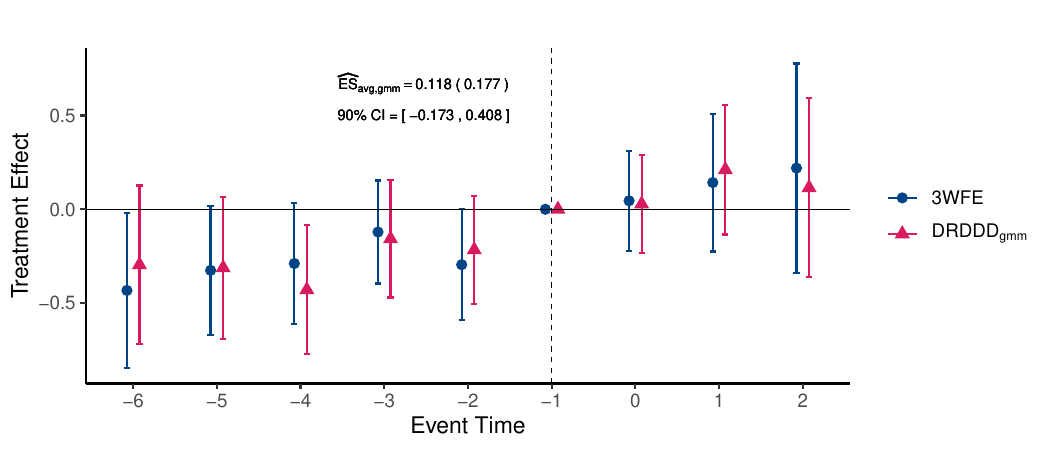}
               \caption{Panel B: Emission Trading Scheme regulation on other non-low-carbon patents}
  \end{subfigure}

  %────────── Outcome 3 ──────────
  \begin{subfigure}[t]{0.75\textwidth}
    \centering
      \includegraphics[width=\linewidth]{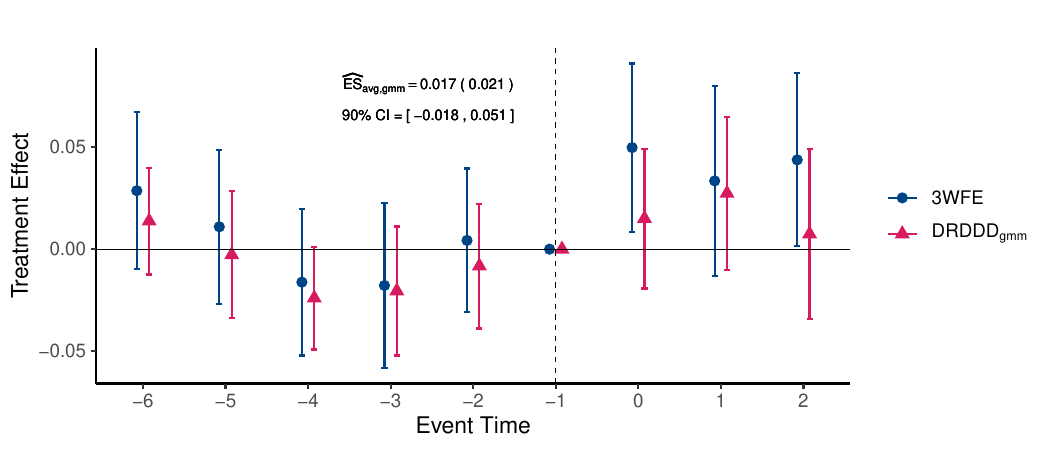}
                \caption{Panel C: Emission Trading Scheme regulation on share of low-carbon patents}
  \end{subfigure}

    \caption{Impact of ETS on share of low-carbon patents.}
    \label{fig:carbon_pricing_app}
        \justifying
	\noindent \scriptsize{Notes: This figure displays event study estimates incorporating firm characteristics such as total assets, total revenue, and current liabilities as covariates. Panel A examines the number of low-carbon patents, Panel B considers other non-carbon patents, and Panel C uses the proportion of low-carbon patents relative to the total number of patents. In each panel, blue dots represent the point estimates from the specification in \eqref{model_es_carbon}, while red triangles show the $\widehat{ES}_{\text{dr},\text{gmm}}(e)$ estimates aggregated from our proposed DR estimator as in \eqref{eqn:es_estimator}. The average $\widehat{ES}_{\text{avg}, \text{gmm}}$, calculated from all $\widehat{ES}_{\text{dr},\text{gmm}}(e)$ values to the right of the vertical dashed line at \(e = -1\), is provided alongside its bootstrapped standard errors, based on 999 repetitions, and confidence interval. The corresponding blue and red vertical lines represent the 90\% confidence intervals.
 }
\end{figure}

% \begin{figure}
%     \centering
%     \includegraphics[width=1\linewidth]{Figures/carbon-pricing_eventstudy_envrAEW_ratio.pdf}
%     \caption{Impact of ETS on share of low-carbon patents.}
%     \label{fig:carbon_pricing_app}
%     \justifying
% 	\vspace{-.2cm}\scriptsize{Notes: TBA
%  }
% \end{figure}

Figure \ref{fig:carbon_pricing_app} compares the event study estimates from the specification \eqref{model_es_carbon} with the estimates $\widehat{ES}_{\text{dr},\text{gmm}}(e)$, as specified in \eqref{eqn:es_estimator}, using our DR procedure. Consistent with \cite{carbon_pricing}, we present results at a 90\% confidence level. 
In Panel A, as reported by \cite{carbon_pricing}, the ETS pilot appears to encourage innovation, but we cannot reject the null hypothesis of no effect at the same confidence level. Panel B's results align with \cite{carbon_pricing}, indicating minimal evidence for the ETS's crowding-out effect on other patent applications. The findings in Panel C show no significant impact of ETS regulation on the proportion of low-carbon patents. This contrasts with the results associated with a 3WFE, which indicated a positive and significant effect at a 90\% confidence level\footnote{See Table 1 in \cite{carbon_pricing}}. For instance, our method estimates an aggregate effect of 1.7\% without sufficient evidence to reject the null hypothesis of no effect, while specification \eqref{model_3wfe_carbon} estimates a significant aggregate effect of 5.3\% of ETS adoption on share of low-carbon patents. 

Overall, our findings suggest there is no evidence to support that the ETS promotes innovation in low-carbon patents. Additionally, our tighter confidence intervals reflects that DR DDD can provide practical gains in power in empirical applications. For example, in Panel C of Figure \ref{fig:carbon_pricing_app}, the confidence intervals associated with $\widehat{\beta}_e$ are 20.28\%, 24.26\% and 1.51\% wider than the confidence intervals associated with $\widehat{ES}_{\text{dr},\text{gmm}}(e)$ at event times $e=0,1,2$, respectively. This implies that $\widehat{\beta}_e$ would need a sample with 44\%, 54\%, and 3\%  more observations than our DR DDD event study estimator for event times zero, one, and two to achieve the same precision as DR DDD.\footnote{These calculations are based on asymptotic relative efficiency. For any parameter $\eta$ of a distribution $F$, and for estimators $\widehat{\eta}_{1}$ and $\widehat{\eta }_{2}$ approximately $N\left(  \eta,V_{1}/n\right)  $ and $N\left(  \eta, V_{2}/n\right)  $, respectively, the asymptotic relative efficiency of $\widehat{\eta}_{2}$ with respect to $\widehat{\eta}_{1}$ is given by $V_{1}/V_{2}$; see, e.g., Section 8.2 in \cite{VanderVaart1998}.}

% computations for results above (beta_3wfe) -> 20.283037 24.262982  1.511384
% es_e =  0.01660014 
% se_e =  0.02091249 
% beta_3wfe  0.05258601 
% se_3wfe  0.01815252 

\subsection{Effect of genetically modified crops on yields}

% DDD with variation in treatment timing

% context
Finally, we revisit the study by \cite{hansen_national_2023} on the impact of genetically modified (GM) crop adoption on countrywide yields. While farmers in North and South America and parts of Asia adopted these seeds almost immediately, regulators in the European Union, much of Africa, and several middle‑income economies imposed outright bans or stringent de-facto restrictions, often in response to food-safety scares and NGO pressure.  By 2019 only 29 countries permitted commercial cultivation, another 42 permitted imports but not planting, and the rest maintained comprehensive bans.  This sharp global policy divergence, paired with the agronomic premise that pest‑ and herbicide-tolerant GM varieties raise yields mainly where weeds and insects are prevalent, sets the stage for a compelling cross‑country analysis of aggregate production impacts.

\cite{hansen_national_2023} exploits the fact that GM technologies were adopted at different times in various countries starting in the mid-1990s to examine the effects of GM adoption on outcomes such as crop yields, harvested area, and trade flows, using balanced crop-country panel data covering over 120 countries from 1986 to 2019. Different sources of variation underpin the DDD design in this context.  First, there was a variation in policy timing as each country passed GM cultivation legislation in different years.  This allows one to compare outcomes before and after GM commercialization is allowed, between countries that have already passed these legislations and those that have not yet.
Second, crop eligibility for GM traits exists for only four field crops---cotton, maize, soybean, and rapeseed--- leaving rice, wheat, and dozens of other staples as the primary comparisons.
Differencing outcomes over time, across crops, and across policy regimes therefore allows for common shocks (e.g., weather, prices) and crop‑specific global trends, isolating the causal effect of GM adoption.

% data + empirical strategy

\cite{hansen_national_2023} utilizes publicly available country-level data on production quantities from FAOSTAT, covering yields (production per hectare), producer prices, and trade flows (exports and imports by commodity). Following their setting, our focus is also on countries with a minimum of $100,000$ hectares of cropland, thereby drawing attention to nations with significant agricultural activities. Similarly, we also include 60 field crops, including cereals, pulses, roots and tubers, oil crops, and fiber crops, as outlined by the FAO \citep{FAO2012}. \cite{hansen_national_2023} collected official legislative documents, USDA Foreign Agricultural Service reports, and industry publications to determine the earliest legal and feasible year of commercial cultivation for each country-crop combination.\footnote{For further details, refer to the documentation for GM approval dates in \cite{hansen_national_2023} Online Appendix B.} In some cases, GM approval did not directly result in large-scale planting, while in others, formal bans restricted or effectively prohibited cultivation. Thus, following \cite{hansen_national_2023}, we establish $G_{i,c}$ as the first year in which a GM variety of a given crop in a given country could be harvested and sold commercially for human consumption or animal feed without violating a ban.\footnote{\cite{hansen_national_2023} uses a different notation than us, and refers to our $G_{i,c}$ as $E_{ic}$.}

\cite{hansen_national_2023} considers the following event-study specification\footnote{See, e.g., Equation (1) in \cite{hansen_national_2023}.} to asses the dynamic average effect of GM adoption on countrywide yields
\begin{equation}
\label{model_es_crops}
    \ln Y_{i,c,t} = \gamma_{c,i} + \gamma_{i,t} + \gamma_{c,t} + \sum_{e\not=-1} {\beta}_{e} \cdot \mathbf{1}\{E_{i,c,t} = e\} + \varepsilon_{i,c,t}  
\end{equation}
where $Y_{i,c,t}$ denotes the yield of crop $c$ in country $i$ for the year $t$. The variable $E_{i,c,t} = t - G_{i,c}$ represents the time since a GM variety of a crop became legally permissible for harvest and sale for human consumption, and $\varepsilon_{i,c,t}$ is an idiosyncratic error error. The specification in \eqref{model_es_crops} includes three fixed effects: country-by-year fixed effects ($\gamma_{i,t}$) to account for shocks affecting all crops in a particular country and year, crop-by-country fixed effects ($\gamma_{c,i}$) to capture local time-invariant conditions for each crop, and crop-by-year fixed effects ($\gamma_{c,t}$) to reflect global trends specific to each crop. We compare event-study estimates based on \eqref{model_es_crops} with our proposed DR DDD-based event study in \eqref{eqn:es_estimator}. In our setup, $S$ represents the time a country allowed for GM commercialization, and $Q$ represents the eligible crop.

Figure \ref{fig:gmcrops} illustrates the event study estimates of the impact of GM crop adoption on yields of cotton, maize, rapeseed, and soybean together. Panel A shows results aligned with those in \cite{hansen_national_2023}, with a slight enhancement in the precision of the confidence intervals. For instance, at event times $e = 0$, the confidence intervals associated with $\widehat{\beta}_{e}$ is  46.47\% wider than the one associated with $\widehat{ES}_{\text{dr}, \text{gmm}}(e)$. Both methods show that approving GM technology leads to a significant increase in crop yields by approximately 13\%. It is not surprising that both methods yield similar results in this context. Despite the challenges associated with 3WFE specifications with variation in treatment timing \citep{ strezhnev2023}, there exists a very large number of never-enabling crop-country combinations, making these ``forbidden comparisons'' less likely to shift results. Indeed, as noted by \cite{hansen_national_2023}, only about 2 percent of the units are treated during the period of analysis.

\begin{figure}[hptb]
  \centering
  %────────── Outcome 1 ──────────
  \begin{subfigure}[t]{0.8\textwidth}
    \centering
      \includegraphics[width=\linewidth]{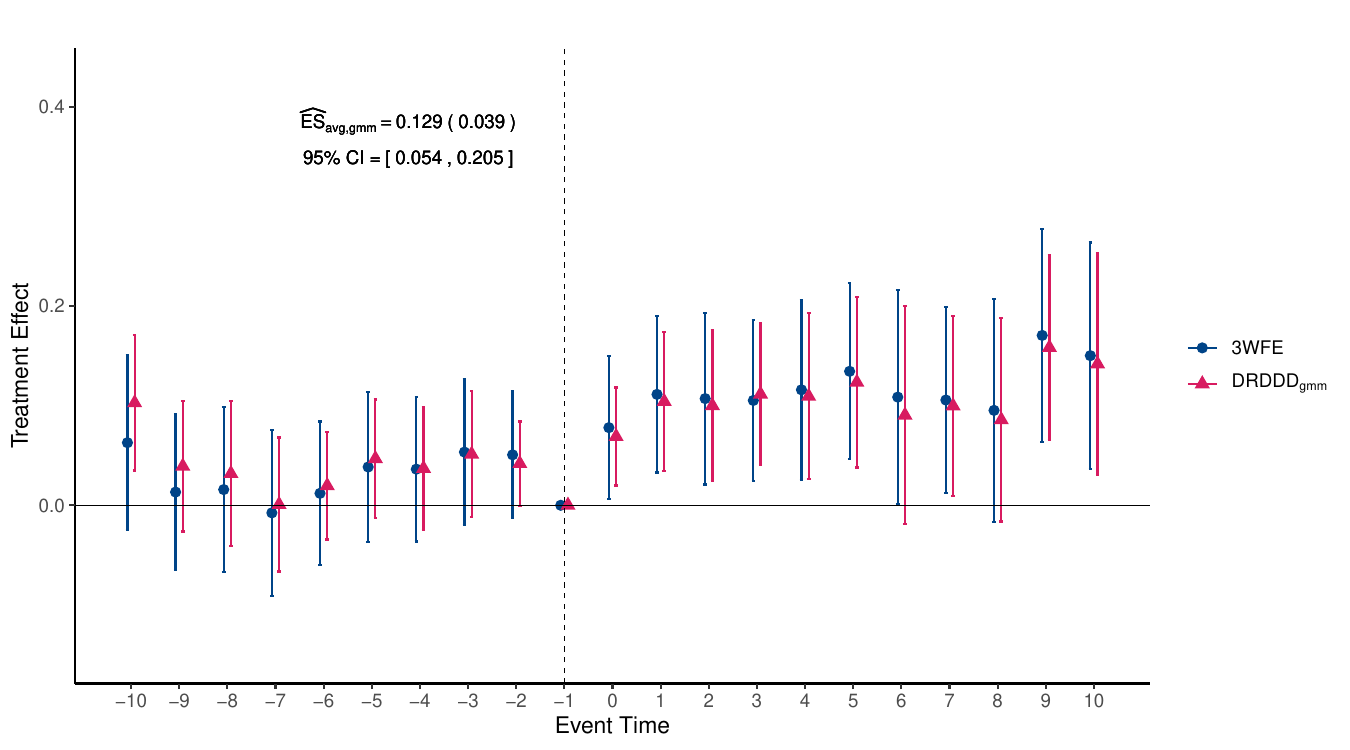}
                \caption{Panel A: GM adoption on log-yields (full sample)}
  \end{subfigure}
 
  %────────── Outcome 2 ──────────
  \begin{subfigure}[t]{0.8\textwidth}
    \centering
      \includegraphics[width=\linewidth]{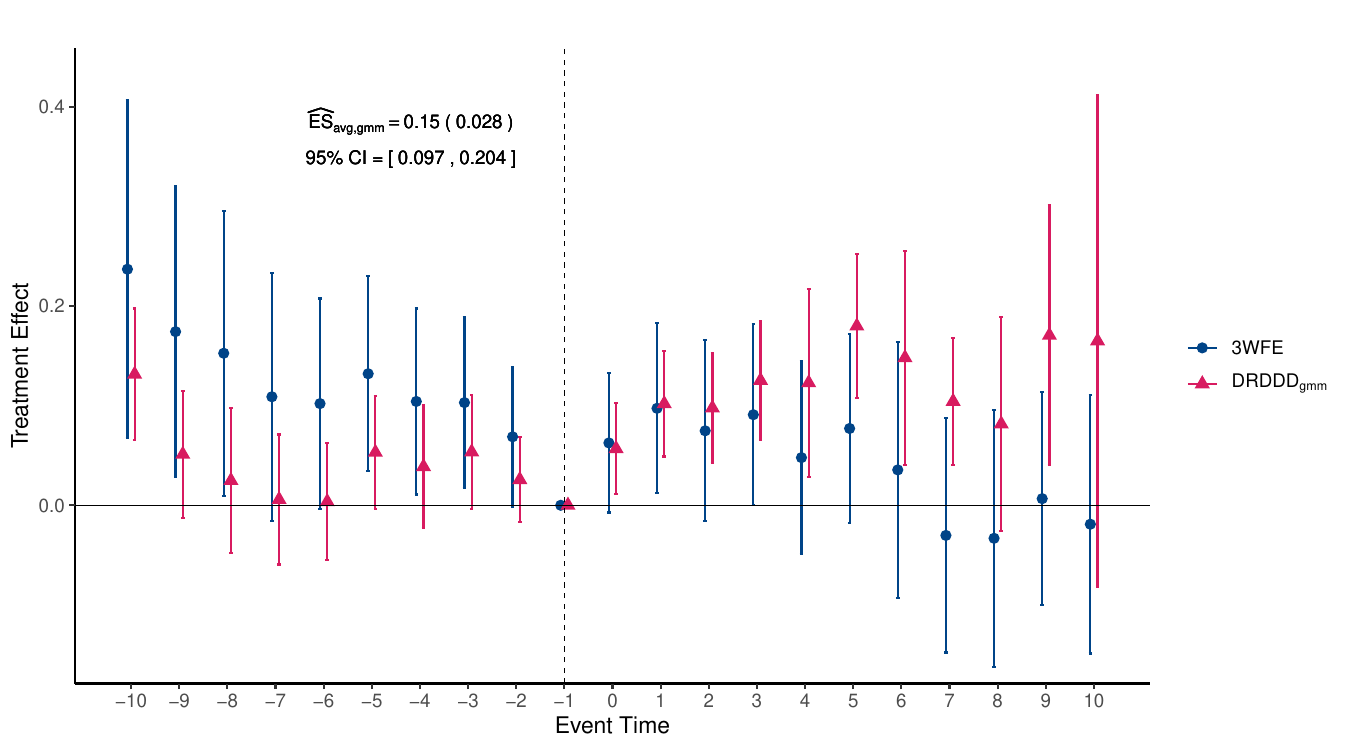}
               \caption{Panel B: GM adoption on log-yields (dropping all never-treated)}
  \end{subfigure}

    \caption{Impact of GM adoption on yields}
    \captionsetup{justification=justified}
    \label{fig:gmcrops}
        \justifying
	\noindent \scriptsize{Notes: This figure shows event study estimates using the logarithm of crop yields as the outcome variable, without any covariates. The estimation window spans from 1986 to 2019, excluding country-crop combinations treated in 2010 or later. Panel A considers a fully balanced sample, as in \cite{hansen_national_2023}, whereas Panel B excludes units never treated and forces the last cohort to be considered never-treated, with data post-treatment being filtered out. In each panel, blue dots represent point estimates from the specification in equation \eqref{model_es_crops}, while red triangles show the $\widehat{ES}_{\text{dr},\text{gmm}}(e)$ estimates aggregated from our proposed DR estimator as in \eqref{eqn:es_estimator}. The average $\widehat{ES}_{\text{avg},\text{gmm}}$, calculated from all $\widehat{ES}_{\text{dr},\text{gmm}}(e)$ values to the right of the vertical dashed line at \(e = -1\), is provided alongside its bootstrapped standard errors, based on 999 repetitions, clustered at the country-crop level and its confidence interval. The corresponding blue and red vertical lines represent the 95\% confidence intervals.
 }
\end{figure}

% values for results below starting from e = 0 (all in percentages)
% 46.465839  12.107644  14.023769  13.492388   8.554365   3.312164  -1.938443   3.921524   9.667308  15.158858   1.880548
% average gain across post-treatment event times -> 11.51327%

Having said that, a potential concern that can arise is related to whether it is indeed desirable to use these never-enabling crop-country combinations as a ``de facto'' comparison group. It may be the case that countries that never allow harvesting and commercialization of GM crops during a very long period of analysis may already face lower yields or stricter farm rules, so they might not be a necessarily good comparison group for eventually treated crop-countries observations.

To assess the robustness of the findings in Panel A of Figure \ref{fig:gmcrops}, we drop data for every truly never-enabling unit, and re-run the 3WFE regression specification \eqref{model_es_crops} in this subsample as well as our staggered DDD estimator in \eqref{eqn:es_estimator}. As Panel B in Figure \ref{fig:gmcrops} highlights, this has important consequences for the analysis: if one were to rely on \eqref{model_es_crops}, one would find much smaller effects compared to those in Panel A, with non-negligible pre-treatment trends that may cast doubt about the plausibility of the DDD design. On the other hand, when using our proposed staggered DDD estimator, we can see that the results remain positive and statistically significant for most event-times, with no very serious violation of pre-treatment trends. In fact, our estimates for $ES_{\text{avg}}$ indicate an approximate 15\% increase in crop yields following GM adoption, a slightly larger (but statistically distinguishable) than those in Panel A. Of course, statistical precision is less accurate in Panel B than in Panel A, as we use a much smaller sample size.

This result essentially highlights that the main findings in \cite{hansen_national_2023} are robust to dropping the never-enabling set of units from the comparison group, as long as you use DDD procedures that are meant to work well in such setups.

\section{Concluding remarks}
\label{conclusion}

This paper studied DDD estimators, paying close attention to situations where covariates are important for identification and to setups with staggered treatment adoption. Our findings challenge the conventional wisdom that DDD can be understood as the difference between two DiDs. We showed that when DDD-type parallel trends hold after conditioning on covariates, DDD estimators cannot generally be expressed as such, even in cases with only two time periods. In addition, when treatment adoption is staggered, pooling all not-yet-treated units is not generally valid, and proceeding as such can lead to misleading conclusions even when covariates are not crucial for the DDD identification arguments. These results highlight the need for more careful consideration when applying DDD strategies. 

To address these challenges, we proposed DR DDD estimators that can appropriately handle covariates and can also be used in DDD setups with staggered treatment adoption. Importantly, we proposed a DR DDD estimator that leverages information across different comparison groups, and our simulation results highlighted that the gains in precision can be substantial compared to alternatives. As such, we recommend practitioners to favor our proposed DDD estimator $\widehat{ATT}_{\text{dr},\text{opt}}(g,t)$ in applications. Finally, a companion \texttt{R} package, \href{https://marcelortiz.com/triplediff/}{\texttt{triplediff}}, is freely available on GitHub to automate all these DDD estimators proposed in this article, simplifying their adoption for practitioners. 

We envision extending our proposed DDD tools to incorporate data-adaptive and machine-learning estimators for the nuisance parameters, with a particular emphasis on training these models to ensure robust performance in finite samples. Additionally, we aim to generalize our DDD framework to accommodate more flexible sampling schemes, such as unbalanced panel data or repeated cross-sectional data; see, for example, \citet{Abadie2005}, \citet{Callaway2021}, and \citet{SantAnna2023} for related developments in DiD settings. Other promising avenues for future research include exploring semiparametric efficiency bounds and efficient estimation in overidentified DDD models \citep{Chen_SantAnna_Xie_effientDiD_2025}, as well as using a DDD strategy to uncover persuasion effects \citep{Jun_Lee_2024_DiD_persuasion}. We leave a full treatment of these topics to future work.

\small{
\setlength{\bibsep}{1pt plus 0.3ex}
\putbib
}
\end{bibunit}

% APPENDIX ======================================================
\newpage
\begin{bibunit}
\appendix
\newgeometry{margin=2.5cm}
%!TEX root = ../paper.tex
\clearpage
\renewcommand\thefigure{OA-\arabic{figure}}
\renewcommand\thetable{OA-\arabic{table}}
\renewcommand*{\thepage}{OA - \arabic{page}}
\renewcommand\thesection{\Alph{section}}
\renewcommand\thesubsection{\Alph{section}.\arabic{subsection}}

% \renewcommand\thesection{S\Alph{section}}
% \renewcommand\theequation{S\arabic{equation}}
% \renewcommand\thetable{S\arabic{table}}
% \renewcommand\theremark{S\arabic{remark}}
% \renewcommand\theassumption{S\arabic{assumption}}
% \renewcommand\thelemma{S\arabic{lemma}}
% \renewcommand\thecorollary{S\arabic{corollary}}
% \renewcommand\thetheorem{S\arabic{theorem}}
% \renewcommand\thefigure{S\arabic{figure}}

% \numberwithin{theorem}{section}
% \numberwithin{lemma}{section}
% \numberwithin{table}{section}

\setcounter{figure}{0}
\setcounter{table}{0}
\setcounter{page}{1}

\begin{center}
	{\Large{Better Understanding Triple Differences Estimators}:\\Supplemental Appendix}\\[1em]
 Marcelo Ortiz-Villavicencio ~~~~~~~~~~ Pedro H.C. Sant'Anna \\[0.5em]

        \date{\today} \\[1em]
\end{center}

% TABLE OF CONTENTS BUT ONLY FOR THE SECTIONS IN THE APPENDIX
%\renewcommand\cftsecdotsep{\cftdotsep}
%\renewcommand\cftsubsecdotsep{\cftnodots}
%\renewcommand{\cftsecnumwidth}{6em}
%\renewcommand{\cftpnumalign}{r}
%\renewcommand{\cftsecleader}{\normalfont\cftdotfill{\cftsecdotsep}}

%\renewcommand{\cftsecleader}{\cftdotfill{\cftsecdotsep}\hspace{1.8em}}

%\renewcommand{\cftsecnumwidth}{6em}
%\etocdepthtag.toc{appendix}
%\etocsettagdepth{main}{none}
%\etocsettagdepth{appendix}{subsection}

%\tableofcontents
%\newpage

% \newpage
% \doublespacing
% \newgeometry{
% 	left=1.5cm,
% 	right=1.5cm,
% 	top=2.5cm,
% 	bottom=2.5cm}

% This Supplementary Appendix includes: (a) proofs for the results presented in the main paper; (b) details about the data-generating process (DGP) used in the Monte Carlo simulations to illustrate the finite sample properties of our DR DDD method; and (c) additional details about the empirical application.

This Supplementary Appendix includes: (a) proofs for the results presented in the main paper; and (b) details about the data-generating process (DGP) used in the Monte Carlo simulations to illustrate the finite sample properties of our DR DDD method.

\section{Proofs of Main Results}

We begin by proving auxiliary lemmas that are foundational for establishing the main results of the paper.  Initially, consider the conditional $ATT(g,t)$, and for simplicity, we omit the unit indexing $i$.
\begin{equation*}
    CATT_{X}(g,t) \equiv \E[Y_{t}(g) - Y_{t}(\infty) | X, S =g, Q = 1] 
\end{equation*}

\begin{lemma}
\label{lemma:catt}
    Let Assumptions \ref{ass:sampling_panel}, \ref{ass:overlap_staggered}, \ref{ass:anticipation}, and \ref{ass:PT-NYT} hold. Then, for all $g\in \mathcal{G}_{\text{trt}}$, $t\in \{2,\dots, T\}$, and $g_{\text{c}} \in \mathcal{S}$ such that $t\ge g$ and $g_{\text{c}}>t$,
    \begin{align*}
        CATT_{X}(g,t) &= \left(\E[Y_t - Y_{g-1}| X, S=g, Q=1] - \E[Y_t - Y_{g-1} | X, S=g, Q=0]\right) \\
        &~~~~-  \left(\E[Y_t - Y_{g-1} |X, S=g_c, Q=1] - \E[Y_t - Y_{g-1} |X, S=g_c, Q=0]\right) ~\text{almost surely.}
    \end{align*}
\end{lemma}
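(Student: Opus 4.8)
The plan is to pass from the unobservable $CATT_X(g,t)$ to observable changes in outcomes in three conceptual steps: (i) rewrite $CATT_X(g,t)$ using the baseline period $g-1$, separating a purely ``factual'' piece from an unobservable counterfactual-trend piece; (ii) re-express that counterfactual trend for the $(S=g,Q=1)$ cohort in terms of the analogous trends for the $(S=g,Q=0)$, $(S=\gc,Q=1)$, and $(S=\gc,Q=0)$ groups via Assumption \ref{ass:PT-NYT}; and (iii) show that, for each of those three groups, the never-treated potential-outcome trend between $g-1$ and $t$ coincides with the observed-outcome trend, which delivers the stated identity.

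For step (i): a unit with $S=g$, $Q=1$ has $G=g$, so $Y_t = Y_t(g)$ when $t \ge g$, while Assumption \ref{ass:anticipation} applied at the pre-treatment date $g-1<g$ gives $\E[Y_{g-1}(g)|X,S=g,Q=1] = \E[Y_{g-1}(\infty)|X,S=g,Q=1]$ a.s. Adding and subtracting $\E[Y_{g-1}(\infty)|X,S=g,Q=1]$ inside $CATT_X(g,t) = \E[Y_t(g)|X,S=g,Q=1] - \E[Y_t(\infty)|X,S=g,Q=1]$ yields
\begin{align*}
CATT_X(g,t) = \E[Y_t - Y_{g-1} | X, S=g, Q=1] - \E[Y_t(\infty) - Y_{g-1}(\infty) | X, S=g, Q=1],
\end{align*}
so only the second term still needs to be converted to observables.

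For step (ii): telescoping over single periods, $Y_t(\infty) - Y_{g-1}(\infty) = \sum_{s=g}^{t}(Y_s(\infty) - Y_{s-1}(\infty))$, and I apply Assumption \ref{ass:PT-NYT} to each increment $s \in \{g,\dots,t\}$ (this is legitimate since $s\ge g$ and, because $\gc>t\ge s$, we have $\gc>\max\{g,s\}$), obtaining
\begin{align*}
&\E[Y_t(\infty) - Y_{g-1}(\infty)|X,S=g,Q=1] - \E[Y_t(\infty) - Y_{g-1}(\infty)|X,S=g,Q=0] \\
&\qquad = \E[Y_t(\infty) - Y_{g-1}(\infty)|X,S=\gc,Q=1] - \E[Y_t(\infty) - Y_{g-1}(\infty)|X,S=\gc,Q=0].
\end{align*}
For step (iii): in each of the cells $(S=g,Q=0)$, $(S=\gc,Q=1)$, $(S=\gc,Q=0)$, every date $s\le t$ is (effectively) untreated --- the $Q=0$ cells have $G=\infty$, so $Y_s \equiv Y_s(\infty)$, and for $(S=\gc,Q=1)$ we have $s<\gc$ for all $s\le t$, so Assumption \ref{ass:anticipation} gives $\E[Y_s(\gc)|X,S=\gc,Q=1]=\E[Y_s(\infty)|X,S=\gc,Q=1]$. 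Substituting these identifications into the previous display and plugging the result back into the expression from step (i) produces exactly the claimed formula after rearranging.

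The main obstacle will be the bookkeeping in steps (ii) and (iii): one must check that the index constraints of Assumption \ref{ass:PT-NYT} ($t\ge g$ and $g'>\max\{g,t\}$) are satisfied by every single-period increment appearing in the telescoping sum, and, in parallel, keep track of which $(S,Q)$-cells are untreated at each date $s\le t$ so that the counterfactual-to-observed replacements in step (iii) are valid. Once these are verified, the remainder is routine manipulation of conditional expectations.
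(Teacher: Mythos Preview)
Your proposal is correct and follows essentially the same route as the paper's proof: add and subtract the baseline $\E[Y_{g-1}(\infty)|X,S=g,Q=1]$, invoke Assumption \ref{ass:PT-NYT} to swap the unobservable $(S=g,Q=1)$ counterfactual trend for the three comparison-group trends, and then use Assumption \ref{ass:anticipation} and the definition of observed outcomes to pass to observables. Your explicit telescoping over single-period increments and the accompanying verification that $\gc>\max\{g,s\}$ for each $s\in\{g,\dots,t\}$ is in fact more careful than the paper, which applies Assumption \ref{ass:PT-NYT} directly to the long difference $Y_t(\infty)-Y_{g-1}(\infty)$ without spelling out the telescoping.
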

\begin{proof}
     All equalities below are understood to hold almost surely (a.s.), conditioning on $X$ throughout.
     \begin{align*}
         CATT_{X}(g,t) &= \E[Y_{t}(g) - Y_{g-1}(\infty) | X, S =g, Q = 1] - \E[Y_{t}(\infty) - Y_{g-1}(\infty) | X, S =g, Q = 1] \\
         &= \E[Y_{t}(g) - Y_{g-1}(\infty) | X, S =g, Q = 1] - \E[Y_t(\infty) - Y_{g-1}(\infty) | X, S=g, Q=0] \\
         &~~~~~-\E[Y_t(\infty) - Y_{g-1}(\infty) | X, S=g', Q=1] + \E[Y_t(\infty) - Y_{g-1}(\infty) | X, S=g', Q=0]\\
         &= \E[Y_{t} - Y_{g-1} | X, S =g, Q = 1] - \E[Y_t - Y_{g-1} | X, S=g, Q=0] \\
         &~~~~~-\E[Y_t - Y_{g-1} | X, S=g_c, Q=1] + \E[Y_t - Y_{g-1} | X, S=g_c, Q=0],
     \end{align*}
     where the first equality is derived by algebraically adding and subtracting the term $\E[Y_{g-1}(\infty) | X, S =g, Q = 1]$. The second equality is obtained based on Assumption \ref{ass:PT-NYT}, and the final equality arises from the definition in \eqref{def:observed_outcome} along with Assumption \ref{ass:anticipation}.
\end{proof}

Consider the population regression function representing the changes in outcomes from period $t$ to $t'$ given the covariates as $m_{Y_t - Y_{t'}}^{S=g, Q=q}(X) = \E[Y_t - Y_{t'} \mid S=g, Q=q, X]$.  Given the weights defined in equation \eqref{eqn:weights}, we then proceed to derive the efficient influence function in the following lemma. For notational simplicity, assume the simplest scenario with two time periods $t \in \{1,2\}$, two treatment groups $S_i \in \{2, \infty\}$, and two eligibility groups $Q_i \in \{0, 1\}$, thus we derive the efficient influence function for the $ATT(2,2)$ in the following lemma.

\begin{lemma}[Efficient Influence Function]
\label{lemma:eif}
Suppose Assumptions \ref{ass:sampling_panel}, \ref{ass:overlap_staggered}, \ref{ass:anticipation}, and \ref{ass:PT-NYT} hold. Let $\tau(X) \equiv Y_2 - Y_1 - m_{Y_2 - Y_1}^{S=2, Q=0}(X) - m_{Y_2 - Y_1}^{S=\infty, Q=1}(X) + m_{Y_2 - Y_1}^{S=\infty, Q=0}(X) $ and the observed data $O \equiv (Y_2, Y_1, S,Q,X)$. Then, the efficient influence function for the $ATT(2,2)$ is given by
\begin{align*}
    \eta_{eff}(O) &= w^{S=2, Q=1}_{\text{trt}}(S,Q) \cdot \left(\tau(X) - ATT(2,2) \right) \\
    &-  w^{S=2, Q=0}_{\text{comp}}(S,Q,X) \cdot \left(Y_2 - Y_1  - m_{Y_2 - Y_1}^{S=2, Q=0}(X)\right)\\
    &- w^{S=\infty, Q=1}_{\text{comp}}(S,Q,X) \cdot \left(Y_2 - Y_1 - m_{Y_2 - Y_1}^{S=\infty, Q=1}(X)\right)\\
    &+ w^{S=\infty, Q=0}_{\text{comp}}(S,Q,X) \cdot \left(Y_2 - Y_1  - m_{Y_2 - Y_1}^{S=\infty, Q=0}(X)\right),
\end{align*}
where 
\begin{small}
\begin{align*}
    w^{S=2,Q=1}_{\text{trt}}(S, Q)\equiv\dfrac{1{\{S=2,Q=1\}} }{\mathbb{E}[1{\{S=2,Q=1\}}]}, \quad w^{S=g, Q=q}_{\text{comp}}(S, Q, X) \equiv \dfrac{\dfrac{1{\{S=g,Q=q\}} \cdot p^{S=2,Q=1}(X) }{p^{S=g,Q=q}(X)}} {\mathbb{E}\bracks{\dfrac{1{\{S=g,Q=q\}} \cdot p^{S=2,Q=1}(X) }{p^{S=g,Q=q}(X)}}}.
\end{align*}
\end{small}
\end{lemma}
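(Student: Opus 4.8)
\textbf{Proof proposal for Lemma \ref{lemma:eif}.}

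The plan is to derive the efficient influence function (EIF) for $ATT(2,2)$ via the standard semiparametric projection approach: parametrize the observed-data distribution nonparametrically, compute the pathwise derivative of the target functional along a generic regular parametric submodel, and exhibit a mean-zero, finite-variance random variable whose inner product with the score reproduces that derivative. Since the model on $O=(Y_2,Y_1,S,Q,X)$ is locally nonparametric (the only restriction, Assumption \ref{ass:PT-NYT}, is used to \emph{identify} $ATT(2,2)$, not to restrict the tangent space), the tangent space is the whole $L^2_0(P)$, so the influence function we find is automatically \emph{the} efficient one and is unique. Concretely, I would first invoke Theorem \ref{thm:Identification} (equivalently Lemma \ref{lemma:catt}) to write $ATT(2,2) = ATT_{\text{dr},\infty}(2,2) = \E[\tau(X)\mid S=2,Q=1]$, which expresses the estimand as a smooth functional of the nuisance components: the subgroup probabilities $p^{S=g,Q=q}(X)$, the conditional mean $m(X)\equiv\E[\tau(X)\mid S=2,Q=1,X]$ (note $\E[\tau(X)\mid S=2,Q=1,X]=\E[\tau(X)\mid X]$ is not automatic, so care is needed here — see below), and the marginal law of $X$.

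The key steps, in order: (i) Decompose the score of a submodel $\{P_\theta\}$ at $\theta=0$ into orthogonal pieces corresponding to the conditional law of $(Y_2,Y_1)$ given $(S,Q,X)$, the conditional law of $(S,Q)$ given $X$, and the marginal of $X$. (ii) Differentiate $ATT(2,2)=\E\big[\,1\{S=2,Q=1\}\tau(X)\,\big]\big/\P(S=2,Q=1)$ — or more conveniently its ``doubly robust'' representation $\E[ATT_{\text{dr},\infty}$-integrand$]$ — with respect to $\theta$, using that perturbing each nuisance produces a term of the form $\E[(\text{correction}) \times (\text{score component})]$. (iii) Read off from each term the corresponding component of the EIF: the perturbation of the treated-group outcome law yields $w^{S=2,Q=1}_{\text{trt}}(\tau(X)-ATT(2,2))$; the perturbations of the $(S=2,Q=0)$, $(S=\infty,Q=1)$, $(S=\infty,Q=0)$ outcome laws yield the three comparison-group residual terms with the reweighting factors $p^{S=2,Q=1}(X)/p^{S=g,q}(X)$ (these factors are exactly the Radon–Nikodym derivatives converting the comparison-group covariate law to the treated covariate law, which is why the normalized weights $w^{S=g,q}_{\text{comp}}$ appear); and the perturbation of the $(S,Q)\mid X$ law contributes terms that, after using the tower property and the orthogonality of the propensity-score scores, combine with the reweighting to give precisely the stated form with \emph{normalized} (Hájek-type) rather than unnormalized weights. (iv) Verify directly that the candidate $\eta_{eff}(O)$ has mean zero (each of the four bracketed residuals has conditional mean zero given the relevant conditioning event, and the weights integrate to one) and finite variance (using Assumption \ref{ass:overlap_staggered} to bound the weights and the implicit moment conditions), and that it lies in the tangent space; uniqueness then gives efficiency.

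The main obstacle I anticipate is \textbf{bookkeeping the propensity-score (i.e.\ $(S,Q)\mid X$) direction correctly}. Because $\tau(X)$ itself contains the nuisance regressions $m^{S=g,Q=q}_{Y_2-Y_1}(X)$, perturbing the $(S,Q)\mid X$ law affects $ATT(2,2)$ both through the reweighting $1\{S=2,Q=1\}/\P(S=2,Q=1)$ and through the implicit dependence of the plug-in, and one must check that the Neyman-orthogonality built into the doubly robust estimand makes the ``extra'' first-order term from perturbing $m(\cdot)$ vanish — otherwise an additional correction term would be needed. The cleanest route is to work from the doubly robust moment $\E[h^{2,2}_\infty(O;\kappa_0)]=0$ (the estimand in \eqref{eqn:DRDDD_2periods} evaluated at truth), exploit that its derivative with respect to each nuisance is zero at the truth by construction (Neyman orthogonality, as remarked after Theorem \ref{thm:Identification}), and conclude that $h^{2,2}_\infty(O;\kappa_0)$ \emph{is} the (un-normalized-weight) influence function; the final step is then the purely algebraic check that using the sample/true normalization constants $\E[1\{S=2,Q=1\}]$ and $\E[1\{S=g,q\}p^{S=2,Q=1}(X)/p^{S=g,q}(X)]$ reproduces the normalized weights $w_{\text{trt}},w_{\text{comp}}$ in the statement, and that no further term survives. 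A secondary, minor obstacle is confirming that this influence function actually attains the semiparametric efficiency bound rather than merely being \emph{an} influence function — but this follows immediately once we note the tangent space is unrestricted, so I would state that explicitly and keep the computation focused on identifying the correct influence function.
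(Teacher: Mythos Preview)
Your direct route in steps (i)--(iii) --- decompose the score into $(Y_2,Y_1)\mid(S,Q,X)$, $(S,Q)\mid X$, and $X$ pieces, differentiate the functional, and read off the components --- is exactly the paper's proof. The paper follows Hahn (1998) and Newey (1990): it writes the likelihood and score of a generic submodel, computes $\partial_\theta\tau(\theta)|_{\theta=0}$ via the quotient rule on $N(\theta)/D(\theta)$ with $D(\theta)=\int p_\theta^{S=2,Q=1}(x)f_\theta(x)\,dx$, exhibits $F_\tau(O)$ satisfying $\partial_\theta\tau(0)=\E[F_\tau\cdot s_0]$, and then verifies $F_\tau\in\mathcal T$ componentwise. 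Your observation that the observed-data model is nonparametric (so $\mathcal T=L^2_0(P)$ and any influence function is the EIF) is correct and would let you skip the paper's Step~3.

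Your preferred shortcut (B) --- start from the DR moment and invoke Neyman orthogonality to conclude the integrand is the influence function --- works, but not quite as automatically as you suggest. Neyman orthogonality gives $\partial_\eta\E[h(O;\eta)]|_{\eta_0}=0$ for the outcome-regression and propensity-score nuisances, but \emph{not} for the normalizing constants $\E[1\{S=2,Q=1\}]$ and $\E[1\{S=g,Q=q\}\,p^{S=2,Q=1}(X)/p^{S=g,Q=q}(X)]$ that sit in the denominators of the Haj\'ek weights. Perturbing those constants does move the moment to first order, and that is precisely what produces the centering $w_{\text{trt}}\cdot ATT(2,2)$ rather than $ATT(2,2)$: the two differ by $(w_{\text{trt}}-1)\,ATT(2,2)$, which is mean zero but \emph{not} orthogonal to the (full) tangent space, so only one of them is the EIF. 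The paper gets the correct centering by applying the quotient rule to $N(\theta)/D(\theta)$; in your shortcut you must treat the normalization as an additional nuisance and do a delta-method step, which is exactly the ``purely algebraic check'' you flag --- but it is a genuine computation, not bookkeeping, and is where the $w_{\text{trt}}(\tau(X)-ATT(2,2))$ term acquires its specific form.
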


\begin{proof}
    We follow \cite{Hahn1998} and \cite{Chen2008808} to structure a efficient influence function as in \cite{newey1990}. Let the conditional density $f(y_2, y_1 \mid g,q,x) = f(Y_2 = y_2, Y_1 = y_1 \mid S=g, Q=q, X=x)$. The proof is organized into three steps: (i) First, we define the tangent space of the statistical model. (ii) Next, we demonstrate that the target parameter related to the parametric sub-model is pathwise differentiable. (iii) Finally, we confirm that all conditions specified in \cite{newey1990} are met to ensure the derived influence function is efficient.

\textsc{Step 1:} The density (likelihood) of the observed data $O \in \R^{2}\times \mathcal{S} \times \{0,1\} \times \R^{d}$ can be represented by
\begin{align*}
f(y_2, y_1, g, q, x)= & f(x) \times \left[f(y_2, y_1 \mid 2,1, x)\cdot p^{S=2, Q=1}(x)\right]^{\bar{g}\bar{q}} \times \left[f(y_2, y_1 \mid 2,0, x)\cdot  p^{S=2, Q=0}(x) \right]^{\bar{g}(1 - \bar{q})}  \\
& \times \left[f(y_2, y_1 \mid \infty,1, x) \cdot p^{S=\infty, Q=1}(x) \right]^{(1-\bar{g}) \bar{q}} \\
& \times \left[f(y_2, y_1 \mid \infty,0, x) \cdot  p^{S=\infty, Q=0}(x) \right]^{(1-\bar{g})(1-\bar{q})}
\end{align*}
where $\bar{g} = 1\{S=2\}$ and $\bar{q} = 1\{Q=1\}$. Consider the regular parametric submodel, with the true model indexed by $\theta = 0$,
\begin{align*}
f_{\theta}(y_2, y_1, g, q, x)= & f_{\theta}(x) \times \left[f_{\theta}(y_2, y_1 \mid 2,1, x)\cdot p_{\theta}^{S=2, Q=1}(x)\right]^{\bar{g}\bar{q}} \times \left[f_{\theta}(y_2, y_1 \mid 2,0, x)\cdot  p_{\theta}^{S=2, Q=0}(x) \right]^{\bar{g}(1 - \bar{q})}  \\
& \times \left[f_{\theta}(y_2, y_1 \mid \infty,1, x) \cdot p_{\theta}^{S=\infty, Q=1}(x) \right]^{(1-\bar{g}) \bar{q}} \\
& \times \left[f_{\theta}(y_2, y_1 \mid \infty,0, x) \cdot  p_{\theta}^{S=\infty, Q=0}(x) \right]^{(1-\bar{g})(1-\bar{q})}
\end{align*}
The corresponding score function of this submodel for all $(g,q) \in \mathcal{S} \times \{0,1\}$ is given by
\begin{align*}
s_\theta(y_2, y_1, g, \ell, x)= & \bar{g} \bar{q} s_\theta(y_2, y_1 \mid 2,1, x)+ \bar{g} \bar{q} \frac{\dot{p}^{S=2, Q=1}_\theta(x)}{p^{S=2, Q=1}_\theta(x)} + \bar{g}(1-\bar{q}) s_\theta(y_2, y_1 \mid 2,0, x) + \bar{g}(1-\bar{q}) \frac{\dot{p}^{S=2, Q=0}_\theta(x)}{p^{S=2, Q=0}_\theta(x)}\\
&+(1-\bar{g}) \bar{q} s_\theta(y_2, y_1 \mid \infty,1, x)+ (1-\bar{g}) \bar{q} \frac{\dot{p}^{S=\infty, Q=1}_\theta(x)}{p^{S=\infty, Q=1}_\theta(x)} \\
&+ (1-\bar{g})(1-\bar{q}) s_\theta(y_2, y_1 \mid \infty,0, x) + (1-\bar{g})(1-\bar{q}) \frac{\dot{p}^{S=\infty, Q=0}_\theta(x)}{p^{S=\infty, Q=0}_\theta(x)} \\
& +t_\theta(x),
\end{align*}
where
\begin{align*}
    s_{\theta}\left(y_2, y_1 \mid g,q, x\right)=\frac{\partial}{\partial \theta} \log f_{ \theta}\left(y_2, y_1 \mid g, q, x\right), \quad \dot{p}^{S=g, Q=q}_\theta(x)=\frac{\partial}{\partial \theta} p^{S=q, Q=q}_\theta(x), \text { and }  t_\theta(x)=\frac{\partial}{\partial \theta} \log f_\theta(x).
\end{align*}
Therefore, the tangent space is the characterized by 
\begin{align*}
\mathcal{T} &= \big\{\bar{g} \bar{q} s_{2,1}(y_2, y_1, x)+ \bar{g} \bar{q} p_{2,1}(x)+\bar{g}(1-\bar{q}) s_{2,0}(y_2, y_1, x)+\bar{g}(1-\bar{q}) p_{2,0}(x) \\
&+ (1-\bar{g}) \bar{q} s_{\infty,1}(y_2, y_1, x)+ (1-\bar{g}) \bar{q} p_{\infty,1}(x) \\
&+ (1-\bar{g})(1-\bar{q}) s_{\infty,0}(y_2, y_1, x)+ (1-\bar{g})(1-\bar{q}) p_{\infty,0}(x)+l(x)\big\},
\end{align*}
where $p_{g,q}(\cdot)$ and $l(\cdot)$ are the scores of propensity scores and marginal density of $x$, respectively. Furthermore, for any functions $\left\{s_{g ,q}(\cdot, \cdot, \cdot), p_{g, q}(\cdot)\right\}_{(g, q) \in \mathcal{S} \times \{0,1\}}$, and $l(\cdot)$ such that, for every $(g, q) \in \mathcal{S} \times \{0,1\}$,
\begin{align}
\label{condition_newey1}
& s_{g ,q}(\cdot, \cdot, \cdot) \in L_2(\mathcal{Y}_2 \times \mathcal{Y}_1 \times \mathcal{X}), \text { with } \int \int s_{g, q}(y_2,y_1, x) f(y_2, y_1 \mid g, q, x) d y_1 d y_2=0, \forall x \in \mathcal{X}, \\
& p_{g,q}(\cdot) \in L_2(\mathcal{X}), \text { with } \sum_{(g, q) \in \mathcal{S} \times \{0,1\}} \int p_{g,q}(x) f(x) d x=0 \label{condition_newey2}\\
&l(\cdot) \in L_2(\mathcal{X}), \text { with } \int l(x) f(x) d x=0 \label{condition_newey3}.
\end{align}
\textsc{Step 2:} Given Lemma \ref{lemma:catt}, $ATT(2,2)$ can be identified as
\begin{align*}
    ATT(2,2) &= \E\bigg[\big( \E[Y_2 - Y_1 \mid S=2, Q=1, X] - \E[Y_2 - Y_1 \mid S=2, Q=0, X]\big) \nonumber\\ 
    &~~~~- \big(\E[Y_2 - Y_1 \mid S=\infty, Q=1, X] - \E[Y_2 - Y_1 \mid S=\infty, Q=0, X]\big) \bigg\rvert S=2, Q=1 \bigg]
\end{align*}
In order to show that $ATT(2,2)$ is pathwise differentiable within the parameterized submodel, let's denote $\tau \equiv ATT(2,2)$ and define:
\begin{small}
    \begin{align*}
\tau(\theta)&= \frac{\left(\iiint \Delta y p^{S=2, Q=1}_\theta(x) f_\theta(y_2, y_1 \mid 2,1, x) f_\theta(x) d y_2 d y_1 d x-\iiint \Delta y p^{S=2,Q=1}_\theta(x) f_\theta(y_2, y_1 \mid 2,0, x) f_\theta(x) d y_2 d y_1 d x\right)}{\int p^{S=2,Q=1}_\theta(x) f_\theta(x) d x} \nonumber \\
&-\frac{\left(\iiint \Delta y p^{S=2, Q=1}_\theta(x) f_\theta(y_2, y_1 \mid \infty,1, x) f_\theta(x) d y_2 d y_1 d x-\iiint \Delta y p^{S=2, Q=1}_\theta(x)  f_\theta(y_2, y_1 \mid \infty,0, x) f_\theta(x) d y_2 d y_1 d x\right)}{\int p^{S=2, Q=1}_\theta(x) f_\theta(x) d x}
\end{align*}
\end{small}
where $\Delta y = y_2 - y_1$. We now need to compute the derivative of $\tau(\theta)$ with respect to $\theta$ and evaluate it at the point $\theta = 0$, specifically $d \tau(\theta) / d \theta \rvert_{\theta=0}$. For notational convenience, let
\begin{align*}
& N(\theta)=\sum_{(g, q) \in \mathcal{S} \times \{0,1\}}(-1)^{\bar{g}+\bar{q}} \iiint \Delta y p^{S=2, Q=1}_\theta(x) f_\theta\left(y_2, y_1 \mid g, q, x\right) f_\theta(x) d y_2 d y_1 d x \\
& D(\theta)=\int p^{S=2, Q=1}_\theta(x) f_\theta(x) d x
\end{align*}
where the factor $(-1)^{\bar{g}+\bar{q}}$ determines the sign corresponding to each pair $(g, q) \in \mathcal{S} \times \{0,1\}$. Thus, we have $\tau(\theta)=N(\theta)/D(\theta)$. By differentiating $\tau(\theta)$ with respect to $\theta$ using the quotient rule we obtain:
$$
\frac{d \tau(\theta)}{d \theta}=\frac{N^{\prime}(\theta) D(\theta)-N(\theta) D^{\prime}(\theta)}{[D(\theta)]^2}
$$
Evaluating at $\theta=0$ (and denoting quantities at $\theta=0$ by dropping the $\theta$ subscript, e.g. $p(1,1; x)=p_{\theta=0}(1,1; x)$ and $f(x)=f_{\theta=0}(x)$), we obtain
$$
\left.\frac{d \tau(\theta)}{d \theta}\right|_{\theta=0}=\frac{N^{\prime}(0)}{D(0)}-\tau(0) \frac{D^{\prime}(0)}{D(0)}
$$
In our notation, $D(0)=\int p^{S=2,Q=1}(x) f(x) d x=p^{S=2, Q=1}$ and $\tau=\tau(0)$. Both $N(\theta)$ and $D(\theta)$ involve integrals of functions that depend on $\theta$ via three components: the  propensity score $p^{S=2, Q=1}_\theta(x)$, the joint conditional density $f_\theta\left(y_2, y_1 \mid g, q, x\right)$, and the marginal density $f_\theta(x)$.

Under standard regularity conditions, we differentiate inside the integrals using the chain rule. In particular, the following identities will be used:
\begin{align}
    \left.\frac{\partial f_\theta\left(y_2, y_1 \mid g, q, x\right)}{\partial \theta}\right|_{\theta=0}&=s\left(y_2, y_1 \mid g, q, x\right) f\left(y_2, y_1 \mid g, q, x\right),\\
    \left.\frac{\partial p^{S=2,Q=1}_\theta(x)}{\partial \theta}\right|_{\theta=0}&=\dot{p}^{S=2, Q=1}(x),\\
    \left.\frac{\partial f_\theta(x)}{\partial \theta}\right|_{\theta=0}&=t(x) f(x).
\end{align}
We gather the contributions from each derivative of the joint conditional density across all possible combinations of $(g, q) \in \mathcal{S} \times \{0,1\}$,
\begin{align*}
    \sum_{(g, q) \in \mathcal{S} \times \{0,1\}}(-1)^{\bar{g}+\bar{q}} \iiint \Delta y p^{S=2, Q=1}(x) s\left(y_2, y_1 \mid g, q, x\right) f\left(y_2, y_1 \mid g, q, x\right) f(x) d y_2 d y_1 d x
\end{align*}
Similarly, we collect the contributions from each derivative of $p^{S=2, Q=1}(x)$,
\begin{align*}
    \sum_{(g, q) \in \mathcal{S} \times \{0,1\}}(-1)^{\bar{g}+\bar{q}} \iiint \Delta y \dot{p}^{S=2, Q=1}(x) f\left(y_2, y_1 \mid g, q, x\right) f(x) d y_2 d y_1 d x .
\end{align*}
and the contributions from each derivative of $f(x)$
\begin{align*}
    \sum_{(g, q) \in \mathcal{S} \times \{0,1\}}(-1)^{\bar{g}+\bar{q}} \iiint \Delta y p^{S=2, Q=1}(x) f\left(y_2, y_1 \mid g, q, x\right) t(x) f(x) d y_2 d y_1 d x .
\end{align*}
Then,  when combining the contributions from the derivatives of $p^{S=2, Q=1}(x)$ and $f(x)$ in $N^{\prime}(0)$ and comparing with $D^{\prime}(0)$, one finds that subtracting by $\tau$ appears naturally. Thus, putting everything together, the derivative of $\tau(\theta)$ at $\theta=0$ becomes
\begin{align*}
\left.\frac{d \tau(\theta)}{d \theta}\right|_{\theta=0}&= \sum_{(g, q) \in \mathcal{S} \times \{0,1\}}(-1)^{\bar{g}+\bar{q}} \frac{\iiint \Delta y p^{S=2, Q=1}(x) s\left(y_2, y_1 \mid g, q, x\right) f\left(y_2, y_1 \mid g, q, x\right) f(x) d y_2 d y_1 d x}{p^{S=2, Q=1}} \\
&+\frac{\int(\tau(x)-\tau) \dot{p}^{S=2, Q=1}(x) f(x) d x}{p^{S=2, Q=1}} \\
&+\frac{\int(\tau(x)-\tau) p^{S=2, Q=1}(x) t(x) f(x) d x}{p^{S=2, Q=1}} .
\end{align*}

A parameter is considered pathwise differentiable if its derivative, evaluated along any smooth parametric submodel, can be expressed as the covariance between a suitable function and the score of the model. Let 
\begin{align*}
F_\tau(O)= & \frac{\bar{g} \bar{q} \left\{\Delta y-m_{\Delta}^{G=2,P=1}(x)\right\}}{p^{S=2, Q=1}} \\
& +\frac{p^{S=2, Q=1}(x)}{p^{S=2, Q=1}}\left\{-\frac{\bar{g}(1-\bar{q})\left\{\Delta y-m_{y_2 - y_1}^{S=2,Q=0}(x)\right\}}{p^{S=2, Q=0}(x)}-\frac{(1-\bar{g}) \bar{q} \left\{\Delta y-m_{y_2 - y_1}^{S=\infty,Q=1}(x)\right\}}{p^{S=\infty, Q=1}(x)}\right. \\
& \left.+\frac{(1-\bar{g})(1-\bar{q})\left\{\Delta y-m_{y_2 - y_1}^{S=\infty,Q=0}(x)\right\}}{p^{S=\infty, Q=0}(x)}\right\} \\
& +\frac{\bar{g} \bar{q}}{p^{S=2, Q=1}} \sum_{(g, q) \in \mathcal{S} \times \{0,1\}}(-1)^{\bar{g} + \bar{q}}\left\{m_{y_2 - y_1}^{S=g, Q=q}(x)-\int m_{y_2 - y_1}^{S=g, Q=q}(x) f(x) d x\right\}.
\end{align*}
For the parametric submodel with score function $s_\theta(y_1, y_0, g,q, x)$, we can express the derivative of $\tau(\theta)$ with respect to $\theta$ as follows:
$$
\frac{\partial \tau\left(\theta\right)}{\partial \theta}\bigg\rvert_{\theta = 0}=\mathbb{E}\left[F_\tau\left(O\right) \cdot s_{0}\left(O\right)\right]
$$
therefore, concluding that $\tau$ is pathwise differentiable.

\textsc{Step 3:} To show that $F_\tau(O)$ is the efficient influence function for $\tau$, we verify equation (9) and Theorem 3.1 in \cite{newey1990}. This requires showing that $F_\tau(\cdot)$ belongs to the tangent space $\mathcal{T}$. Notice that terms in $F_{\tau}(\cdot)$ can be re-expressed for all pairs $(g,q)$ as follows: 
\begin{align*}
    s_{2,1}(y_2, y_1, x)&=\frac{1}{p^{S=2, Q=1}} \left\{ y_2 - y_1 -m_{y_2 - y_1}^{S=2, Q=1}(x)\right\},\\
    s_{g,q}(y_2, y_1, x) &= \frac{1}{p^{S=2, Q=1}}(-1)^{\bar{g}+\bar{q}} \frac{p^{S=2, Q=1}(x)\left\{y_2 - y_1 -m_{y_2 - y_1}^{S=g, Q=q}(x)\right\}}{p^{S=g, Q=q}(x)},\\
    p_{2,1}(x) &= \frac{1}{p^{S=2, Q=1}} \sum_{(g, q) \in \mathcal{S} \times \{0,1\}}(-1)^{\bar{g}+\bar{q}}\left\{m_{y_2 - y_1}^{S=g, Q=q}(x)-\int m_{y_2 - y_1}^{S=g, Q=q}(x) f(x) d x\right\},\\
    p_{g,q}(x) &= 0 \quad \text{and}, \quad l(x) =0.
\end{align*}
Hence, one can easily confirm that the conditions outlined in equations (\ref{condition_newey1}) through (\ref{condition_newey3}) are satisfied, thereby establishing the desired result. Finally, notice that for any pair $(g,q)$ we have
\begin{align*}
    p^{S=2, Q=1} = \E\left[\frac{1\{S=g, Q=q\} \cdot p^{S=2, Q=1}(X)}{p^{S=g, Q=q}(X)}\right]
\end{align*}
and through straightforward algebraic manipulation, $F_{\tau}(\cdot)$ can be restated as
\begin{align*}
    F_{\tau}(O) &= \frac{1\{S=2, Q=1\}}{\mathbb{E}\left[1\{S=2, Q=1\}\right]} \cdot \Big(Y_2 - Y_1 -m_{Y_2 - Y_1}^{S=2, Q=0}(X)-m_{Y_2 - Y_1}^{S=\infty, Q=1}(X)+m_{Y_2 - Y_1}^{S=\infty, Q=0}(X) - \tau\Big)\\
    &- \Big(Y_2 - Y_1-m_{Y_2 - Y_1}^{S=2, Q=0}(X)\Big) \cdot \left(\frac{\frac{1\{S=2, Q=0\} \cdot p^{S=2, Q=1}(X)}{p^{S=2, Q=0}(X)}}{\mathbb{E}\left[\frac{1\{S=2, Q=0\} \cdot p^{S=2, Q=1}(X)}{p^{S=2, Q=0}(X)}\right]} \right)\\
    &- \Big(Y_2 - Y_1 -m_{Y_2 - Y_1}^{S=\infty, Q=1}(X)\Big) \cdot \left(\frac{\frac{1\{S=\infty, Q=1\} \cdot p^{S=2, Q=1}(X)}{p^{S=\infty, Q=1}(X)}}{\mathbb{E}\left[\frac{1\{S=\infty, Q=1\} \cdot p^{S=2, Q=1}(X)}{p^{S=\infty, Q=1}(X)}\right]} \right) \\
    &+ \Big(Y_2 - Y_1 -m_{Y_2 - Y_1}^{S=\infty, Q=0}(X)\Big) \cdot \left(\frac{\frac{1\{S=\infty, Q=0\} \cdot p^{S=2, Q=1}(X)}{p^{S=\infty, Q=0}(X)}}{\mathbb{E}\left[\frac{1\{S=\infty, Q=0\} \cdot p^{S=2, Q=1}(X)}{p^{S=\infty, Q=0}(X)}\right]} \right) \\
    &= \eta_{eff}(O)
\end{align*}
Therefore, the proof is concluded.
\end{proof}

\textbf{Proof of Theorem \ref{thm:Identification}}

\begin{proof}
    We start by showing that $ATT(g,t) = ATT_{ra, g_c}(g,t)$. Given Lemmas \ref{lemma:catt} and \ref{lemma:eif},
    \begin{align*}
        ATT(g,t) &= \E[CATT_x(g,t) | S=g, Q=1]\\
        &= \E\left[\E[Y_{t} - Y_{g-1} | X, S =g, Q = 1] | S=g, Q=1\right] - \E\bigg[ \E[Y_t - Y_{g-1} | X, S=g, Q=0] \\
        &~~~+ \E[Y_t - Y_{g-1} | X, S=g_c, Q=1] - \E[Y_t - Y_{g-1} | X, S=g_c, Q=0] \bigg| S=g, Q=1\bigg]\\
        &= \E[Y_{t} - Y_{g-1} | S=g, Q=1] - \E\bigg[ m_{Y_t - Y_{g-1}}^{S=g, Q=0}(X) \\
        &~~~+ m_{Y_t - Y_{g-1}}^{S=g_c, Q=1}(X) - m_{Y_t - Y_{g-1}}^{S=g_c, Q=0}(X) \bigg| S=g, Q=1\bigg]\\
        &= \E\left[Y_{t} - Y_{g-1} - m_{Y_t - Y_{g-1}}^{S=g, Q=0}(X) - m_{Y_t - Y_{g-1}}^{S=g_c, Q=1}(X) + m_{Y_t - Y_{g-1}}^{S=g_c, Q=0}(X) | S=g, Q=1 \right]\\
        &= \E\Big[{w}^{S=g,Q=1}_{\text{trt}}(S, Q)\Big(Y_t - Y_{g-1} - {m}_{Y_t-Y_{g-1}}^{S=g,Q=0}(X) - {m}_{Y_t-Y_{g-1}}^{S=g_{\text{c}},Q=1}(X) + {m}_{Y_t-Y_{g-1}}^{S=g_{\text{c}},Q=0}(X) \Big)\Big]
    \end{align*}
   Hence, we have established that $ATT(g,t) = ATT_{ra, g_c}(g,t)$. Our next objective is to demonstrate the equality $ATT(g,t) = {ATT}_{\text{ipw}, g_{\text{c}}}(g,t)$. Specifically, we aim to prove that
    \begin{align*}
        \E[w_{g',q'}^{S=g, Q=1}(S,Q,X) \cdot(Y_t - Y_{g-1})] = \dfrac{\E[1\{S=g, Q=1\} \cdot \E[Y_t - Y_{g-1} | X, S=g', Q=q'] ]}{\E[1\{S=g, Q=1\}]}.
    \end{align*}
    By LIE and conditional probabilities, we observe that
    \begingroup
\footnotesize
    \begin{align}
    \label{b1}
     &\E\left[\frac{1\{S=g', Q=q'\} \cdot p_{g',q'}^{S=q, Q=1}(X)}{1 - p_{g',q'}^{S=q, Q=1}(X)} \cdot (Y_t - Y_{g-1}) \right] \nonumber\\ 
     &= \E\left[ \dfrac{\E[1\{S=g, Q=1 \}|X] \cdot 1\{S=g', Q=q'\} }{\E[1\{S=g', Q=q'\}|X]} \cdot (Y_t - Y_{g-1}) \right] \nonumber\\
     &= \E\left[\frac{\E[1\{S=g, Q=1 \}|X]}{\E[1\{S=g', Q=q'\}|X]} \E\left[ 1\{S=g', Q=q'\} \cdot (Y_t - Y_{g-1}) | X\right] \right] \nonumber\\
     &= \E\left[\E[1\{S=g, Q=1 \}|X] \cdot  \E\left[(Y_t - Y_{g-1}) | X, S=g', Q=q'\right] \right] \nonumber\\
     &= \E\left[1\{S=g, Q=1 \} \cdot \E\left[(Y_t - Y_{g-1}) | X, S=g', Q=q'\right] \right]
    \end{align}
    \endgroup
    Given that
    \begin{align}
    \label{b2}
        \E\left[\dfrac{1\{S=g', Q=q'\} \cdot p_{g',q'}^{S=q, Q=1}(X)}{1 - p_{g',q'}^{S=q, Q=1}(X)}\right] &= \E\left[ \dfrac{\E[1\{S=g, Q=1 \}|X] \cdot 1\{S=g', Q=q'\} }{\E[1\{S=g', Q=q'\}|X]} \right] \nonumber\\
        &= \E\left[ \dfrac{\E[1\{S=g, Q=1 \}|X] \cdot \E[1\{S=g', Q=q'\}|X] }{\E[1\{S=g', Q=q'\}|X]} \right] \nonumber\\
        &=  \E\left[\E[1\{S=g, Q=1 \}|X] \right] \nonumber\\
        &= \E[1\{S=g, Q=1\}].
    \end{align}
Combining \eqref{b1} and \eqref{b2}
\begin{align*}
    \E[w_{g', q'}^{S=g, Q=1}(S,Q,X) \cdot (Y_t - Y_{g-1})] &= \dfrac{\E\left[1\{S=g, Q=1 \} \cdot \E\left[(Y_t - Y_{g-1}) | X, S=g', Q=q'\right] \right]}{\E[1\{S=g, Q=1\}]}\\
    &= \E\left[ \E[Y_t - Y_{g-1}| X, S=g', Q=q']  | S=g, Q=1\right]
\end{align*}
Thus, analogous reasoning as in RA demonstrates that $ATT(g,t) = {ATT}_{\text{ipw}, g_{\text{c}}}(g,t)$. Finally, we need to show that $ATT(g,t) = ATT_{\text{dr}, g_{\text{c}}}(g,t)$.
    \begingroup
\footnotesize
\begin{align*}
&{ATT}_{\text{dr}, g_{\text{c}}}(g,t) \\
&= \E\left[ \left({w}^{S=g,Q=1}_{\text{trt}}(S,Q) - {w}^{S=g,Q=1}_{g,0}(S,Q,X) \right)\left(Y_t- Y_{g-1} - {m}_{Y_t-Y_{g-1}}^{S=g,Q=0}(X) \right)\right]\nonumber \\
&~~ +\E\left[ \left({w}^{S=g,Q=1}_{\text{trt}}(S,Q) - {w}^{S=g,Q=1}_{g_{\text{c}},1}(S,Q,X) \right)\left(Y_t- Y_{g-1} -  {m}_{Y_t-Y_{g-1}}^{S=g_{\text{c}},Q=1}(X)  \right)\right] \\
&~~- \E\left[ \left({w}^{S=g,Q=1}_{\text{trt}}(S,Q) -{w}^{S=g,Q=1}_{g_{\text{c}},0}(S,Q,X)\right) \left(Y_t- Y_{g-1} - {m}_{Y_t-Y_{g-1}}^{S=g_{\text{c}},Q=0}(X)\right)\right] \nonumber\\
&= \E\left[ \left({w}^{S=g,Q=1}_{\text{trt}}(S,Q) - {w}^{S=g,Q=1}_{g,0}(S,Q,X) \right) \left(Y_t- Y_{g-1}\right) \right] \\
&~~ + \E\left[ \left({w}^{S=g,Q=1}_{\text{trt}}(S,Q) - {w}^{S=g,Q=1}_{g_c,1}(S,Q,X) \right) \left(Y_t- Y_{g-1}\right) \right] \\
&~~ - \E\left[ \left({w}^{S=g,Q=1}_{\text{trt}}(S,Q) - {w}^{S=g,Q=1}_{g_c,0}(S,Q,X) \right) \left(Y_t- Y_{g-1}\right) \right] \\
&~~ - \E\left[ \left({w}^{S=g,Q=1}_{\text{trt}}(S,Q) - {w}^{S=g,Q=1}_{g,0}(S,Q,X) \right) {m}_{Y_t-Y_{g-1}}^{S=g,Q=0}(X) \right] \\
&~~ - \E\left[ \left({w}^{S=g,Q=1}_{\text{trt}}(S,Q) - {w}^{S=g,Q=1}_{g,0}(S,Q,X) \right) {m}_{Y_t-Y_{g-1}}^{S=g_{\text{c}},Q=1}(X) \right]  \\
&~~ + \E\left[ \left({w}^{S=g,Q=1}_{\text{trt}}(S,Q) - {w}^{S=g,Q=1}_{g,0}(S,Q,X) \right) {m}_{Y_t-Y_{g-1}}^{S=g_{\text{c}},Q=0}(X) \right] \\
&= \E\left[\left({w}^{S=g,Q=1}_{\text{trt}}(S,Q) - {w}^{S=g,Q=1}_{g,0}(S,Q,X) \right) \left( Y_t-Y_{g-1}\right)\right]\nonumber\\ 
&~~ - \E\left[\left({w}^{S=g,Q=1}_{g_{\text{c}},1}(S,Q,X) - {w}^{S=g,Q=1}_{g_{\text{c}},0}(S,Q,X)\right) \left( Y_t-Y_{g-1}\right)\right] \\
&~~ - \E\left[ \left({w}^{S=g,Q=1}_{\text{trt}}(S,Q) - {w}^{S=g,Q=1}_{g,0}(S,Q,X) \right) {m}_{Y_t-Y_{g-1}}^{S=g,Q=0}(X) \right] \\
&~~ - \E\left[ \left({w}^{S=g,Q=1}_{\text{trt}}(S,Q) - {w}^{S=g,Q=1}_{g,0}(S,Q,X) \right) {m}_{Y_t-Y_{g-1}}^{S=g_{\text{c}},Q=1}(X) \right]  \\
&~~ + \E\left[ \left({w}^{S=g,Q=1}_{\text{trt}}(S,Q) - {w}^{S=g,Q=1}_{g,0}(S,Q,X) \right) {m}_{Y_t-Y_{g-1}}^{S=g_{\text{c}},Q=0}(X) \right] \\
&= ATT(g,t) \\
&~~ - \frac{1}{\E[1\{S=g, Q=1\}]} \E\left[ \left(1\{S=g, Q=1\} -  \dfrac{1\{S=g, Q=0\} \E[1\{S=g, Q=1\} | X]}{\E[1\{S=g, Q=0\}| X]} \right){m}_{Y_t-Y_{g-1}}^{S=g,Q=0}(X) \right] \\
&~~ - \frac{1}{\E[1\{S=g, Q=1\}]} \E\left[ \left(1\{S=g, Q=1\} -  \dfrac{1\{S=g_c, Q=1\} \E[1\{S=g, Q=1\} | X]}{\E[1\{S=g_c, Q=1\}| X]} \right){m}_{Y_t-Y_{g-1}}^{S=g_c,Q=1}(X) \right] \\
&~~ + \frac{1}{\E[1\{S=g, Q=1\}]} \E\left[ \left(1\{S=g, Q=1\} -  \dfrac{1\{S=g_c, Q=0\} \E[1\{S=g, Q=1\} | X]}{\E[1\{S=g_c, Q=0\}| X]} \right){m}_{Y_t-Y_{g-1}}^{S=g_c,Q=0}(X) \right]\\
 &= ATT(g,t) - \dfrac{1}{\E[1\{S=g, Q=1\}]} \E\left[\left(\E[1\{S=g, Q=1\} | X] - \E[1\{S=g, Q=1\} | X] \right)\cdot {m}_{Y_t-Y_{g-1}}^{S=g,Q=0}(X) \right]\\
&~~ - \dfrac{1}{\E[1\{S=g, Q=1\}]} \E\left[\left(\E[1\{S=g, Q=1\} | X] - \E[1\{S=g, Q=1\} | X] \right)\cdot {m}_{Y_t-Y_{g-1}}^{S=g_c,Q=1}(X) \right]\\
&~~ + \dfrac{1}{\E[1\{S=g, Q=1\}]} \E\left[\left(\E[1\{S=g, Q=1\} | X] - \E[1\{S=g, Q=1\} | X] \right)\cdot {m}_{Y_t-Y_{g-1}}^{S=g_c,Q=0}(X) \right]\\
 &= ATT(g,t),
\end{align*}
\endgroup
The second and third equalities result from straightforward algebraic manipulations. The fourth equality derives from $ATT(g,t) = {ATT}_{\text{ipw}, g_{\text{c}}}(g,t)$ and references equations \eqref{b1} and \eqref{b2}. The fifth equality is a consequence of the LIE. This concludes the proof.
\end{proof}

\textbf{Proof of Corollary \ref{cor:over-id}}

\begin{proof}
    Based on results from Theorem \ref{thm:Identification}, it is established that for any $g \in \mathcal{G}_{trt}$ and $t \in \{2, \dots, T\}$ where $t \geq g$, and for $g_c \in \mathcal{S}$ such that $g_c > t$, the following holds: $ATT(g,t) = ATT_{\text{dr}, g_{\text{c}}}(g,t)$. Multiplying both sides of the previous expression by an arbitrary weight $w_{g_c}^{g,t}$ and summing over all admissible comparison groups gives
    \begin{align*}
        \sum_{g_c \in \mathcal{G}_{c}^{g,t}} w_{g_c}^{g,t} ATT(g,t) &=  \sum_{g_c \in \mathcal{G}_{c}^{g,t}} w_{g_c}^{g,t} ATT_{\text{dr}, g_{\text{c}}}(g,t)\\
        ATT(g,t) \underbrace{\sum_{g_c \in \mathcal{G}_{c}^{g,t}} w_{g_c}^{g,t}}_{=1} &= \sum_{g_c \in \mathcal{G}_{c}^{g,t}} w_{g_c}^{g,t} ATT_{\text{dr}, g_{\text{c}}}(g,t)\\
        ATT(g,t) &= \sum_{g_c \in \mathcal{G}_{c}^{g,t}} w_{g_c}^{g,t} ATT_{\text{dr}, g_{\text{c}}}(g,t).
    \end{align*}
    The left-hand side of the second equality simplifies because $ATT(g,t)$ does not depend on $g_c$; allowing us to take it outside the summation. Since the weights sum to one, the desired result is obtained.
\end{proof}

\textbf{Proof of Theorem \ref{thm:CAN}}

\begin{proof}
For the case of $\widehat{ATT}_{\text{dr},\gc}(g,t)$, Theorem \ref{thm:Identification} establishes that $ATT(g,t)$ is point-identified for all $g \in \mathcal{G}_{trt}$, $t \in \{2, \dots, T\}$, and $\gc \in \mathcal{G}_{c}^{g,t}$ such that $t \geq g$. Additionally, we observe that $\widehat{ATT}_{\text{dr},\gc}(g,t)$ is comprised of a function involving three DR DiDs. Therefore, the asymptotic linear representation of $\sqrt{n}\left(\widehat{ATT}_{\text{dr}, \gc}(g,t) - {ATT}(g,t) \right)$ follows from Theorem A.1(a) in \cite{SantAnna2020}. This is due to the fact that $\psi^{g,t}_{\gc}(W_i; \kappa^{g,t}_{0, \gc})$ is a weighted sum of the influence functions for each DR DiD, with weights reflecting the number of units in each of them. The asymptotic normality follows from the Lindeberg-Lévy central limit theorem. 

According to Corollary \ref{cor:over-id}, the estimator $\widehat{ATT}_{\text{dr},\text{gmm}}(g,t)$ implies that $ATT(g,t)$ is over-identified. This means that different $g_c \in \mathcal{G}_{c}^{g,t}$ can be used, and any weighted sum of these estimands will identify our parameter of interest. For each comparison group $g_c$, the estimator $\widehat{ATT}_{\text{dr}, \gc}(g,t)$ is determined by a function involving three DR DiD terms, and $\widehat{ATT}_{\text{dr}, \text{gmm}}(g,t)$ is then a weighted sum of these terms. Consequently, its asymptotic linear representation, as established in Theorem A.1(a) from \cite{SantAnna2020}, is rescaled by the factor $\textbf{1}'{\Omega}^{-1}_{g,t}/\textbf{1}'{\Omega}^{-1}_{g,t}\textbf{1}$. This adjustment reflects the incorporation of the $k_{g,t} \times 1$ vector comprising all ${\mathbb{RIF}}_{\text{dr}, g_{\text{c}}}(g,t)$ for every $g_c \in \mathcal{G}_{c}^{g,t}$, given that $E[\mathbb{IF}_{dr}(g,t)] = 0$. As before, asymptotic normality follows from an implementation of the Lindeberg-Lévy central limit theorem.

To demonstrate that $\Omega_{g,t, \text{gmm}} = \left({\textbf{1}'{\Omega}^{-1}_{g,t}\textbf{1}}\right)^{-1} \leq \Omega_{g,t,\gc}$ for any $\gc \in \mathcal{G}_{\text{c}}^{\text{g,t}}$, we can reformulate our minimum variance problem into a GMM problem. This involves a vector of moment conditions represented as $\E[\mathbb{RIF}_{dr}(g,t) - \theta^{g,t}] = 0$. Under the conventional regularity conditions that underpin GMM theory \citep{Newey_McFadden_1994_Handbook}, the asymptotic variance associated with any set of weights $w$ is expressed as:
\begin{align*}
    \Omega_{g,t,w} = (\textbf{1}' W \textbf{1})^{-1} \textbf{1}'W\Omega_{g,t}W\textbf{1}(\textbf{1}'W\textbf{1})^{-1}
\end{align*}
where $W$ denotes any positive definite weight matrix. According to Efficient GMM theory, choosing $W=\Omega_{g,t}^{-1}$ minimizes the asymptotic variance, leading to $\Omega_{g,t, \text{gmm}} = \left({\textbf{1}'{\Omega}^{-1}_{g,t}\textbf{1}}\right)^{-1}$. More generally, for any $W > 0$, we have:
\begin{align*}
    (\textbf{1}' W \textbf{1})^{-1} \textbf{1}'W\Omega_{g,t}W\textbf{1}(\textbf{1}'W\textbf{1})^{-1} - \left({\textbf{1}'{\Omega}^{-1}_{g,t}\textbf{1}}\right)^{-1} > 0 
\end{align*}
Consequently, it follows that $\Omega_{g,t, \text{gmm}} \leq \Omega_{g,t, w}$.
\end{proof}

\textbf{Proof of Corollary \ref{cor:asy_es}}

\begin{proof}
    The population event-study parameter is defined by $ES(e) = \sum_{g \in \mathcal{G}_{trt}} \P(G=g | G+e \in [1, T]) \cdot ATT(g, g+e)$ and its sample analogue is provided by Equation \eqref{eqn:es_estimator}. By adding and subtracting the term $\P(G=g | G+e \in [1,T])\cdot \widehat{ATT}_{\text{dr}, \text{gmm}}(g,t)$, followed by multiplying the bias term by $\sqrt{n}$, we get the following expression
    \begin{align*}
        &\sqrt{n}\left(\widehat{ES}_{\text{dr},\text{gmm}}(e) - ES(e) \right) \\
        &= \sqrt{n} \sum_{g \in \mathcal{G}_{trt}} \P(G=g | G+e \in [1,T]) \cdot \left(\widehat{ATT}_{\text{dr}, \text{gmm}}(g,t) - ATT(g,t)\right) \\
        &~~~~~~ + \sqrt{n} \sum_{g \in \mathcal{G}_{trt}}  \left(\P_n(G=g | G+e \in [1,T]) - \P(G=g | G+e \in [1,T])\right) \cdot ATT(g,t)  + o_p(1).
    \end{align*}
Given the asymptotic linear representation of both $\sqrt{n}(\widehat{ATT}_{\text{dr}, \text{gmm}}(g,t) - ATT(g,t))$ and $\sqrt{n}(\P_n(G=g | G+e \in [1,T]) - \P(G=g | G+e \in [1,T]))$ as stated in Theorem \ref{thm:CAN} and Equation \eqref{alr_weights_es}, respectively, we can substitute these expressions into the representation above and arrange the summation over $i = 1, \ldots, n$ to obtain
\begingroup
\footnotesize
\begin{align*}
     &\sqrt{n}\left(\widehat{ES}_{\text{dr},\text{gmm}}(e) - ES(e) \right) \\
     &= \frac{1}{\sqrt{n}} \sum_{i=1}^{n} \underbrace{\Big[ \sum_{g \in \mathcal{G}_{\text{trt}}} \big( \P(G=g| G+e \in [1,T]) \cdot \dfrac{\textbf{1}'{\Omega}^{-1}_{g,t}}{\textbf{1}'{\Omega}^{-1}_{g,t}\textbf{1}} \psi^{g,t}(W_i; \kappa^{g,t}_{0}) + \xi^{g,e}(W_i) \cdot ATT(g,t) \big)  \Big]}_{l^{es,e}_{\text{gmm}}(W_i)} + o_p(1).
\end{align*}
\endgroup
As before, asymptotic normality is established through the Lindeberg-Lévy central limit theorem. This concludes the proof.
\end{proof}

\section{Additional Details about Monte Carlo Simulations}\label{appendix:simulations}
This section provides additional details about the Monte Carlo designs discussed in the main text. 

\subsection{More details on two-period DDD setup with covariates}
\label{appendix:sims_2_periods}
As discussed in Section \ref{sec:sims_2_periods}, we considered a two-period DDD setup with covariates where, for a generic four-dimensional vector $O$, the conditional probability of each unit belonging to a subgroup $(g,q) \in \{2,\infty\} \times \{0,1\}$ is
\begin{equation}
\label{appendix:ps_score_sims}
    \P[S = g, Q=q | O] = p^{S=g, Q=q}(O) =  \dfrac{\exp(f^{ps}_{S=g,Q=q}(O))}{\sum_{(g,q) \in \mathcal{S}_{\text{des-1}} \times \{0,1\}}\exp(f^{ps}_{S=g,Q=q}(O))},
\end{equation}
such that, for each $(g,q) \in \mathcal{S}_{\text{des-1}} \times \{0,1\}$, we define the linear process $f^{ps}_{S=g_c,Q=q}(O) = c O'\gamma_{\gc,q}$, where $c$ is a scalar that guarantees reasonable overlap, and $\gamma_{\gc, q}$ is a $4 \times 1$ vector of coefficients for each variable in $O$ given by
\begin{eqnarray*}
\gamma_{\infty, 0} = \begin{pmatrix}
-1 \\
0.5 \\
-0.25 \\
-0.1
\end{pmatrix}; \gamma_{\infty,1} = \begin{pmatrix}
-0.5 \\
2 \\
0.5 \\
-0.2
\end{pmatrix}; \gamma_{2,0} = \begin{pmatrix}
3 \\
-1.5 \\
0.75 \\
-0.3
\end{pmatrix}; \text{and }
c = \begin{cases}
    0.2, & \text{if $S=\infty, Q=0$}\\
    0.2, & \text{if $S=\infty, Q=1$}\\
    0.05, & \text{if $S=2, Q=0$}
\end{cases}
\end{eqnarray*}
    
In this design, the probabilities for each subgroup are denoted as $\big\{p^{S=2, Q=0}(O)$, $p^{S=2, Q=1}(O)$, $p^{S=\infty, Q=0}(O)$, $p^{S=\infty, Q=1}(O)\big\}$, and their sum is equal to one. Define the random variable \( U \sim \text{Uniform}[0,1] \). The assignment process to each group is defined as in \eqref{eq_appen:assignment_process}.

The potential outcomes are defined as in \eqref{eqn:PO_2_periods}. We define $f^{reg}(O,S)$ as
\begin{align*}
    f^{reg}(O,S) = 1\{S_i = 2\} \cdot f^{reg}_{S=2}(O) + 1\{S_i = \infty\} \cdot f^{reg}_{S=\infty}(O),
\end{align*}
with $f^{reg}_{S=g}(O) =  \alpha + O' \beta_g$, with $\alpha=2010$, $\beta_{2} = (27.4, 13.7, 13.7, 13.7)'$  and $\beta_{\infty} = 0.5 \beta_{2}$. We define the unobserved heterogeneity term $\nu_i(O_i, S_i, Q_i)$ as 
\begin{align*}
    \nu_i(O_i, S_i, Q_i) \overset{d}{\sim} N(1\{S_i=2\} Q_i f^{reg}_{S=2}(O_i) +  1\{S_i=\infty\} Q_i f^{reg}_{S=\infty}(O_i), 1).
\end{align*}

Finally, as discussed in our main text, we allow misspecification of propensity score and/or outcome regression models. In all four DGPs, the observed data is $W_i = \{Y_{i,1}, Y_{i,2}, S_{i}, Q_{i}, X_{i}\}_{i=1}^{n}$, where the covariates $X_i$ are generated as follows. Let $Z_i = (Z_{i,1}, Z_{i,2}, Z_{i,3}, Z_{i,4})' \overset{d}{\sim} N(0,I_4)$. The observed vector of covariates $X_i = (X_{i,1}, X_{i,2}, X_{i,3}, X_{i,4})'$ where for every $k =1, \cdots, 4$, $X_{k} = (\Tilde{X}_{k} - \E[\Tilde{X}_{k}]) \Big / \sqrt{\Var(\Tilde{X}_{k})}$ such that
\begin{align*}
    \tilde{X}_1 &= \exp \left(0.5 Z_1\right),\\
    \tilde{X}_2 &= 10+Z_2 \big / \left(1+\exp \left(Z_1\right)\right), \\
    \tilde{X}_3 &= \left(0.6+Z_1 Z_3 / 25\right)^3, \\
    \tilde{X}_4 &= \left(20+Z_1+Z_4\right)^2.
\end{align*}  
These transformations build on \cite{kang_schafer_2007} and \citet{SantAnna2020}. We consider four different DGPs depending on whether the propensity score and/or the outcome regressions are misspecified. The specifics are outlined below:
\begin{itemize} 
\item \textit{DGP 1}: All nuisance functions depend on $X$, namely $f^{reg}_{S=g}(X)$ and $f^{ps}_{S=g,Q=q}(X)$. All working models are correctly specified in this DGP, as $X$ is observed in the data. 
\item \textit{DGP 2}: Regression models depend on $X$, $f^{reg}_{S=g}(X)$, and  the propensity score depends on $Z$ $f^{ps}_{S=g,Q=q}(Z)$. The working model for propensity score is misspecified, whereas the working models for the outcomes are correctly specified. 
\item \textit{DGP 3}: Regression models depend on $Z$, $f^{reg}_{S=g}(Z)$, and propensity score depends on $X$, $f^{ps}_{S=g,Q=q}(X)$.  The working model for propensity score is correctly specifies, whereas the working models for the outcomes are misspecified.  
\item \textit{DGP 4}: All nuisances functions depend on $Z$, namely $f^{reg}_{S=g}(Z)$ and $f^{ps}_{S=g,Q=q}(Z)$. All working models are misspecified. \end{itemize}

We graphically illustrate the results in Figure \ref{fig:Sims_2_periods_MC} when $n=5,000$. The following table presents the traditional summary statistics for the Monte Carlo involving average bias, root mean square error (RMSE), empirical 95\% coverage probability, and the average length of a 95\% confidence interval under 1,000 Monte Carlo repetitions. We report these results for $n=1,000$, $n=5,000$, and $n=10,000$. These results are self-explanatory.

\begin{sidewaystable}[p]
\centering
\caption{Monte Carlo results for $ATT(2,2)$ in DGP 1 – DGP 4, with two periods and covariates}
\label{mc_sim_att22}
\begin{adjustbox}{width=\textwidth,height=\textheight,keepaspectratio}
\begin{threeparttable}
\begin{tabular}{@{}lcccccccccccccccc@{}}
\toprule
& \multicolumn{4}{c}{DGP 1} 
& \multicolumn{4}{c}{DGP 2} 
& \multicolumn{4}{c}{DGP 3} 
& \multicolumn{4}{c}{DGP 4} \\
\cmidrule(lr){2-5}\cmidrule(lr){6-9}\cmidrule(lr){10-13}\cmidrule(lr){14-17}
Estimator 
& \small{Bias} & \small{RMSE} & \small{Cov.\ 95} & \small{ALCI}
& \small{Bias} & \small{RMSE} & \small{Cov.\ 95} & \small{ALCI}
& \small{Bias} & \small{RMSE} & \small{Cov.\ 95} & \small{ALCI}
& \small{Bias} & \small{RMSE} & \small{Cov.\ 95} & \small{ALCI} \\
\midrule

& \multicolumn{16}{c}{$n=1000$} \\ \cmidrule(lr){2-17}
DRDDD    &  -0.003 & 0.188 & 0.953 &  0.732 &  -0.008 & 0.185 & 0.944 & 0.729 &  0.032 & 1.606 & 0.952 & 6.310 &  -2.072 & 2.629 & 0.747 & 6.304 \\
3WFE     &  -9.298 &10.301 & 0.674 & 22.431 &  -8.038 & 9.090 & 0.780 &22.426 & -4.985 & 6.160 & 0.977 &24.794 &  -7.755 & 8.574 & 0.894 &24.714 \\
DRDID-DIF&  -2.681 & 3.361 & 0.762 &  8.100 &  -2.302 & 3.016 & 0.798 & 7.862 & -1.280 & 2.515 & 0.911 & 8.514 &  -3.402 & 4.054 & 0.628 & 8.332 \\
M-3WFE   &   1.192 & 1.809 & 0.855 &  5.275 &   0.865 & 1.621 & 0.903 & 5.331 &  1.188 & 2.219 & 0.898 & 7.177 &  -1.076 & 2.185 & 0.900 & 7.161 \\

\midrule
& \multicolumn{16}{c}{$n=5000$} \\ \cmidrule(lr){2-17}
DRDDD    &  -0.002 & 0.083 & 0.944 & 0.324 &  0.000 & 0.084 & 0.951 & 0.323 & -0.014 & 0.746 & 0.939 & 2.794 &  -2.019 & 2.141 & 0.190 & 2.792 \\
3WFE     &  -9.059 & 9.291 & 0.022 &10.055 &  -7.929 & 8.156 & 0.069 &10.065 & -5.049 & 5.324 & 0.601 &11.125 &  -7.563 & 7.733 & 0.101 &11.097 \\
DRDID-DIF&  -2.616 & 2.782 & 0.203 & 3.635 &  -2.270 & 2.442 & 0.283 & 3.520 & -1.372 & 1.706 & 0.698 & 3.832 &  -3.326 & 3.463 & 0.064 & 3.755 \\
M-3WFE   &   1.154 & 1.293 & 0.532 & 2.358 &   0.847 & 1.050 & 0.708 & 2.387 &  1.137 & 1.419 & 0.710 & 3.206 &  -0.985 & 1.284 & 0.777 & 3.197 \\

\midrule
& \multicolumn{16}{c}{$n=10000$} \\ \cmidrule(lr){2-17}
DRDDD    &   0.001 & 0.058 & 0.954 & 0.229 &   0.002 & 0.060 & 0.942 & 0.228 & -0.008 & 0.495 & 0.944 & 1.973 &  -2.013 & 2.078 & 0.024 & 1.973 \\
3WFE     &  -9.248 & 9.349 & 0.000 & 7.106 &  -7.970 & 8.091 & 0.001 & 7.119 & -5.108 & 5.231 & 0.146 & 7.870 &  -7.597 & 7.677 & 0.000 & 7.852 \\
DRDID-DIF&  -2.686 & 2.759 & 0.010 & 2.568 &  -2.295 & 2.381 & 0.036 & 2.489 & -1.405 & 1.556 & 0.480 & 2.716 &  -3.339 & 3.404 & 0.001 & 2.660 \\
M-3WFE   &   1.153 & 1.226 & 0.223 & 1.667 &   0.829 & 0.935 & 0.520 & 1.688 &  1.138 & 1.272 & 0.499 & 2.267 &  -0.957 & 1.127 & 0.619 & 2.260 \\

\bottomrule
\end{tabular}
\begin{tablenotes}
\item \textit{Notes:} This table summarizes the Monte Carlo experiments for four distinct DGPs as discussed in Section \ref{sec:sims_2_periods}. The DRDDD row includes our proposed doubly robust estimators with $g_c = \infty$. The 3WFE and M-3WFE rows display OLS estimates for 3WFE models in equations \eqref{eqn:3WFE_with_X} and \eqref{eqn:3WFE_with_X_Mundlak}. The DRDID-DIF row shows the difference between two Doubly Robust DiDs. Columns represent average bias, RMSE, coverage probability at 95\% (Cov. 95), and average confidence interval length (ALCI) for each estimator. The 95\% confidence intervals use point-wise asymptotic critical values. Results span sample sizes $n=\{1{,}000,5{,}000,10{,}000\}$ over 1,000 simulations, with the true $ATT(2,2)$ being zero.
\end{tablenotes}
\end{threeparttable}
\end{adjustbox}
\end{sidewaystable}

\subsection{More details on staggered DDD setups}
\label{appendix:sims_staggered}
We now provide more details about the simulation designs for staggered DDD as discussed in Section \ref{sims_staggered}. The potential outcomes are as defined in \eqref{eqn:PO_staggered} with the unobserved heterogeneity term $\nu_i(S_i, Q_i)$ being defined as
\begin{align}
    \nu_i(S_i, Q_i) \overset{d}{\sim} N(1\{S_i=2\} (2 +Q_i)\alpha + 1\{S_i=2\} (2 +Q_i)\alpha + 1\{S_i=\infty\} Q_i\alpha,1),
\end{align}
where $\alpha =278.5$. All the other relevant information is described in Section \ref{sims_staggered}.

As we summarize the simulation results in Section \ref{sims_staggered} via graphs, below we present a more traditional summary of our simulation results using a table. We stress that, in period 3, all the considered DDD estimators coincide as there is only one possible comparison group at that time period, the never-enabling units. This explains the results for $ATT(2,3)$ and $ATT(3,3)$ in Table \ref{app:mc_sim_attgt}. The $ATT(2,2)$ results are self-explanatory and highlight that (a) proceeding as if DDD is just the difference between two DiD procedures can lead to misleading conclusions, (b) using our proposed DDD estimators bypasses these limitations, and (c) the gains in precision of using our optimally weighted estimator can be large.

\begin{table}[h]
\centering
\caption{Monte Carlo results for Staggered DDD without covariates}
\label{app:mc_sim_attgt}
\begin{adjustbox}{width=\textwidth,height=\textheight,keepaspectratio}
\begin{threeparttable}
\begin{tabular}{@{}lcccccccccccc@{}}
\toprule
& \multicolumn{4}{c}{$ATT(2,2)$} 
& \multicolumn{4}{c}{$ATT(2,3)$} 
& \multicolumn{4}{c}{$ATT(3,3)$} \\
\cmidrule(lr){2-5}\cmidrule(lr){6-9}\cmidrule(lr){10-13}
Estimator 
& \small{Bias} & \small{RMSE} & \small{Cov.\ 95} & \small{ALCI}
& \small{Bias} & \small{RMSE} & \small{Cov.\ 95} & \small{ALCI}
& \small{Bias} & \small{RMSE} & \small{Cov.\ 95} & \small{ALCI} \\
\midrule

& \multicolumn{12}{c}{$n=1000$} \\ \cmidrule(lr){2-13}
$DDD_{gmm}$ & -0.003 & 0.192 & 0.956 & 0.748 & -0.004 & 0.298 & 0.947 & 1.140 & -0.007 & 0.286 & 0.941 & 1.083\\
$DDD_{nev}$ & -0.001 & 0.287 & 0.947 & 1.131 & -0.004 & 0.298 & 0.947 & 1.140 & -0.007 & 0.286 & 0.941 & 1.083\\
$DDD_{cs-nyt}$ & -15.815 & 16.034 & 0.000 & 10.833 & -0.004 & 0.298 & 0.947 & 1.140 & -0.007 & 0.286 & 0.941 & 1.083\\

\midrule
& \multicolumn{12}{c}{$n=5000$} \\ \cmidrule(lr){2-13}
$DDD_{gmm}$ & -0.009 & 0.086 & 0.950 & 0.335 & -0.007 & 0.134 & 0.938 & 0.511 & 0.001 & 0.127 & 0.935 & 0.487\\
$DDD_{nev}$ & -0.012 & 0.135 & 0.932 & 0.507 & -0.007 & 0.134 & 0.938 & 0.511 & 0.001 & 0.127 & 0.935 & 0.487\\
$DDD_{cs-nyt}$ & -15.875 & 15.919 & 0.000 & 4.845 & -0.007 & 0.134 & 0.938 & 0.511 & 0.001 & 0.127 & 0.935 & 0.487\\

\midrule
& \multicolumn{12}{c}{$n=10000$} \\ \cmidrule(lr){2-13}
$DDD_{gmm}$  & 0.001 & 0.062 & 0.941 & 0.237 & 0.001 & 0.097 & 0.940 & 0.361 & -0.001 & 0.090 & 0.941 & 0.344\\
$DDD_{nev}$  & 0.000 & 0.094 & 0.941 & 0.359 & 0.001 & 0.097 & 0.940 & 0.361 & -0.001 & 0.090 & 0.941 & 0.344\\
$DDD_{cs-nyt}$ & -15.897 & 15.918 & 0.000 & 3.428 & 0.001 & 0.097 & 0.940 & 0.361 & -0.001 & 0.090 & 0.941 & 0.344\\

\bottomrule
\end{tabular}
\begin{tablenotes}
\item \textit{Notes:} This table summarizes the Monte Carlo experiments for the DGP as discussed in Section \ref{sims_staggered}. $DDD_{\text{{nev}}}$ denotes our DDD estimator with $g_c = \infty$ from Equation \eqref{eqn:DRDDD_stagg}. $DDD_{\text{{gmm}}}$ is our proposed DDD estimator from Equation \eqref{eqn:DRDDD_optimal}. $DDD_{cs-nyt}$ is the estimator pooling all not-yet-treated units as defined in \eqref{eqn:att_gt_pooled}. Columns represent average bias, RMSE, coverage probability at 95\% (Cov. 95), and average confidence interval length (ALCI) for each estimator. The 95\% confidence intervals use point-wise asymptotic critical values. Results span sample sizes $n=\{1{,}000,5{,}000,10{,}000\}$ over 1,000 simulations, with the true $ATT(2,2)=10$,  $ATT(2,3)=20$, and  $ATT(3,3)=25$.
\end{tablenotes}
\end{threeparttable}
\end{adjustbox}
\end{table}

\subsection{Staggered DDD setups with covariates}
\label{appendix:sims_staggered_with_covariates}

In this section, we expand on the DGP mentioned in Section \ref{sims_staggered} by incorporating covariates into the analysis. Similar to before, there are three time periods $t = 1, 2, 3$, three enabling groups $S \in \{2, 3, \infty\}$, and two eligibility groups $Q \in \{0, 1\}$.

Without loss of generality, we define $f^{ps}_{S=g,Q=q}(W)= c_{q} W^{\top}\gamma_{g}, \forall(g,q) \in \mathcal{S}_{des-2} \times \{0,1\}$, where $c_{q}$ is a scalar controlling the overlap condition on the propensity scores, and $\gamma_{g}$ is a $4\times1$ vector of coefficients for each variable in $W$. For an arbitrary four-dimensional vector $O$, the conditional probability of each unit belonging to a subgroup $(g,q) \in \mathcal{S}_{des-2} \times \{0,1\}$ is

\begin{equation}
    \P[G = g, Q = q \mid O] = p^{S=g, Q=q}(O) = \dfrac{exp(f^{ps}_{S=g,Q=q}(O))}{\sum_{(g,q) \in \mathcal{S}_{des-2} \times \{0,1\}}exp(f^{ps}_{S=g,Q=q}(O))}
\end{equation}
Then, we set $c_{q} = 0.4$ if $Q = 0$, $c_{q} = -0.4$ if $Q = 1$ and,
\begin{eqnarray*}
\gamma_{2} = \begin{pmatrix}
-1 \\
0.5 \\
-0.25 \\
-0.1
\end{pmatrix}; \gamma_{3} = \begin{pmatrix}
-0.5 \\
1 \\
-0.1 \\
-0.25
\end{pmatrix}; \gamma_{\infty} = \begin{pmatrix}
-0.25 \\
0.1 \\
-1 \\
-0.1
\end{pmatrix}
\end{eqnarray*}
In this design, the probabilities for each subgroup are denoted as $\big\{p^{S=2,Q=0}(O)$, $p^{S=2,Q=1}(O)$, $p^{S=3,Q=0}(V)$, $p^{S=3,Q=1}(O)$, $p^{S=\infty,Q=0}(O)$, $p^{S=\infty,Q=1}(O)\big\}$, which sum to one. Define the random variable \( U \sim \text{Uniform}[0,1] \). The assignment process to each group is determined as follows:
\begin{align}
  (S,Q) := \begin{cases}
(\infty,0), & \text { if } U \leqslant p^{S=\infty, Q=0}(O), \\ 
(\infty,1), & \text { if } U \leqslant \sum_{j=0}^{1} p^{S=\infty, Q=j}(O), \\ 
(2,0), & \text { if }  U \leqslant \sum_{j=0}^{1} p^{S=\infty, Q=j}(O) + p^{S=2,Q=0}(V), \\
(2,1), & \text { if } U \leqslant \sum_{j=0}^{1} p^{S=\infty, Q=j}(O) + \sum_{j=0}^{1} p^{S=2, Q=j}(O), \\
(3,0), & \text { if } U \leqslant \sum_{j=0}^{1} p^{S=\infty, Q=j}(O) + \sum_{j=0}^{1} p^{S=2, Q=j}(O) + p^{S=3, Q=0}(O), \\
(3,1), & \text { if }  \sum_{j=0}^{1} p^{S=\infty, Q=j}(O) + \sum_{j=0}^{1} p^{S=2, Q=j}(O) + p^{S=3, Q=0}(O) < U .\end{cases}  
\end{align}

Next, we define the outcome regression component of our DGP. In this model, covariates $O$ enter the working model linearly. Let $f^{reg}(O_i) = \alpha + O'\beta$, where $\alpha$ is a scalar and $\beta$ is a $4\times 1$ vector of coefficients for each variable in $O$. We set $\alpha = 210$ and $\beta = (27.4, 13.7, 13.7, 13.7)^{\prime}$. The untreated potential outcomes (which are observed for all units) at period $t=1$ are defined as follows:
\begin{equation*}
    Y_{i,1}(\infty) = (1+Q_i) \cdot f^{reg}(O_i) + \nu_{i}(O_i, S_i, Q_i) + \varepsilon_{i,1}(\infty)
\end{equation*}
where $\nu_{i}(O_i, S_i, Q_i)$ denotes a time-invariant unobserved heterogeneity with $\nu_{i}(O_i, S_i, Q_i) \sim N(M_i \cdot f^{reg}(O_i) +  Q_i \cdot f^{reg}(O_i), 1)$ where $M_i= S_i$ if $S_i \in \mathcal{G}_{trt}$, zero otherwise, and $\varepsilon_{i,1}(\infty) \sim N(0,1)$. Subsequently, we generate the potential outcomes at period $t=2$ for each $g \in \mathcal{G}$:
\begin{align*}
    Y_{i,2}(\infty) &= (2 +Q_i) f^{reg}(O_i) + 2 \nu_{i}(O_i, S_i, Q_i) + \varepsilon_{i,2}(\infty)\\
    Y_{i,2}(2) &= (2 +Q_i) f^{reg}(O_i) + 2 \nu_{i}(O_i, S_i, Q_i) + Q_i \cdot ATT(2,2)  + \varepsilon_{i,2}(2)\\
    Y_{i,2}(3) &= (2 +Q_i) f^{reg}(O_i) + 2 \nu_{i}(O_i, S_i, Q_i) + \varepsilon_{i,2}(3)
\end{align*}
where $\varepsilon_{i,2}(\cdot) \sim N(0,1)$. The realized outcomes at $t=2$ are given by
\begin{equation*}
    Y_{i,2} = \sum_{g\in \mathcal{G}} 1\{G_i=g\}  Y_{i,2}(g)
\end{equation*}
Then, we generate the potential outcomes in the period $t=3$ for each $g \in \mathcal{G}$:
\begin{align*}
    Y_{i,3}(\infty) &= (3 +Q_i)  f^{reg}(O_i) + 3  \nu_{i}(O_i, S_i, Q_i) + \varepsilon_{i,3}(\infty)\\
    Y_{i,3}(2) &= (3 +Q_i)  f^{reg}(O_i) + 3  \nu_{i}(O_i, S_i, Q_i) + Q_i \cdot ATT(2,3)  + \varepsilon_{i,3}(2)\\
    Y_{i,3}(3) &= (3 +Q_i)  f^{reg}(O_i) + 3  \nu_{i}(O_i, S_i, Q_i) + Q_i \cdot ATT(3,3)  + \varepsilon_{i,3}(3)
\end{align*}
where $\varepsilon_{i,3}(\cdot)\sim N(0,1)$. The realized outcomes at $t=3$ are given by
\begin{equation*}
    Y_{i,3} = \sum_{g\in \mathcal{G}} 1\{G_i=g\}  Y_{i,3}(g)
\end{equation*}

Finally, we set $ATT(2,2) = 10$, $ATT(2,3) = 20$ and $ATT(3,3) = 25$. Given that we account for potential misspecification of nuisance functions, we can proceed as outlined in Section \ref{appendix:sims_2_periods}. The observed data is $W_i = \{Y_{i,1}, Y_{i,2}, Y_{i,3}, S_i, Q_i, X_i\}_{i=1}^{n}$, where covariates $X_i$ and $Z_i$ are generated in the same manner as described in Section \ref{appendix:sims_2_periods}. Depending on whether the propensity score and/or outcome regression are misspecified, these specifications will result in four distinct DGPs. Table \ref{mc_sim_staggered_covariates} presents summary statistics from 1,000 Monte Carlo experiments, which include average bias, root mean squared error (RMSE), coverage probability at 95\%, and average confidence interval length. We provide these statistics for the $ATT(2,2)$ under the design mentioned in the current section with sample sizes $n=1,000$, $n=5,000$, and $n=10,000$. Results for $ATT(2,3)$ and $ATT(3,3)$ are omitted to save space, as they produce similar patterns. These results clearly illustrate the effectiveness of our proposed estimator and align with findings from the previous sections.

\begin{sidewaystable}[p]
\centering
\caption{Monte Carlo results for $ATT(2,2)$ in DGP 1 – DGP 4, with multiple periods and covariates}
\label{mc_sim_staggered_covariates}
\begin{adjustbox}{width=\textwidth,height=\textheight,keepaspectratio}
\begin{threeparttable}
\begin{tabular}{@{}lcccccccccccccccc@{}}
\toprule
& \multicolumn{4}{c}{DGP 1} 
& \multicolumn{4}{c}{DGP 2} 
& \multicolumn{4}{c}{DGP 3} 
& \multicolumn{4}{c}{DGP 4} \\
\cmidrule(lr){2-5}\cmidrule(lr){6-9}\cmidrule(lr){10-13}\cmidrule(lr){14-17}
Estimator 
& \small{Bias} & \small{RMSE} & \small{Cov.\ 95} & \small{ALCI}
& \small{Bias} & \small{RMSE} & \small{Cov.\ 95} & \small{ALCI}
& \small{Bias} & \small{RMSE} & \small{Cov.\ 95} & \small{ALCI}
& \small{Bias} & \small{RMSE} & \small{Cov.\ 95} & \small{ALCI} \\
\midrule

& \multicolumn{16}{c}{$n=1000$} \\ \cmidrule(lr){2-17}
$DRDDD_{gmm}$    &  0.009 & 0.236 & 0.943 & 0.901 & -0.002 & 0.230 & 0.945 & 0.893 & -0.175 & 3.300 & 0.943 & 12.396 & -1.499 & 3.765 & 0.917 & 12.722\\
$DRDDD_{nev}$     &  0.012 & 0.298 & 0.939 & 1.137 & -0.002 & 0.288 & 0.937 & 1.121 & -0.367 & 3.984 & 0.937 & 14.726 & 0.947 & 3.956 & 0.937 & 14.403\\
$DRDDD_{cs-nyt}$ &  -8.877 & 9.294 & 0.079 & 10.298 & -8.138 & 8.546 & 0.123 & 10.280 & -8.962 & 9.891 & 0.404 & 16.173 & -11.244 & 12.075 & 0.232 & 16.642\\
$DRCS$-$DIF$  &   -3.765 & 4.668 & 0.704 & 10.348 & -4.377 & 5.109 & 0.608 & 10.116 & -3.919 & 5.355 & 0.789 & 14.263 & -6.866 & 7.767 & 0.520 & 14.161\\

\midrule
& \multicolumn{16}{c}{$n=5000$} \\ \cmidrule(lr){2-17}
$DRDDD_{gmm}$    &  0.006 & 0.106 & 0.947 & 0.407 & -0.002 & 0.105 & 0.935 & 0.399 & 0.083 & 1.450 & 0.953 & 5.629 & -1.170 & 1.956 & 0.856 & 5.954\\
$DRDDD_{nev}$     &  0.005 & 0.131 & 0.949 & 0.507 & -0.005 & 0.131 & 0.944 & 0.497 & 0.012 & 1.746 & 0.948 & 6.639 & 1.128 & 2.032 & 0.900 & 6.568\\
$DRDDD_{cs-nyt}$&  -8.949 & 9.028 & 0.000 & 4.557 & -8.063 & 8.140 & 0.000 & 4.547 & -8.636 & 8.821 & 0.005 & 7.192 & -11.009 & 11.173 & 0.002 & 7.576\\
$DRCS$-$DIF$    &   -3.793 & 3.983 & 0.108 & 4.566 & -4.339 & 4.483 & 0.024 & 4.479 & -3.844 & 4.165 & 0.333 & 6.356 & -6.720 & 6.924 & 0.013 & 6.280\\

\midrule
& \multicolumn{16}{c}{$n=10000$} \\ \cmidrule(lr){2-17}
$DRDDD_{gmm}$    &   -0.001 & 0.073 & 0.955 & 0.289 & -0.004 & 0.073 & 0.948 & 0.283 & 0.047 & 1.021 & 0.951 & 4.004 & -1.088 & 1.544 & 0.820 & 4.249\\
$DRDDD_{nev}$     &  -0.004 & 0.091 & 0.956 & 0.359 & -0.004 & 0.092 & 0.937 & 0.351 & 0.036 & 1.201 & 0.950 & 4.711 & 1.159 & 1.658 & 0.846 & 4.666\\
$DRDDD_{cs-nyt}$ &  -8.928 & 8.965 & 0.000 & 3.226 & -8.111 & 8.151 & 0.000 & 3.211 & -8.746 & 8.842 & 0.000 & 5.090 & -11.108 & 11.188 & 0.000 & 5.373\\
$DRCS$-$DIF$   &   -3.770 & 3.861 & 0.004 & 3.221 & -4.347 & 4.415 & 0.000 & 3.170 & -3.896 & 4.051 & 0.063 & 4.511 & -6.852 & 6.941 & 0.000 & 4.419\\

\bottomrule
\end{tabular}
\begin{tablenotes}
\item \textit{Notes:} This table presents the Monte Carlo experiments for the DGPs as detailed in the text. $DDD_{\text{{nev}}}$ denotes our DDD estimator with $g_c = \infty$ from Equation \eqref{eqn:DRDDD_stagg}. $DDD_{\text{{gmm}}}$ is our proposed DDD estimator from Equation \eqref{eqn:DRDDD_optimal}. $DDD_{cs-nyt}$ is the estimator pooling all not-yet-treated units as defined in \eqref{eqn:att_gt_pooled}. $DRCS$-$DIF$ uses the differences between two DiDs following \cite{Callaway2021} as described in Remark \ref{rem:no_difference_of_2_DiDs}. Each column shows average bias, RMSE, coverage probability at 95\% (Cov. 95), and average confidence interval length (ALCI) for each estimator, using point-wise asymptotic critical values for the confidence intervals. The results cover sample sizes $n=\{1{,}000,5{,}000,10{,}000\}$ over 1,000 simulations, with the true $ATT(2,2)=10$.
\end{tablenotes}
\end{threeparttable}
\end{adjustbox}
\end{sidewaystable}

\newpage

\small{
\setlength{\bibsep}{1pt plus 0.3ex}
\putbib
}
\end{bibunit}
\end{document}